%
\documentclass[runningheads]{llncs}
\usepackage{changes}
\usepackage{graphicx}
\usepackage{multirow}
\usepackage{amsmath,amsfonts,amssymb}
\usepackage{float}
\usepackage{array}
\usepackage{booktabs}
\usepackage{wrapfig}
\usepackage[misc]{ifsym}
\usepackage[ruled,vlined,linesnumbered]{algorithm2e}
\usepackage[colorlinks,
            linkcolor=blue,
            anchorcolor=blue, 
            citecolor=blue, 
            ]{hyperref}
            
\newcolumntype{P}[1]{>{\raggedleft\arraybackslash}p{#1}}

\usepackage{stmaryrd}
\usepackage{enumerate}
\usepackage{mathtools}
\usepackage{ulem}
\usepackage{orcidlink}
\usepackage{subcaption}
\usepackage{bm}
\usepackage{enumitem} 

\captionsetup[table]{labelfont=bf, labelsep=period}
\captionsetup[figure]{labelfont=bf, labelsep=period}

\setlength{\textfloatsep}{\baselineskip}

\allowdisplaybreaks

\newcommand{\inferenceOne}[3]{
    \dfrac{#3}{#2}
    \hspace{0.8em} \textbf{#1}
}

\newcommand{\myHspace}{
    \hspace{1.2em}
}

\begin{document}

\title{Compositional Abstraction for Timed Systems with Broadcast Synchronization\thanks{This is the extended version of our paper accepted at CAV 2025}}

\titlerunning{Compositional Abstraction for Timed Systems with Broadcast Sync.}

\author{
    Hanyue Chen\inst{1}\,\orcidlink{0000-0002-9434-1695} \and
    Miaomiao Zhang\inst{1}\textsuperscript{(\Letter)} \and
    Frits Vaandrager\inst{2}
} 
\authorrunning{H. Chen et al.}
%
\institute{Tongji University, Shanghai, China\\
\email{\{2111285,miaomiao\}@tongji.edu.com}\\ \and
Radboud University, Nijmegen, The Netherlands  \\
\email {F.Vaandrager@cs.ru.nl}
}
\maketitle              
\begin{abstract}
    Simulation-based compositional abstraction effectively mitigates state space explosion in model checking, particularly for timed systems. However, existing approaches do not support broadcast synchronization, an important mechanism for modeling non-blocking one-to-many communication in multi-component systems. Consequently, they also lack a parallel composition operator that simultaneously supports broadcast synchronization, binary synchronization, shared variables, and committed locations. To address this, we propose a simulation-based compositional abstraction framework for timed systems, which supports these modeling concepts and is compatible with the popular U{\scriptsize PPAAL} model checker. Our framework is general, with the only additional restriction being that the timed automata are prohibited from updating shared variables when receiving broadcast signals. Through two case studies, our framework demonstrates superior verification efficiency compared to traditional monolithic methods.
\end{abstract}

\section{Introduction}\label{sec: introduction}
\emph{Model checking}~\cite{baier2008principles,1997Model,merz2000model,queille1982specification} is a widely used technique for automatically verifying whether a system meets specified properties by exploring its state space. However, in real-world systems, especially timed systems, the state space size grows exponentially with the number of components~\cite{clarke1981design}. This leads to the state space explosion problem, making exploring and storing the states harder for verification. \emph{Simulation-based compositional abstraction}~\cite{MilnerSimulation} is a recognized method to address this issue~\cite{Compositional_Abstraction}, which simplifies systems by replacing complex components with abstractions that preserve essential behaviors, reducing the state space and improving verification efficiency. 

Models of the systems have different communication mechanisms, such as reading and writing \emph{shared variables} in TLA and TLA+~\cite{TLA_Lamport1994}, \emph{binary synchronization} through paired input/output actions in CCS~\cite{CCS} and Pi-Calculus~\cite{Pi_calculate}, etc. Among them, \emph{broadcast synchronization} is also important, where a sender emits a synchronization signal in a non-blocking manner to multiple receivers, each deciding independently whether to accept the signal. The non-blocking nature allows for unified and concise modeling of synchronization among multiple components, as it imposes no limit on the number of actual receivers.
For the widely used model-checking tool U{\scriptsize PPAAL}~\cite{uppaaltutorial,Larsen2022}, it supports broadcast synchronization and has been applied successfully in various industrial cases~\cite{zeroconf,Upp_broadcast2,Upp_broadcast3,Upp_broadcast1,Using_Uppaal_1,new_case_try,Upp_broadcast5}. 

Several simulation-based compositional abstraction frameworks address synchronization involving multiple participants. For instance, the framework in~\cite{Sociable_Interfaces} prohibits receivers from updating shared variables to support one-to-one, one-to-many, and many-to-many synchronization. However, this framework is designed for untimed systems, and its compositional rules are not associative~\cite{Berendsen2007_sep}. The frameworks in~\cite{Compositional_Interchange,SZPAK2020229} integrate shared variables with \emph{multi-cast synchronization} in timed systems. They achieve associative compositional rules by requiring synchronized transitions to update shared variables simultaneously, ensuring consistent valuations before and after synchronization. Nevertheless, this kind of synchronization needs all components with synchronized actions to participate, which conflicts with the non-blocking nature of broadcast synchronization. 
A specification framework for real-time systems based on timed input/output automata (TIOAs) is developed in~\cite{TIOA_Kim}, but the synchronization between TIOAs is also a kind of multicast synchronization.
To our knowledge, no simulation-based compositional abstraction framework currently exists to handle timed models with broadcast synchronization.
Although it is possible to emulate broadcast synchronization in terms of other communication mechanisms, e.g., binary synchronization, this often introduces additional states, reducing the naturalness and readability of the models~\cite{Broadcasting_Embedded_Systems}, and bringing difficulty to verification.

Furthermore, designing a simulation-based compositional abstraction framework that supports multiple communication mechanisms for timed systems is necessary. Several efforts have been made on compositional verification for the composed models with shared variables and binary synchronization. For example, the framework in~\cite{Hybrid_I/O_automata} restricts that a shared variable can only be updated in the same automaton. The work in~\cite{Fault-Tolerant_Systems} relaxes the restriction by allowing the update of shared variables in multiple automata through internal transitions, whereas it still does not support the update during synchronization. 
The framework proposed by Berendsen and Vaandrager removes this restriction, supporting binary synchronization, shared variables, and \emph{committed locations}~\cite{Compositional_Abstraction}, which is one of the key features of U{\scriptsize PPAAL}, ensuring atomic transitions and significantly reducing the state space by excluding irrelevant behaviors~\cite{committedness}. As known, if a system component is in a committed location, time cannot progress, and the next system transition must start from that location. This framework has been successfully used to verify the Zeroconf protocol for any number of hosts~\cite{zeroconf} but lacks support for compositional abstraction with broadcast synchronization.


Hence, we aim to develop a simulation-based compositional abstraction framework for timed systems with broadcast synchronization. Given that U{\scriptsize PPAAL} offers a rich syntax for modeling complex systems as \emph{networks of timed automata} (NTAs)~\cite{alur1999timed,alur1994theory}, this framework is also designed to support binary synchronization, shared variables, and committed locations.



To achieve this, first inspired by the definition of \emph{timed transition systems} (TTSs) in~\cite{Compositional_Abstraction}, we introduce the \emph{timed transition systems with broadcast actions} (TTSBs), which extend \emph{labeled transition systems} (LTSs) with state variables, transition commitments, and time-related behaviors. When combining broadcast synchronization with shared variables, the order in which these variables are updated by the transitions involved in the synchronization can lead to different system states. Therefore, we prohibit TTSBs from updating shared variables when receiving broadcast signals, which is crucial for proving the theorem that the parallel composition we designed for TTSBs is both commutative and associative.
Next, we introduce a CCS-style restriction operator to internalize a set of synchronized actions and shared variables so that no further TTSBs may communicate via them. This restriction is useful, as multiple component models might be abstracted into a single model, and these actions and variables should be considered as internal in the abstraction.


Secondly, considering that timed systems are often modeled as NTAs, we give two kinds of semantics of NTAs for subsequent compositional abstraction. The first one strictly follows the U{\scriptsize PPAAL} semantics, which directly transforms an entire NTA of a timed system into an LTS. 
However, this semantics lacks compositionality, which makes it impossible to abstract parts of the system model, that is, to replace one or more components with simpler ones. We refer to it as non-compositional semantics.
So we define the second one, compositional semantics for NTAs, achieved by converting each \emph{timed automaton} (TA) into its corresponding TTSB, composing them in parallel, applying restriction operations, and extracting the underlying LTS. 
We further prove the theorem that these two semantics are equivalent, laying the foundation for subsequent compositional abstraction.


Thirdly, since the compositional semantics of an NTA are derived based on a TTSB, the abstraction relations between NTAs can be defined in terms of the relation between TTSBs. We describe a timed step simulation relation of TTSBs and prove the theorem that this relation is a precongruence for parallel composition. 
This allows the system to be abstracted by replacing one or more components with simpler models that preserve essential behaviors. For example, abstracting multiple consumers in a producer-consumer system into a single simplified model.


Finally, based on the previous theorems 
we prove that if the abstraction of an NTA with broadcast synchronization satisfies a safety property, then the original NTA also satisfies it. 
We apply our compositional abstraction framework to the case studies of a producer-consumer system and the clock synchronization protocol in~\cite{new_case_try}, improving verification efficiency compared to the traditional monolithic method.


The rest of the paper is organized as follows. 
Section~\ref{sec: Preliminaries} introduces necessary background knowledge. 
Section~\ref{sec: Timed Transition Systems with Broadcast Actions} introduces the TTSB and corresponding operations of parallel composition and restriction. 
Section~\ref{sec: Two definitions of NTA semantics} introduces the non-compositional and compositional semantics of NTA with broadcast synchronization and proves their equivalence.
Section~\ref{sec: Compositional Abstraction} proposes the timed step simulation for TTSBs and demonstrates the compositionality of the resulting preorder. 
In Section~\ref{sec: Compositional Verification}, we summarize the correctness of our framework and conduct the case studies.
Finally, we discuss the conclusions in Section~\ref{sec: conclusion}. 

\section{Preliminaries} \label{sec: Preliminaries}
We use  $\mathbb{N}$ to denote the set of natural numbers, $\mathbb{R}_{\geq 0}$ the set of non-negative reals, and let  $\mathbb{B}=\{1,0\}$, where $1$ stands for true and $0$ stands for false.

\subsection{Notations for Functions}

The domain of function $f$ is represented as $dom(f)$. If $X$ is a set, then $f\lceil X$ denotes the restriction of $f$ to $X$, forming function $g$ with $dom(g) = dom(f) \cap X$ and $g(z) = f(z)$ for each $z \in dom(g)$. 
The \emph{override} operators~\cite{OverRide} on functions are $\rhd$ and $\lhd$. 
For arbitrary functions $f$ and $g$, $f\rhd g$ denotes the function with $dom(f\rhd g) = dom(f) \cup dom(g)$ such that for all $z \in dom(f \rhd g)$,
\begin{equation*}
    (f\rhd g)(z) \triangleq
    \begin{cases}
        f(z) & \text{if}\ z\in dom(f) \\
	  g(z) & \text{if}\ z\in dom(g)-dom(f)
	\end{cases}
\end{equation*}
and $f\lhd g \triangleq g\rhd f$. Functions $f$ and $g$ are compatible, denoted as $f\heartsuit g$, if $f(z)=g(z)$ for all $z \in dom(f)\cap dom(g)$. For compatible functions $f$ and $g$, their merge is $f\|g \triangleq f \rhd g$. Clearly, $\|$ and $\heartsuit$ are commutative and associative. When we use $f\|g$, it is implicit that $f \heartsuit g$. The notation $f[g]$ represents the \emph{update} of function $f$ according to $g$, defined as $f[g]\triangleq (f\lhd g)\lceil dom(f)$. 

Below are some fundamental properties of functions necessary for the subsequent content of this paper. 

\begin{lemma}\label{lemma: expressions}
For any functions $f$, $g$, and $h$, and set $X$ the following formulas always hold:
    \begin{equation}
        f \heartsuit g[f]
    \end{equation}
    \begin{equation}
        f \heartsuit g \wedge (f\|g) \heartsuit h \Leftrightarrow f \heartsuit g \wedge f \heartsuit h \wedge g \heartsuit h
    \end{equation}
    \begin{equation}
        f\rhd g = f\| g[f]
    \end{equation}
    \begin{equation}
        f[g][h] = f[h\rhd g]
    \end{equation}
    \begin{equation}
        (f\rhd g)[h] = f[h]\rhd g[h]
    \end{equation}
    \begin{equation}
        f\heartsuit g \Rightarrow f\lceil X \heartsuit g
    \end{equation}
    \begin{equation}
        f\heartsuit g, f\heartsuit h \Rightarrow f \heartsuit (g\rhd h)
    \end{equation}
\end{lemma}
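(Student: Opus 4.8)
The plan is to prove each of the seven identities in Lemma~\ref{lemma: expressions} separately, in the order listed, since later identities (4)--(7) can lean on the basic definitions and, where convenient, on the earlier ones. The common technique throughout is pointwise reasoning: for an equality of functions $p = q$ I will first argue that $dom(p) = dom(q)$ by unfolding the definitions of $\lceil$, $\rhd$, $\lhd$, $\|$, and $[\cdot]$ (all of which manipulate domains in a transparent way), and then fix an arbitrary $z$ in the common domain and perform a case split according to which of the relevant sets ($dom(f)$, $dom(g)$, $dom(h)$, $X$) contains $z$. For the two non-equational claims, (1) $f \heartsuit g[f]$ and (2) the biconditional about compatibility, I will instead reason directly about the defining condition of $\heartsuit$, namely agreement on the intersection of domains.

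Concretely: for (1), note $dom(g[f]) = dom(g)$ and for $z \in dom(f) \cap dom(g)$ we have $g[f](z) = (g \lhd f)(z) = f(z)$ since $z \in dom(f)$; hence $f$ and $g[f]$ agree there. For (2), the forward direction uses that $dom(f\|g) = dom(f)\cup dom(g) \supseteq dom(g)$, so $(f\|g)\heartsuit h$ immediately gives $f \heartsuit h$ and $g \heartsuit h$ (restricting the agreement to the relevant sub-intersections), while $f\heartsuit g$ is a hypothesis; the reverse direction takes $z \in dom(f\|g)\cap dom(h)$, splits on whether $z \in dom(f)$ or $z\in dom(g)\setminus dom(f)$, and in each case uses the matching pairwise compatibility plus the definition of $\|$ as $\rhd$. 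For (3), $f\rhd g = f\|g[f]$: both sides have domain $dom(f)\cup dom(g)$ (using $dom(g[f])=dom(g)$ and the merge being well-defined by (1)); for $z\in dom(f)$ both sides give $f(z)$, and for $z \in dom(g)\setminus dom(f)$ the left gives $g(z)$ and the right gives $g[f](z) = g(z)$ since $z\notin dom(f)$. Identities (4) $f[g][h] = f[h\rhd g]$ and (5) $(f\rhd g)[h] = f[h]\rhd g[h]$ are pure domain-and-pointwise bookkeeping: both sides of (4) have domain $dom(f)$, both sides of (5) have domain $dom(f)\cup dom(g)$; after expanding $f[g] = (f\lhd g)\lceil dom(f)$ one chases the value of $z$ through the nested overrides, splitting on membership in $dom(h)$, then $dom(g)$, then $dom(f)$. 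Finally (6) $f\heartsuit g \Rightarrow f\lceil X \heartsuit g$ is immediate since $dom(f\lceil X)\cap dom(g) \subseteq dom(f)\cap dom(g)$ and $f\lceil X$ agrees with $f$ there; and (7) $f\heartsuit g, f\heartsuit h \Rightarrow f\heartsuit(g\rhd h)$ follows by taking $z\in dom(f)\cap(dom(g)\cup dom(h))$, splitting on whether $z\in dom(g)$ or $z\in dom(h)\setminus dom(g)$, and applying the respective hypothesis together with $(g\rhd h)(z)=g(z)$ or $h(z)$.

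I do not expect a genuine mathematical obstacle here — every statement is a routine consequence of the definitions. The only real risk is bookkeeping error: the override operators $\rhd$ and $\lhd$ are mirror images of each other ($f\lhd g = g\rhd f$), and the update operator $f[g]$ is a composite $(f\lhd g)\lceil dom(f)$, so it is easy to apply the wrong priority when resolving a point that lies in several domains at once. To keep this under control I will, for each identity, state the domain equality first and then tabulate the value of the left- and right-hand sides on each case of the domain partition, so that the agreement can be checked line by line. The most error-prone of the lot is (4), because it nests two updates and the claim reorders the overrides ($h\rhd g$ rather than $g\rhd h$); I will treat that one most carefully, checking the three cases $z\in dom(h)$, $z\in dom(g)\setminus dom(h)$, and $z\in dom(f)\setminus(dom(g)\cup dom(h))$ explicitly (noting that outside $dom(f)$ both sides are undefined).
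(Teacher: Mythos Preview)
Your proposal is correct and follows essentially the same pointwise, case-splitting approach that the paper uses; the paper in fact defers (1)--(5) to a reference and only spells out (6) and (7) with exactly the kind of domain-intersection arguments you describe. Your treatment is simply more complete in that it actually supplies the (straightforward) details for (1)--(5) rather than citing them.
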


\begin{proof}
    Among the formulas above, (1)$\sim$(5) are proved in \cite{Compositional_Abstraction} and the proofs are straightforward from the definitions. The formulas (6) and (7) are newly introduced. Their proof is provided below.
    \begin{enumerate}[label=(\arabic*), start=6]
        \item 
        Since $dom(f\lceil X)\subseteq dom(f)$ and $f\heartsuit g$, for any $z \in dom(f\lceil X) \cap dom(g)$, $(f\lceil X) (z)=g(z)$, that is, $f\lceil(X) \heartsuit g$.

        \item 
        For any $z \in dom(f)\cap dom(g)$, since $f\heartsuit g$, $f(z) = g(z)$. For any $z\in dom(f)\cap (dom(h) - dom(g))$, since $f\heartsuit h$, $f(z) = h(z)$. Hence, we have $f \heartsuit (g\rhd h)$.
        $\hfill\square$
    \end{enumerate}
\end{proof}



\subsection{Labeled Transition Systems} 
We consider two types of \emph{channels}, i.e., \emph{broadcast channels} and \emph{binary channels}. The former allows non-blocking one-to-many synchronization while the latter is used for \emph{binary synchronization} where one side sends, and the other receives. We use $\Delta$ and $\mathcal{C}$ to represent their respective sets. 
The set of \emph{broadcast actions} is $\mathcal{E}_\Delta\triangleq\{\delta!,\delta?\mid \delta\in \Delta\}$ and the set of \emph{binary actions} is  $\mathcal{E}_{\mathcal{C}}\triangleq\{c!,c?\mid c\in \mathcal{C}\}$. The action marked with $!$ or $?$ is called \emph{output action} or \emph{input action}, respectively.
We assume that there is a special \emph{internal action} represented as $\tau$ and \emph{time-passage actions} represented as non-negative real numbers in $\mathbb{R}_{\geq 0}$. We consider \emph{labeled transition systems} associated with the action set $Act \triangleq \mathcal{E}_{\Delta}\cup\mathcal{E}_{\mathcal{C}}\cup\{\tau\}\cup\mathbb{R}_{\geq 0}$. 

\begin{definition}[LTS] \label{def: LTS}
    {\rm{A labeled transition system (LTS)}} is a tuple 
    \begin{equation}
        \mathcal{L}=\langle S,s^0,Act,\rightarrow \rangle,
        \nonumber
    \end{equation}
    where $S$ is a set of states, $s^0\in S$ is the initial state, $Act$ is the action set, and $\rightarrow \:\subseteq S \times Act \times S$ is the transition relation. We use $r,s,t,\ldots$ to range over $S$, and write $s\xrightarrow{a}t$ if $(s,a,t)\in\:\rightarrow$. Here, $s$ is the transition source, and $t$ is the target. An $a$-transition is enabled in $s$, denoted as $s \xrightarrow{a}$, if a state $t$ exists such that $s \xrightarrow{a} t$. A state $s$ is reachable iff there exists a sequence of states $s_1,\ldots, s_n$ where $s_1 = s^0$, $s_n = s$ and for all $i<n$ there exists an action $a$ such that $s_i\xrightarrow{a}s_{i+1}$.
\end{definition}

\subsection{Networks of Timed Automata}
Let $\mathcal{V}$ be a universal set of typed \emph{variables}, with a subset $\mathcal{X}\subseteq\mathcal{V}$ of \emph{clocks} having domain $\mathbb{R}_{\geq 0}$. A \emph{valuation} for a set $V\subseteq \mathcal{V}$ is a function that maps each variable in $V$ to an element in its domain. We write $\{y_i\mapsto z_i,\dots,y_n\mapsto z_n\}$ for the valuation that assigns value $z_i$ to variable $y_i$, for $i=1,...,n$. We use $V\!al(V)$ to denote the valuations set for $V$. For valuation $v\in V\!al(V)$ and time-passage action $d \in \mathbb{R}_{\geq 0}$, $v\oplus d$ is the valuation for $V$ that increases the clocks by $d$ and leaves the other variables unchanged, that is, for all $y\in V$,

\begin{equation}
    (v\oplus d)(y) \triangleq
    \begin{cases}
        v(y)+d & \text{if}\ \  y\in\mathcal{X} \\
	  v(y) & \text{otherwise}
	\end{cases}
    \nonumber
\end{equation}

A \emph{property} $P$ over $V$ is a subset of $V\!al(V)$. Given $W \supseteq V$ and $v\in V\!al(W)$, we say that $P$ holds in $v$, denoted as $v \models P$, if $v\lceil V \in P$. A property $P$ over $V$ is \emph{left-closed} w.r.t for all $v\in V\!al(V)$ and $d\in \mathbb{R}_{\geq 0}$, $v\oplus d\models P \Rightarrow v \models P$ holds. A property $P$ over $V$ is said \emph{not depend on} a set of variables $W \subseteq V$ if for every $v \in V\!al(V)$ and $u \in V\!al(W)$, $v\models P$ holds iff $v[u]\models P$ holds. 

A network of timed automata is a finite set of timed automata \emph{compatible} with each other and communicating through broadcast and binary channels and shared \emph{external variables}. The state variables of a TA are divided into external and \emph{internal variables}. 
Internal variables are private to the TA and cannot be accessed by others. In contrast, external variables are shared among multiple TAs and can be read and updated by them, enabling communication and coordination. In U{\scriptsize PPAAL}, external variables are defined in global declarations, while internal variables are declared locally within a template.

\begin{definition}[TA]
    A {\rm{timed automaton}} is a tuple $\mathcal{A} = \langle L,K,l^0,E,H,v^0,$ $I,\rightarrow,\rightarrow^u\rangle$, where $L$ represents the set of locations, $K \subseteq L$ denotes the set of committed locations, $l^0 \in L$ is the initial location, $E$ and $H$ are disjoint sets of external and internal variables, respectively. $V = E \cup H$, $v^0 \in \text{Val}(V)$ signifies the initial valuation, and $I: L \rightarrow 2^{\text{Val}(V)}$ assigns a left-closed invariant property to each location, ensuring that $v^0 \models I(l^0)$, 
    \begin{equation*}
        \rightarrow\: \subseteq L\times 2^{V\!al(V)} \times \mathcal{E}_{\Delta}\cup\mathcal{E}_{\mathcal{C}}\cup\{\tau\} \times ({V\!al(V)}\rightarrow {V\!al(V)}) \times L
    \end{equation*}
    is the set of transitions, and $\rightarrow^u\subseteq\rightarrow$ is the set of urgent transitions. We write $l\xrightarrow{g,a,\rho}l'$ if $(l,g,a,\rho,l')\in\rightarrow$, where $l$ and $l'$ are the source and the target, $a$ is the action, $g$ is the guard, and $\rho$ is the update function. A guard must be a conjunction of simple conditions on clocks, differences between clocks, and boolean expressions that do not involve clocks.
\end{definition}

Recall that a property $P$ is left-closed if, for all $v \in \mathit{Val}(V)$ and $d \in \mathbb{R}_{\geq 0}$, the implication ${v \oplus d} \models P \Rightarrow v \models P$ holds. This means that lower bounds on clocks, such as $x \geq 5$ for $x \in \mathcal{X}$, are disallowed in location invariants, as required by U{\scriptsize PPAAL}.
Our restrictions on transition guards are consistent with those of U{\scriptsize PPAAL}. For example, guards such as $x - y < 5 \wedge x > 3$ or $n \neq 1$ are allowed, while expressions like $x > 8 \vee x \leq 1$ or $x = n$ are not permitted, where $x, y \in V \cap \mathcal{X}$ and $n \in V - \mathcal{X}$.
Notably, compared to the TA considered in \cite{Compositional_Abstraction}, the TA considered here additionally includes broadcast actions in $\mathcal{E}_{\Delta}$. Throughout this paper, we employ indices to denote individual system components when dealing with multiple indexed systems. For instance, $H_i$ represents the internal variable set of TA $\mathcal{A}_i$. 

\begin{definition}[NTA]
    Two timed automata $\mathcal{A}_1$ and $\mathcal{A}_2$ are compatible if $H_1\cap V_2 = H_2\cap V_1 =\emptyset$ and $v_1^0\heartsuit v_2^0$. A {\rm{network of timed automata (NTA)}} consists of a finite 
    sequence $\mathcal{N}=\langle\mathcal{A}_1,\dots,\mathcal{A}_n\rangle$ of pairwise compatible timed automata.
\end{definition}

\section{Timed Transition Systems with Broadcast Actions} \label{sec: Timed Transition Systems with Broadcast Actions}

The timed transition systems considered in~\cite{Compositional_Abstraction} support shared variables, binary actions, and committed locations but exclude broadcast actions. To perform compositional abstraction on the NTAs with broadcast channels, in this section, we introduce the timed transition systems with broadcast actions and corresponding operations of parallel composition and restriction.

\subsection{Definition of TTSB} \label{sec: Definition of TTSB}
TTSBs extend LTSs with state variables, transition commitments, and time-related behaviors. We follow the approach in \cite{Compositional_Abstraction}, treating committedness as an attribute of transitions rather than an attribute of locations as in U{\scriptsize PPAAL} to obtain compositional semantics. 
Therefore, when interpreting the semantics of TAs using TTSB, transitions starting from committed locations in the TA are interpreted as committed transitions in the corresponding TTSB.
Obviously, committed transitions have higher priority over uncommitted transitions. 

\begin{definition}[TTSB] \label{def: TTSB}
    A {\rm{timed transition system with broadcast actions}} is a tuple
    \begin{equation}
        \mathcal{T}=\langle E,H,S,s^0,Act,\rightarrow^1,\rightarrow^0 \rangle
    \nonumber
\end{equation} 
where $E,H \subseteq \mathcal{V}$ are disjoint sets of external and internal variables, respectively. $S\subseteq V\!al(V)$ is the set of states, where $V=E\cup H$, and $s^0$ is the initial state. $Act$ is the action set which includes broadcast actions. $\rightarrow^1$,$ \rightarrow^0$ are disjoint sets of committed and uncommitted transitions, respectively. A transition $(s,a,t)\in \rightarrow^b$ can also be denoted as $s\xrightarrow{a,b}t$, where $b\in \mathbb{B}$. 
A state $s$ is considered as a committed state, denoted as $Comm(s)$, iff there is at least one committed transition starting from it, i.e., $s\xrightarrow{a,1}$ for some $a\in Act$.
The underlying LTS of $\mathcal{T}$ is $\langle S,s^0,Act,\rightarrow^1\cup\rightarrow^0\rangle$, denoted as $\mathsf{LTS}(\mathcal{T})$. 

We require the following axioms to hold, for all $s,t\in S$, $a,a'\in Act$, $\sigma \in \mathcal{C} \cup \Delta$, $\delta \in \Delta$, $b\in \mathbb{B}$, $d\in\mathbb{R}_{\geq 0}$ and $u\in V\!al(E)$,
    \begin{equation}
        \label{Axiom:I}
        s\xrightarrow{a,1}\wedge \:s\xrightarrow{a',b} \quad \Rightarrow \quad  a'\in\mathcal{E}_{\mathcal{C}}\cup\mathcal{E}_{\Delta}\vee(a'=\tau\wedge b) 
        \tag{Axiom~I}
    \end{equation}
    \vspace{-3mm}
    \begin{equation}
        \label{Axiom:II}
        s[u]\in S 
        \tag{Axiom~II}
    \end{equation}
    \vspace{-3mm}
    \begin{equation}
        \label{Axiom:III}
        s\xrightarrow{\sigma?,b} \quad \Rightarrow \quad s[u]\xrightarrow{\sigma?,b}
        \tag{Axiom~III}
    \end{equation}
    \vspace{-3mm}
    \begin{equation}
        \label{Axiom:IV}
        s\xrightarrow{d,0} t \quad \Rightarrow \quad t = s \oplus d 
    \tag{Axiom~IV}
    \end{equation}
    \vspace{-3mm}
    \begin{equation}
        \label{Axiom:V}
        s\xrightarrow{\delta?,b}
        \tag{Axiom~V}
    \end{equation}
    \vspace{-3mm}
    \begin{equation}
        \label{Axiom:VI}
        s\xrightarrow{\delta?,b}t \quad \Rightarrow \quad {s\lceil E} = {t\lceil E} 
        \tag{Axiom~VI}
    \end{equation}
\end{definition}

Note that in a TTSB, from a committed state, the outgoing transitions might be uncommitted. 
Axiom~I stipulates that neither time passage nor uncommitted $\tau$-transitions can occur in a committed state. In contrast, uncommitted transitions labeled with broadcast or binary actions can occur in this state since they may synchronize with committed transitions of other TTSBs.
Axiom~II asserts that by updating the values of external variables of a state, the result is still a state. 
Axiom~III affirms that updating external variables does not affect the enabledness of transitions labeled with input actions, whether binary or broadcast, which is crucial for compositionality.
Axiom~IV asserts that if time advances with $d$ units, all the clocks advance by $d$, while other variables remain unchanged.

Axiom~V stipulates that for any broadcast channel $\delta$, all the states in TTSB have the corresponding outgoing $\delta?$-transition, that is, TTSB is \emph{input-enabled} for broadcast actions. This axiom aligns with the constraint in \emph{broadcast protocols} definition~\cite{Constraint-Based_Analysis_of_Broadcast_Protocols,verification_of_broadcast_protocols,Fisman_Izsak_Jacobs_2024} and fundamental assumptions of input actions for each state in TIOA work~\cite{TIOA_Kim,I/O_automata_1989}. 
As elaborated in Section~\ref{sec: Parallel Composition} about the parallel composition of TTSBs, this axiom reduces broadcast synchronization to two scenarios, 
enabling a concise design of our parallel composition operator that also guarantees the non-blocking nature of U{\scriptsize PPAAL} broadcast synchronization.
Notably, this axiom does not restrict TTSB to interpreting a limited TA that can execute $\delta?$-transition in any state, i.e., there is no requirement for each location of the TA to have an outgoing $\delta?$-transition, which avoids the cumbersome construction of such TA from a general one. 
Later in Section~\ref{sec: Compositional Semantics} focusing on the compositional semantics of NTAs, we will define the TTSB semantics for a general TA. In terms of the designed rule that introduces suitable self-loop transitions in the TTSB associated with a TA, the semantic conforms with this axiom without any preprocessing of the TA model. Correctness of the rule design is guaranteed by the equivalence between the non-compositional semantics and the compositional semantics of an NTA, which is also proved in Section~\ref{sec: Compositional Semantics}.


Axiom VI is introduced to address the problem caused by a kind of transitions involving broadcast actions and updates of shared variables.
According to the U{\scriptsize PPAAL} help menu, in broadcast synchronization, the update on the $\delta!$-transition is executed first, then those on the $\delta?$-transitions are executed left-to-right in the order of the TAs given in the system definition. This means that the order of the components affects the final composition, as illustrated in Example~\ref{exp: system order}.

\begin{example} \label{exp: system order}
    
    As shown in Fig.~\ref{Updates in broadcast synchronization}, when defining the system, if TA $\mathcal{A}_2$ is to the left of $\mathcal{A}_3$, i.e., $\mathcal{N} = \langle \mathcal{A}_1, \mathcal{A}_2, \mathcal{A}_3 \rangle$, then after executing the broadcast synchronization via $\delta$, the value of the external variable $n$ is $2$. In contrast, if $\mathcal{N}' = \langle \mathcal{A}_1, \mathcal{A}_3, \mathcal{A}_2 \rangle$, the value of $n$ becomes $1$.
\end{example}

\begin{figure}
    \vspace{-3mm}
    \centering
    \includegraphics[height=1.1cm]
    {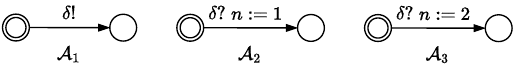}
    \vspace{-1mm}
    \caption{Updates in broadcast synchronization} 
    \label{Updates in broadcast synchronization}
    \vspace{-4mm}
\end{figure}

Therefore, to make our framework compatible with U{\scriptsize PPAAL} semantics while avoiding the occurrence of this scenario, we allow value updates of variables in $\delta!$-transition but introduce Axiom~VI to forbid the value updates of the external variables in $\delta?$-transition.
Although this results in some loss of modeling capability for TTSB, ensuring the associativity of TTSB's parallel composition rules introduced in Section~\ref{sec: Parallel Composition} is crucial.

\subsection{Parallel Composition} \label{sec: Parallel Composition}
We now introduce the parallel composition operation on TTSBs. It is a partial operation, defined only when TTSBs are compatible: the internal variable set of one TTSB must not overlap with the variable set of the other, and their initial states must be compatible.
Recall that $E_i$ (resp. $H_i$) represents the external (resp. internal) variable set of TTSB $\mathcal{T}_i$, $V_i = E_i \cup H_i$, $\Delta$ (resp. $\mathcal{C}$) is the broadcast (resp. binary) channel set, and $\mathcal{E}_{\mathcal{C}}$ is the set of binary actions.

\begin{definition}[Parallel composition] \label{def: Parallel composition}
    Two TTSBs $\mathcal{T}_1$ and $\mathcal{T}_2$ are compatible if $H_1 \cap V_2 = H_2 \cap V_1 = \emptyset$ and $s^0_1 \heartsuit s^0_2$. Their parallel composition $\mathcal{T}_1 \| \mathcal{T}_2$ is denoted as $\mathcal{T} = \langle E, H, S, s^0, Act, \rightarrow^1, \rightarrow^0 \rangle$, where $E = E_1 \cup E_2$, $H = H_1 \cup H_2$, $S = \{r \| s \mid r \in S_1 \wedge s \in S_2 \wedge r \heartsuit s\}$, $s^0 = s^0_1 \| s^0_2$, and $\rightarrow^1$, $\rightarrow^0$ are the least relations satisfying the rules in Fig.\ref{fig:Trans_Build_Rules}. Here, $i,j \in \{1,2\}$, $r, r'\in S_i$, $s,s'\in S_j$, $b, b'\in\mathbb{B}$, $\delta\in\Delta$, $c\in\mathcal{C}$, $a\in\mathcal{E}_{\mathcal{C}}$ and $d\in\mathbb{R}_{\ge 0}$.
\end{definition}

\begin{figure}[h!]
    \vspace{-5mm}
    \centering
    \begin{tikzpicture}
        \draw (6.4,3) rectangle (12.2,4.22);
        \draw (0,3) rectangle (6.4,4.22);
        \draw (6.4,1.3) rectangle (12.2,3);
        \draw (0,1.3) rectangle (6.4,3);
        \draw (6.4,0) rectangle (12.2,1.3);
        \draw (0,0) rectangle (6.4,1.3);

        \node at (6.4/2,3+1.22/2) 
        {$\inferenceOne{EXT}
        {r\|s \xrightarrow{a,b}r'\rhd s}
        {r\xrightarrow{a,b}_i r'}
        $};
        
        \node at (6.4+5.8/2,3+1.22/2)
        {$\inferenceOne{TAU}
        {r\|s \xrightarrow{\tau,b}r'\rhd s}
        {r\xrightarrow{\tau,b}_i r' \myHspace
        Comm(s)\Rightarrow b}$};

        \node at (6.4/2,1.3+1.74/2) 
        {$\inferenceOne{SYNC}
        {r\|s \xrightarrow{\tau,b\vee b'} r'\lhd s'}
        {\begin{array}{c}
            {r\xrightarrow{c!,b}_i r' \myHspace
            {s[r']\xrightarrow{c?,b'}_j s'} \myHspace
            i\not = j} \\
            {Comm(r)\vee Comm(s)\Rightarrow b\vee b'}
        \end{array}}
        $};
        
        \node at (6.4+5.8/2,1.3+1.74/2) 
        {$\inferenceOne{TIME}
        {r\|s \xrightarrow{d,0} r'\|s'}
        {r\xrightarrow{d,0}_i r' \myHspace 
        s\xrightarrow{d,0}_j s' \myHspace
        i\not = j}
        $};

        \node at (6.4/2,1.32/2)
        {$\inferenceOne{SND}
        {r\|s \xrightarrow{\delta!,b\vee b'} r'\| s'}
        {\begin{array}{c}
            r\xrightarrow{\delta!,b}_i r' 
            \myHspace {s[r']\xrightarrow{\delta?,b'}_j s'} 
            \myHspace i\not = j 
        \end{array}}$};

        \node at (6.4+5.8/2,1.32/2) 
        {$\inferenceOne{RCV}
        {r\|s \xrightarrow{\delta?,b\vee b'} r'\| s'}
        {\begin{array}{c}
            r\xrightarrow{\delta?,b}_i r' \myHspace
            {s\xrightarrow{\delta?,b'}_j s'} \myHspace i\not = j 
        \end{array}}$};

    \end{tikzpicture}
    \vspace{-2mm}
    \caption{Rules for parallel composition of TTSBs}
    \label{fig:Trans_Build_Rules}
    \vspace{-3mm}
\end{figure}

Since broadcast transitions do not interact with binary transitions, internal transitions, or time passage, the parallel composition rules for these transitions are consistent with those in TTSs, as shown in rules \textbf{EXT}, \textbf{TAU}, \textbf{SYNC}, and \textbf{TIME} in Fig.~\ref{fig:Trans_Build_Rules}.
Rule \textbf{EXT} specifies that a transition labeled with binary action $a$ in component $\mathcal{T}_i$ from state $r$ leads to a corresponding transition in the composition $\mathcal{T}$. Occurrence of the $a$-transition may override some of the shared variables, and $r'\rhd s$ is again a state of $\mathcal{T}$. Rule \textbf{TAU} states that a $\tau$-transition from $r$ in $\mathcal{T}_i$ induces a corresponding transition in $\mathcal{T}$, except for the case where the $\tau$-transition is uncommitted and $\mathcal{T}_j$ is in a committed state. Rule \textbf{SYNC} describes the binary synchronization between components. If $\mathcal{T}_i$ has a $c!$-transition from $r$ to $r'$, and $\mathcal{T}_j$ has a corresponding $c?$-transition from $s[r']$, i.e. state $s$ updated by $r'$, to $s'$, then the composition will have a $\tau$-transition to from $r\|s$ to $r'\lhd s'$. We say a binary synchronization is committed if at least one involved transition is committed, that is, $b\vee b' = 1$. The condition $Comm(r)\vee Comm(s) \Rightarrow b \vee b'$ implies that a committed binary synchronization can always occur, and an uncommitted binary synchronization can only occur when neither of the components is in a committed state. 
Rule \textbf{TIME} states that time progresses at the same rate in both components. 

For the composition of broadcast transitions, generally, there are four scenarios, $(\delta!,\delta?)$, $(\delta?,\delta?)$, $(\delta!,\cdot)$ and $(\delta?,\cdot)$ in two components. By Axiom~V that each TTSB is input-enabled for broadcast actions, we only need to tackle the composition for the first two scenarios.
We first consider the designed rule \textbf{SND} for $(\delta!,\delta?)$ transitions. Different from rule \textbf{SYNC} that generates a $\tau$-transition in $\mathcal{T}$, rule \textbf{SND} states that if $\mathcal{T}_i$ has a $\delta!$-transition from $r$ to $r'$, and $\mathcal{T}_j$ has $\delta?$-transition from $s[r']$ to $s'$, the composition will have a transition from $r\|s$ to $r'\|s'$, which is still labeled with $\delta!$. Intuitively, this rule allows a $\delta!$-transition, after synchronizing with a $\delta?$-transition, to synchronize with $\delta?$-transitions in other components.
By Axiom~VI, since the shared variable is not updated in the $\delta?$-transition, we have $r'\|s' = r'\lhd s'$.
As the compositional framework also addresses scenarios combining broadcast synchronization and committed locations in NTAs, we must consider committedness for the rule \textbf{SND} in two aspects.
For the first one, whether the composed $\delta!$-transition is committed is 
decided by the value of $b\vee b'$.  For the second aspect, it should be noted that unlike rule \textbf{SYNC}, rule \textbf{SND} does not have the condition $Comm(r)\vee Comm(s) \Rightarrow b \vee b'$.  
This is because having this condition would cause the associativity violation of the parallel composition operation of TTSBs,
which is illustrated by Example~\ref{exp: committedness}. 

\begin{example}\label{exp: committedness}
Consider the three TTSBs shown in Fig.\ref{Composition of three TTSBs}, where $r$ and $t$ are committed states. 
Suppose we add the condition $Comm(r)\vee Comm(s) \Rightarrow b \vee b'$ to rule \textbf{SND}. If we compose $\mathcal{T}_1$ and $\mathcal{T}_2$ first, since $r\xrightarrow{\delta?,0}r$ and $s\xrightarrow{\delta!,0}s'$ are both uncommitted, then $Comm(r)\vee Comm(s) \Rightarrow b \vee b'$ values false. So $\mathcal{T}_1\|\mathcal{T}_2$ does not have $\delta!$-transitions, resulting in the absence of $\delta!$-transition in the composition of $\mathcal{T}_1\|\mathcal{T}_2$ and $\mathcal{T}_3$, i.e. $\mathcal{T}_1\|\mathcal{T}_2\|\mathcal{T}_3$. 
However, if we compose $\mathcal{T}_2$ and $\mathcal{T}_3$ first, 
the final composition $\mathcal{T}_1\|(\mathcal{T}_2\|\mathcal{T}_3)$ will contain a committed $\delta!$-transition. 
Obviously, this violates the associative requirement.
\end{example}

\begin{figure}
    \vspace{-3mm}
    \centering
    \includegraphics[height=1.2cm]
    {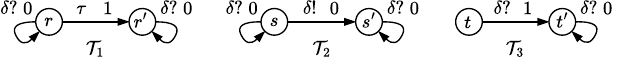}
    \vspace{-1mm}
    \caption{Composition of three TTSBs} 
    \label{Composition of three TTSBs}
    \vspace{-3mm}
\end{figure}

In contrast, our current design of rule \textbf{SND} ensures the parallel composition operator associative, which will be shown by Theorem~\ref{theroem: Commutativity and Associativity} at the end of this section.
We now prove that the target states of the transitions generated by rule \textbf{SND} are always states of $\mathcal{T}$. By Axiom~II for $\mathcal{T}_j$, it follows that $s[r']$ is a state of $\mathcal{T}_j$. Further by Lemma~\ref{lemma: expressions}(1), we have $r'\heartsuit s[r']$, then by Lemma~\ref{lemma: expressions}(3), $r'\heartsuit s[r']\lceil E_j$. By Axiom~VI, $s[r']\lceil E_j = s'\lceil E_j$, which means $r' \heartsuit s'\lceil E_j$. Since $V_i\cap H_j=\emptyset$, $r'\heartsuit s'\lceil H_j$ holds. Finally by Lemma~\ref{lemma: expressions}(7), we obtain $r'\heartsuit s'$. Hence, $r'\|s'$ is a state of $\mathcal{T}$.


We now consider the designed rule \textbf{RCV} for $(\delta?,\delta?)$ transitions, which states that if $\mathcal{T}_i$ has a $\delta?$-transition from $r$ to $r'$ and $\mathcal{T}_j$ has a $\delta?$-transition from $s$ to $s'$, the composition will have a $\delta?$-transition from $r\|s$ to $r'\|s'$.
Like the composed $\delta!$-transition in rule \textbf{SND}, the composed $\delta?$-transition in rule \textbf{RCV} has the committedness $b\vee b'$. 
Still, to guarantee associativity, rule \textbf{RCV} does not have the condition $Comm(r)\vee Comm(s) \Rightarrow b \vee b'$. 
The target state of the generated $\delta?$-transition also remains a state of $\mathcal{T}$. Since $r\| s$ is a state of $\mathcal{T}$, we have $r\heartsuit s$, which implies $r\lceil E_i \heartsuit s\lceil E_j$ by Lemma~\ref{lemma: expressions}(6). Further by Axiom~VI, neither $r\xrightarrow{\delta?,b}_i r'$ nor $ s\xrightarrow{\delta?,b'}_j s'$ modifies the value of external variables, i.e. $r\lceil E_i = r'\lceil E_i, s\lceil E_j = s'\lceil E_j$. Based on this and $r\lceil E_i \heartsuit s\lceil E_j$, we have $r'\lceil E_i \heartsuit s'\lceil E_j$. Finally, since $H_i\cap V_j = H_j\cap V_i = \emptyset$, by Lemma~\ref{lemma: expressions}(7), we get $r'\heartsuit s'$, following that $r'\|s' \in S$.


The parallel composition operation on TTSBs is well-defined, that is, the composition of two TTSBs remains a TTSB.
The proof is in Appendix~\ref{APP: proof of Composition well-defined}.

\begin{lemma} [Composition well-defined] \label{lemma: Composition well-defined}
    Let $\mathcal{T}_1$ and $\mathcal{T}_2$ be compatible TTSBs. Then $\mathcal{T}_1\|\mathcal{T}_2$ is a TTSB.
\end{lemma}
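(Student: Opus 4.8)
The plan is to verify that $\mathcal{T}_1 \| \mathcal{T}_2$, as constructed in Definition~\ref{def: Parallel composition}, satisfies all the structural requirements of a TTSB: that $E = E_1 \cup E_2$ and $H = H_1 \cup H_2$ are disjoint subsets of $\mathcal{V}$, that $S$ is a subset of $V\!al(V)$ with $s^0 \in S$, that $\rightarrow^1$ and $\rightarrow^0$ are disjoint, and — the substantive part — that all six axioms hold. Disjointness of $E$ and $H$ follows from the compatibility hypotheses $H_1 \cap V_2 = H_2 \cap V_1 = \emptyset$ together with the disjointness of $E_i$ and $H_i$ inside each $\mathcal{T}_i$; that $s^0 = s^0_1 \| s^0_2 \in S$ follows from $s^0_1 \heartsuit s^0_2$; and $\rightarrow^1 \cap \rightarrow^0 = \emptyset$ can be read off from the rules in Fig.~\ref{fig:Trans_Build_Rules}, since the committedness bit $b$ of each composed transition is determined by the bits of its premises in a consistent way (this needs a small case analysis on which rule fires, using $\rightarrow^1_i \cap \rightarrow^0_i = \emptyset$). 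Note the excerpt has already argued in the running text that the target states produced by \textbf{SND} and \textbf{RCV} lie in $S$; I would reuse those arguments, and give the analogous (easier) checks for \textbf{EXT}, \textbf{TAU}, \textbf{SYNC}, \textbf{TIME}.

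The bulk of the proof is checking the six axioms, one at a time, by induction on the derivation of a transition of $\mathcal{T}_1 \| \mathcal{T}_2$ (equivalently, by cases on the last rule applied). For \ref{Axiom:I}, I would suppose $r\|s \xrightarrow{a,1}$ and $r\|s \xrightarrow{a',b}$ and trace which rules can produce a committed transition: a committed transition of the composition comes from a committed premise transition (or a committed synchronization), so in each case I invoke \ref{Axiom:I} for the component(s) to conclude the first transition was not a time-passage or uncommitted-$\tau$ step, and then argue the same for $a'$. For \ref{Axiom:II}, given a state $r\|s$ and $u \in V\!al(E)$, I would decompose $u$ along $E_1$ and $E_2$, apply \ref{Axiom:II} to each component to get $r[u\lceil E_1] \in S_1$ and $s[u\lceil E_2] \in S_2$, check these remain compatible (using Lemma~\ref{lemma: expressions}, particularly that updates on shared external variables are done consistently), and conclude $(r\|s)[u] \in S$. \ref{Axiom:III} is similar but applied to transitions: if $r\|s$ has an outgoing $\sigma?$-transition, it was built from $\sigma?$-transitions in the components (for $\sigma \in \mathcal{C}$ via \textbf{SYNC} only as a $c?$ premise; for $\sigma = \delta \in \Delta$ via \textbf{RCV}, or as the receiving side of \textbf{SND}), and \ref{Axiom:III} for each component together with the decomposition of $u$ gives the result. \ref{Axiom:IV} follows immediately from rule \textbf{TIME} being the only source of time-passage transitions, plus \ref{Axiom:IV} for each component and the definition of $\oplus$ componentwise. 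For \ref{Axiom:V} (input-enabledness for broadcast), given any state $r\|s$ and any $\delta$, by \ref{Axiom:V} for $\mathcal{T}_1$ and $\mathcal{T}_2$ there are $\delta?$-transitions $r \xrightarrow{\delta?,b_1} r'$ and $s \xrightarrow{\delta?,b_2} s'$, and rule \textbf{RCV} then yields a $\delta?$-transition out of $r\|s$ — here I should double-check that the side condition $r' \heartsuit s'$ needed to land in $S$ holds, which is exactly the running-text computation using \ref{Axiom:VI}. Finally \ref{Axiom:VI} asserts $\delta?$-transitions do not change external variables; a $\delta?$-transition of the composition is produced only by \textbf{RCV}, whose premises are $\delta?$-transitions of the components, so \ref{Axiom:VI} for each component gives $r\lceil E_i = r'\lceil E_i$ and $s\lceil E_j = s'\lceil E_j$, whence $(r\|s)\lceil E = (r'\|s')\lceil E$.

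The step I expect to be the main obstacle is \ref{Axiom:III}: I need to be careful that updating external variables of a composed state by $u \in V\!al(E)$ corresponds, in the premises of the rule that built the $\sigma?$-transition, to updating by the appropriate restrictions of $u$ — and in particular that in the \textbf{SYNC} case, where the receiver's premise is evaluated at $s[r']$ rather than $s$, the update by $u$ commutes correctly with the override by $r'$. This is where the function-algebra identities of Lemma~\ref{lemma: expressions} (especially (4) $f[g][h] = f[h\rhd g]$ and (5) $(f\rhd g)[h] = f[h]\rhd g[h]$) do the real work, and it requires keeping domains straight between $E$, $E_1$, $E_2$, $H_1$, $H_2$. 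The remaining axioms are routine once the bookkeeping conventions are fixed, and \ref{Axiom:IV} and \ref{Axiom:V} are essentially immediate. I would present the proof as a sequence of six short paragraphs, one per axiom, each beginning ``We verify \ref{Axiom:...}'', preceded by the quick checks that $E,H,S,s^0,\rightarrow^1,\rightarrow^0$ have the right shape.
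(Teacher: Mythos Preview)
Your overall plan matches the paper's proof: verify the structural conditions and then check Axioms~I--VI by case analysis on the rule deriving a transition of $\mathcal{T}_1\|\mathcal{T}_2$. One organizational difference is worth noting: the paper first isolates the characterization $Comm(r\|s)\Leftrightarrow Comm(r)\vee Comm(s)$ as a separate lemma (Lemma~\ref{lemma: committed states}) and then uses it to streamline the Axiom~I argument. Your plan effectively inlines the $\Rightarrow$ direction of that equivalence; this works, but pulling it out first is cleaner and the lemma is reused later (e.g.\ in the proofs of Theorems~\ref{theroem: Commutativity and Associativity} and~\ref{theorem: precongruence}).

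There is, however, a concrete confusion in your Axiom~III analysis. A $\sigma?$-transition \emph{of the composition} cannot be produced by \textbf{SYNC} or by \textbf{SND}: rule \textbf{SYNC} always outputs label $\tau$, and rule \textbf{SND} always outputs label $\delta!$. Hence for $\sigma=c\in\mathcal{C}$ the only source of a $c?$-transition of $\mathcal{T}_1\|\mathcal{T}_2$ is rule \textbf{EXT} (with $a=c?$), and for $\sigma=\delta\in\Delta$ the only source of a $\delta?$-transition is rule \textbf{RCV}. The ``main obstacle'' you anticipated---that in the \textbf{SYNC} case the receiver's premise is evaluated at $s[r']$ and the update by $u$ must commute with the override by $r'$---therefore does not arise for Axiom~III at all. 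The actual verification is simpler than you expect: in the binary case you lift Axiom~III from the single component that fired \textbf{EXT}; in the broadcast case you lift it from both components that fired \textbf{RCV} and reapply \textbf{RCV}. No delicate interaction with Lemma~\ref{lemma: expressions}(4) or~(5) is needed here.
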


Notably, the parallel composition operator defined in this paper satisfies two crucial properties for compositional abstraction: commutativity and associativity. This is a main theorem in this paper, 
and the proof is in Appendix~\ref{APP: proof of Commutativity and Associativity}.


\begin{theorem}[Commutativity and Associativity] 
\label{theroem: Commutativity and Associativity}
    The parallel composition operation on TTSBs is commutative and associative.
\end{theorem}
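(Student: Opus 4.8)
The plan is to prove commutativity and associativity separately, in each case by establishing that the two sides of the equation are literally the *same* TTSB — same external/internal variable sets, same state set, same initial state, and, crucially, the same committed and uncommitted transition relations. Since the structural components ($E$, $H$, $S$, $s^0$) are built from set union and the commutative/associative merge operator $\|$ on valuations (and $\heartsuit$ is symmetric and transitive in the required sense by Lemma~\ref{lemma: expressions}(2)), their equality is immediate. The real work is showing the transition relations coincide, and since each $\rightarrow^b$ is defined as the least relation closed under the rules in Fig.~\ref{fig:Trans_Build_Rules}, this amounts to a rule-by-rule induction on the derivation of a transition.

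\textbf{Commutativity.} For $\mathcal{T}_1 \| \mathcal{T}_2 = \mathcal{T}_2 \| \mathcal{T}_1$ I would check that each rule is symmetric under swapping the roles of the two components. Rules \textbf{EXT} and \textbf{TAU} already quantify over $i \in \{1,2\}$, so they are trivially symmetric. Rules \textbf{SYNC}, \textbf{TIME}, \textbf{SND}, and \textbf{RCV} all carry the side condition $i \neq j$ and are stated so that both orientations are permitted; the target states are $r' \lhd s'$, $r' \| s'$, etc., and one uses that $\|$ is commutative and that $f \lhd g = g \rhd f$ to see the targets match. The one subtlety is rule \textbf{SYNC} versus \textbf{SND}/\textbf{RCV}: in \textbf{SYNC} the receiver transition is taken from $s[r']$ (the receiver's state updated by the sender's post-state), and one must confirm that reversing sender/receiver roles still produces the same composed $\tau$-transition — this is exactly the content of the analogous argument already sketched in the text for why $r'\|s' = r'\lhd s'$ under Axiom~VI, and for \textbf{SYNC} it follows from Lemma~\ref{lemma: expressions}(3),(4),(5).

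\textbf{Associativity.} For $(\mathcal{T}_1 \| \mathcal{T}_2) \| \mathcal{T}_3 = \mathcal{T}_1 \| (\mathcal{T}_2 \| \mathcal{T}_3)$ I would first dispose of compatibility (each triple-wise composition is defined iff all three pairs are compatible, using $H_i \cap V_j = \emptyset$ pairwise and Lemma~\ref{lemma: expressions}(2) to handle the three-way $\heartsuit$ conditions on initial states) and the structural data (again set union and associativity of $\|$). Then for the transition relations I would argue by induction on derivations: given a transition of the left-hand composite, case on the last rule applied and on which sub-derivation(s) the premises come from, and exhibit a derivation of the same transition in the right-hand composite, and symmetrically. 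The bookkeeping splits by action type. Time-passage (\textbf{TIME}) and $\tau$/\textbf{EXT}/\textbf{TAU} transitions are handled as in \cite{Compositional_Abstraction}, the committedness guard in \textbf{TAU} being managed via the fact that $Comm(r\|s)$ iff $Comm(r) \vee Comm(s)$. Binary synchronization (\textbf{SYNC}) is likewise inherited from the untimed/TTS case. The genuinely new cases are the broadcast rules \textbf{SND} and \textbf{RCV}, where a single $\delta!$ from one component may cascade through $\delta?$-transitions in the other two in any association order; here one leans heavily on input-enabledness (Axiom~V) — every component always has *some* $\delta?$-transition, so no arrangement is "blocked" — and on Axiom~VI, which guarantees $\delta?$-transitions leave external variables untouched, so that the order in which the two receivers are folded in does not affect the resulting state (the $\| $'s and $[r']$-updates collapse, via Lemma~\ref{lemma: expressions}(1),(3),(7), exactly as in the well-definedness argument already given above for \textbf{SND} and \textbf{RCV}).

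\textbf{The main obstacle} will be the broadcast cases of associativity, specifically tracking the state-update discipline: when the left-hand side groups $\mathcal{T}_1\|\mathcal{T}_2$ first, a $\delta!$ originating in $\mathcal{T}_1$ synchronizes with $\mathcal{T}_2$ via $s_2[r_1']$ and then with $\mathcal{T}_3$ via $s_3[(r_1'\|s_2')]$, whereas on the right-hand side the $\delta!$ out of $\mathcal{T}_1$ must synchronize with the already-composed $\mathcal{T}_2\|\mathcal{T}_3$ in one step, whose $\delta?$-transition out of $(s_2\|s_3)[r_1']$ is itself built by \textbf{RCV} from the two receivers. One must show these yield identical source and target states and identical committedness ($b_1 \vee b_2 \vee b_3$ either way, since neither \textbf{SND} nor \textbf{RCV} imposes the $Comm \Rightarrow b\vee b'$ guard — this is precisely why Example~\ref{exp: committedness} does not break associativity). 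The key algebraic facts are that updates by $\delta?$-receivers are vacuous on external variables (Axiom~VI), internal-variable domains are pairwise disjoint ($H_i \cap V_j = \emptyset$), and the override/merge identities of Lemma~\ref{lemma: expressions}(3),(4),(5),(7); modulo careful case analysis this is routine but lengthy, which is why the full argument is deferred to Appendix~\ref{APP: proof of Commutativity and Associativity}.
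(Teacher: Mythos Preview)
Your proposal is correct and follows essentially the same route as the paper: commutativity by symmetry of the rules, associativity by matching the structural components trivially and then checking the transition relations rule-by-rule, deferring the non-broadcast cases to~\cite{Compositional_Abstraction} and concentrating on the four broadcast patterns (one $\delta!$ among three components, or all $\delta?$), where the crux is the state-update identity $t[r'\|s'] = t[r']$ obtained from Axiom~VI and Lemma~\ref{lemma: expressions}(3),(4). One small remark: your commutativity discussion over-complicates \textbf{SYNC} --- swapping the roles of $i$ and $j$ does not mean swapping sender and receiver (those are fixed by who carries $c!$ versus $c?$), so no appeal to Axiom~VI or override identities is needed there; the paper dispatches commutativity in a single sentence.
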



\subsection{Restriction} \label{sec: Restriction}

The designed parallel composition rules allow three types of component communication through broadcast channels, binary channels, and shared variables. For the first two types, when no matching component is available, a broadcast or binary transition can be removed or replaced with a $\tau$-transition. For the third type, if an external variable is no longer used in other components, it can be converted to an internal one. 
Here, we introduce the restriction operation to handle these channels and variables. This operation not only enables simpler abstractions but is also crucial for establishing the correct compositional semantics of NTAs in Section~\ref{sec: Compositional Semantics}.





\begin{definition}[Restriction for TTSB] \label{def: Restriction for TTSB}
    Given a TTSB $\mathcal{T}$ and a set $C \subseteq \Delta\cup\mathcal{C}\cup E$ of broadcast, binary channels, and external variables, we denote the $\mathcal{T}$ restricted by $C$ as $\mathcal{T}\backslash C$. The TTSB $\mathcal{T}\backslash C$ is identical to $\mathcal{T}$, except that for any transition $s\xrightarrow{a,b}s'$ of $\mathcal{T}$:
    \begin{enumerate}
        \item 
        If $a \in \{\delta?,c!,c?\mid \delta \in C \cap \Delta, c\in C\cap \mathcal{C}\}$, it will be removed from $\mathcal{T}\backslash C$.

        \item 
        If $a \in \{\delta!\mid \delta \in C \cap \Delta\}$ and $Comm(s)\wedge \neg b$, it will be removed from $\mathcal{T}\backslash C$.

        \item
        If $a \in \{\delta!\mid \delta \in C \cap \Delta\}$ and $Comm(s)\Rightarrow b$, it will be replaced by $s\xrightarrow{\tau,b}s'$ in $\mathcal{T}\backslash C$.  
    \end{enumerate}
    \vspace{-1mm}
    and the external and internal variable sets of $\mathcal{T}\backslash C$ are $E-C$ and $H\cup (E\cap C)$, respectively.
\end{definition}

The first type of transitions labeled with input broadcast actions or binary actions is removed because they cannot occur independently. For the output broadcast transitions, i.e., $\delta!$-transitions, due to the non-blocking nature of broadcast synchronization, they can occur independently but with consideration of their committedness and the committedness of their source states. Recall that a committed state can have uncommitted outgoing transitions. So, a $\delta!$-transition can be uncommitted while its source state is committed. This also implies that the source state has another outgoing committed transition. In this case, this $\delta!$-transition should be removed because it cannot occur due to its low priority. Otherwise, for the cases where the sourcing state is not committed or the $\delta!$-transition itself is committed, the $\delta!$-transition can occur and should be replaced with a $\tau$-transition, since no other transitions can synchronize with it. In summary, the second and third types of transitions are handled based on both their own committedness and that of their source states. Here, committedness can be seen as a binary priority. We plan to extend our framework to support more general priority relations in the future, which is a meaningful enhancement.


Obviously, $\mathsf{LTS}(\mathcal{T})=\mathsf{LTS}(\mathcal{T}\backslash C)$, if $C\subseteq E$. We write $\Sigma(\mathcal{T})$ for the set of channels that are enabled in the transitions of $\mathcal{T}$. Using this notation, we can formulate some restriction laws, such as $\mathcal{T}\backslash C = \mathcal{T}$ if $\Sigma(\mathcal{T})\cap C=E\cap C=\emptyset$.


\section{Two definitions of NTA semantics}\label{sec: Two definitions of NTA semantics}
This section introduces two definitions of the semantics of NTA with broadcast channels. 
One strictly follows U{\scriptsize PPAAL} semantics by constructing an LTS directly, but is not compositional, therefore called non-compositional semantics in this paper.
The other is compositional, which is achieved by associating TTSBs to each TA, applying parallel composition and restriction operations, and finally extracting the underlying LTS. We prove that these two semantics are equivalent, which is also a main theorem to implement compositional abstraction for
timed systems with broadcast synchronization.


For the compositional semantics of NTA, we first impose some axioms that U{\scriptsize PPAAL} does not require on timed automata to obtain compositionality. For any TA $\mathcal{A}=\langle L,K,l^0,E,H,v^0,I,\rightarrow \rangle$, we require:
    \begin{equation*}
        I(l)\ does\ not\ depend\ on\ E \qquad
        \tag{Axiom~VII}
    \end{equation*}
    \vspace{-5mm}
    \begin{equation*}
        l\xrightarrow{g,\sigma?,\rho}l' \quad \Rightarrow \quad g \ does \ not \ depend \ on\ E
        \tag{Axiom~VIII}
    \end{equation*}
    \begin{equation*}
        \forall l\in K\ \forall v\in I(l) \ \exists(l\xrightarrow{g,a,\rho}l'):v\models g\wedge \rho(v)\models I(l')
        \tag{Axiom~IX}
    \end{equation*}
    \begin{equation*}
        l \xrightarrow{g,a,\rho}^u l' \quad \Rightarrow \quad a = \tau \wedge g \ does \ not \ depend \ on \ \mathcal{X}
        \tag{Axiom~X}
    \end{equation*}
    \begin{equation*}
        l \xrightarrow{g,\delta?,\rho} l' \quad \Rightarrow \quad  \rho \ does\ not\ update \ {V\!al(E)}
        \tag{Axiom~XI}
    \end{equation*}

Axioms~VII-X are similar to the corresponding axioms of TTS in \cite{Compositional_Abstraction}, with the new constraints with regard to broadcast synchronization.
Axiom~VII is introduced to avoid runtime errors in the scenario: modification of external variables in one automaton causes the violation of location invariant in another,
which leads the timed system to reach an undefined state in U{\scriptsize PPAAL}.
Axiom~VIII ensures that the update function $\rho$ of $\sigma!$-transition does not affect satisfaction of the guard $g$ in the corresponding $\sigma?$-transitions, where $\sigma$ could be either a broadcast or a binary channel.
As shown in~\cite{Compositional_Abstraction}, real-world models rarely violate this axiom.
Axiom~IX requires that for any committed location $l$, a transition must exist starting from it.
This axiom excludes some ``bad'' models that may lead to deadlock and ensures that when associating a TTSB to a TA, the states corresponding to the committed location $l$ must be committed ones.
Axiom~X says that an urgent transition should be internal and not have clock guards.
This effectively excludes most models with urgent transitions, as totally supporting this feature is currently beyond the scope of this work.
Axiom~XI, newly introduced in this paper, corresponds to Axiom~VI for TTSB. It prohibits the values of external variables from being updated in the update function $\rho$ for a $\delta?$-transition, thus guaranteeing the associativity of parallel composition operation.
Empirically, real-world models violating this axiom are uncommon.
Overall, these axioms impose acceptable restrictions on TAs while ensuring their applicability to modeling most timed systems.


\subsection{Non-compositional Semantics}\label{sec: Non-compositional Semantics}

Strictly following the U{\scriptsize PPAAL} help menu, we give the non-compositional LTS semantics of an NTA, in which all the TAs satisfy Axiom~VII$\sim$XI. 
It can be viewed as a further formalization of the official semantics and is essential for the subsequent proof of semantics equivalence.

\begin{definition}[LTS semantics of NTA]\label{def: LTS semantics of NTA}
    Let $\mathcal{N}=\langle\mathcal{A}_1,\dots,\mathcal{A}_n\rangle$ be an NTA. Let $V=\bigcup_{i=1}^n (V_i\cup{\mathsf{loc}_i})$, where $\mathsf{loc}_i$ is a fresh variable with type $L_i$. The LTS semantics of $\mathcal{N}$, denoted as $\mathsf{LTS}(\mathcal{N})$, is the LTS $\langle S,s^0,\rightarrow \rangle$, where
    \begin{align*}
        S&=\{v\in{V\!al}(V)\mid\forall i : v \models I_i(v(\mathsf{loc}_i))\},
        \\
        s^0&=v_1^0\|\cdots\|v_n^0\|\{\mathsf{loc}_1\mapsto l_1^0,\dots,\mathsf{loc}_n\mapsto l_n^0\},
    \end{align*}
    and $\rightarrow$ is defined by the rules in Fig.\ref{fig:NTA_UPPAAL_Semantics}.  
\end{definition}

\begin{figure}[h!]
    \vspace{-6mm}
    \centering
    \begin{tikzpicture}
        \draw (0,0) rectangle (12,2.5);
        \draw (0,2.5) rectangle (12,3.8);
        \draw (0,3.8) rectangle (12,5.8);
        \draw (0,5.8) rectangle (12,7.4);

        \node at (12/2,5.8+1.7/2)
        {$\inferenceOne{TAU}
        {s\xrightarrow{\tau}s'}
        {\begin{array}{c@{\myHspace}c}
             l\xrightarrow{g,\tau,\rho}_i l' & s(\mathsf{loc}_i)=l \myHspace s'=\rho(s)[\{\mathsf{loc}_i\mapsto l'\}]  \\
             s\models g & (\forall k:s(\mathsf{loc}_k\not \in K_k)\vee l \in K_i 
        \end{array}}$};
        \node at (12/2,3.8+2.1/2)
        {$\inferenceOne{SYNC}
        {s\xrightarrow{\tau}s'}
        {\begin{array}{c@{\myHspace}c@{\myHspace}c}
            l_i\xrightarrow{g_i,c!,\rho_i}l'_i &
            l_j\xrightarrow{g_j,c?,\rho_j}l'_j &
            s'=\rho_j(\rho_i(s))[\{\mathsf{loc}_i\mapsto l'_i,\mathsf{loc}_j\mapsto l'_j\}]\\
            s(\mathsf{loc}_i)=l_i &
            s(\mathsf{loc}_j)=l_j & 
            (\forall q:s(\mathsf{loc}_q)\not \in K_q)\vee l_i\in K_i \vee l_j \in K_j\\
            s\models g_i &
            s\models g_j &
            i\not = j
        \end{array}}$};
        
        \node at (12/2,2.5+1.25/2)
        {$\inferenceOne{TIME}
        {s\xrightarrow{d}s'}
        {s'=s\oplus d \myHspace 
        \forall k:s(\mathsf{loc}_k)\not \in K_k  \myHspace
        \nexists(l\xrightarrow{g,\tau,\rho}_i^u l'):s(\mathsf{loc}_i)=l\vee s\models g}
        $};
        
        \node at (12/2,2.6/2)
        {$\inferenceOne{BCST}
        {s\xrightarrow{\tau}s'}
        {\begin{array}{c}
            l_i\xrightarrow{g_i,\delta!,\rho_i}l'_i \myHspace\myHspace\myHspace\myHspace
            s(\mathsf{loc}_i)=l_i \myHspace\myHspace\myHspace\myHspace
            s\models g_i \\
            \forall j\in RS(\delta,i,s):l_j\xrightarrow{g_j,\delta?,\rho_j}l_j',s(\mathsf{loc}_j)=l_j, s\models g_j \\
            (\forall q:s(\mathsf{loc}_q)\not \in K_q)\vee l_i\in K_i\vee(\exists j\in RS(\delta,i,s): l_j\in K_j)\\
            s'= \rho_{j_m}(\cdots\rho_{j_1}(\rho_i(s)))[\{\mathsf{loc}_i\mapsto l_i', \mathsf{loc}_{j_1}\mapsto l_{j_1}',\dots,\mathsf{loc}_{j_m}\mapsto l_{j_m}'\}]
        \end{array}}$};
        
    \end{tikzpicture}
    \caption{LTS semantics of an NTA in U{\scriptsize PPAAL}}
    \label{fig:NTA_UPPAAL_Semantics}
    \vspace{-4mm}
\end{figure}


Rule \textbf{TAU}, \textbf{SYNC}, and \textbf{TIME} respectively describe the internal transitions of each TA in the NTA, the binary synchronization between TAs, and the passage of time. 
We refer to \cite{Compositional_Abstraction} for the detailed description of these rules.
Rule \textbf{BCST}
describes broadcast synchronization among TAs. According to the U{\scriptsize PPAAL} help menu, when the NTA $\mathcal{N}$ is in a certain state $s$ and a $\delta!$-transition in a certain $\mathcal{A}_i$ is activated, all other TAs in $\mathcal{N}$ with executable $\delta?$-transitions must select one to synchronize with it. This is described in the first two lines of rule \textbf{BCST}, where the set $RS(\delta,i,s)$, defined as the set of indices of all the TAs with executable $\delta?$-transitions\footnote{For the guard $g_j$ in the input broadcast transition $l_j \xrightarrow{g_j, \delta?, \rho_j} l_j'$, although the current U{\scriptsize PPAAL} help menu states that it can not have clock constraints, the U{\scriptsize PPAAL} change log states that it can have clock constraints since version 4.1.3, and our framework also supports clock constraints in broadcast transitions.}, 
is $\{j \mid i \neq j, s(\mathsf{loc}_j) = l_j, \exists(l_j \xrightarrow{g_j, \delta?, \rho_j} l_j'): s \models g_j\}$.
Considering that in the current state $s$, some TAs can be in committed locations, the third line is imposed to clarify that the broadcast synchronization can only occur under two conditions: 
1) no TA is in a committed location, 
2) at least one of the TAs participating in the synchronization is in a committed location.
In the last line, $\rho_i(s)$ is defined as $v[\rho_i(v \lceil V_i)]$, meaning that if an update function $\rho_i : {V\!al}(V_i) \rightarrow {V\!al}(V_i)$ is applied to state $s$, it only modifies the variables in $V_i$.
This line provides the update rule: $\rho_i$ in the $\delta!$-transition is executed first, then $\rho_{j_1}, \dots, \rho_{j_m}$ in the $\delta?$-transitions, where $j_1, \dots, j_m$ are the indices in $RS(\delta,i,s)$. Notably, due to Axiom~XI, arbitrary execution of $\rho_{j_1}, \dots, \rho_{j_m}$ will result in the same target state $s'$, i.e., the result is deterministic.

\subsection{Compositional Semantics}\label{sec: Compositional Semantics}

To derive the compositional semantics of an NTA, we first obtain the TTSB semantics for each TA in the NTA, then compose them into a single TTSB, apply restrictions, and finally extract the underlying LTS.


\begin{definition}[TTSB semantics of TA]\label{def: TTSB semantics of TA}
    Let $\mathcal{A}=\langle L,K,l^0,E,H,v^0,I,\rightarrow \rangle$ be a TA. The TTSB associated to $\mathcal{A}$, denoted as $\mathsf{TTSB}(\mathcal{A})$, is the tuple
    \begin{equation}
        \langle E, H\cup\{\mathsf{loc}\}, S,s^0,\rightarrow^1,\rightarrow^0 \rangle,
        \nonumber
    \end{equation}
    where $\mathsf{loc}$ is a fresh variable with type $L$. Let $W=E\cup H \cup \{\mathsf{loc}\}$, $S=\{v\in {V\!al}(W)\mid v\in I(v(\mathsf{loc}))\}$, $s^0=v^0\|\{\mathsf{loc}\mapsto l^0\}$. The transitions are defined by the rules in Fig.\ref{fig:TA_TTSB_semantics}.
\end{definition}

\begin{figure}[h!]
    \centering
    \begin{tikzpicture}
        \draw (0,2.6) rectangle(12,3.9);
        \draw (0,1.3) rectangle (12,2.6);
        \draw (0,0) rectangle (12,1.3);
        \node at (12/2,2.6+1.35/2) 
        {$\inferenceOne{ACT}
            {s\xrightarrow{a,b}s'}
            {
                l\xrightarrow{g,a,\rho}l' \myHspace
                s(\mathsf{loc})=l \myHspace
                s\models g \myHspace
                s' =  \rho(s)[\{\mathsf{loc}\mapsto l'\}] \myHspace
                b\Leftrightarrow(l\in K)}
        $};

        \node at (12/2,1.3+1.35/2) 
        {$\inferenceOne{TIME}
        {s \xrightarrow{d,0}s'}
        {
            s'=s\oplus d \myHspace
            s(\mathsf{loc})\not\in K \myHspace
            \nexists(l\xrightarrow{g,\tau,\rho}^u l') : s(\mathsf{loc})=l\wedge s \models g}
        $};

        \node at (12/2,1.35/2) 
        {$\inferenceOne{VIRT}
            {s\xrightarrow{\delta?,0}s}
            {
                s(\mathsf{loc})=l \myHspace \exists \: \delta \:\forall \: l \xrightarrow{g,\delta?,\rho}l' : s\not\models g
            }
        $};
        
    \end{tikzpicture}
    \caption{TTSB semantics of a TA}
    \label{fig:TA_TTSB_semantics}
\end{figure}

Rule \textbf{ACT} describes state transitions in $\mathcal{A}$ caused by broadcast, binary, and internal actions, where the action $a\in\mathcal{E}_\Delta\cup\mathcal{E}_{\mathcal{C}}\cup\{\tau\}$. Rule \textbf{TIME} describes delay transitions, where $d\in\mathbb{R}_{\geq 0}$. 
Given a broadcast channel $\delta$, if state $s$ (with $s(\mathsf{loc}) = l$) does not have outgoing $\delta?$-transition, then Rule \textbf{VIRT} will generate an additional self-loop $\delta?$-transition for $s$,
which ensures
the satisfaction of Axiom~V discussed in Section~\ref{sec: Definition of TTSB}.
In what follows, Theorem~\ref{Theorem: semantics equal}, stating the equivalence between the compositional and the non-compositional semantics, implies that the additional transitions will not affect the correctness of final NTA semantics.
Note that the generated self-loop $\delta?$-transition must be non-committed, even when $l$ is a committed location.
This is consistent with the definition of $Comm(s)$ that allows for non-committed outgoing transitions from $s$, further avoids the committedness change of
the transitions generated by the subsequent parallel composition. 
Without this requirement, the additional self-loop $\delta?$-transition is designed to be committed. If there is an uncommitted transition labeled with $\delta!$ or $\delta?$ in another component, then the composed transition will be incorrectly turned into a committed one.
We can prove that the structure obtained from $\mathcal{A}$ by this definition is indeed a TTSB, 
and Appendix~\ref{APP: proof of TTSB(A) is a TTSB} shows the details.

\begin{lemma} \label{lemma: TTSB(A) is a TTSB}
    $\mathsf{TTSB}(\mathcal{A})$ is a TTSB.
\end{lemma}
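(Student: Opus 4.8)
The plan is to verify that the tuple $\langle E, H\cup\{\mathsf{loc}\}, S, s^0, \rightarrow^1, \rightarrow^0\rangle$ produced by Definition~\ref{def: TTSB semantics of TA} satisfies every clause of Definition~\ref{def: TTSB}, namely: (i) $E$ and $H\cup\{\mathsf{loc}\}$ are disjoint subsets of $\mathcal{V}$; (ii) $S\subseteq V\!al(V)$ with $V=E\cup H\cup\{\mathsf{loc}\}$; (iii) $s^0\in S$; (iv) $\rightarrow^1$ and $\rightarrow^0$ are disjoint; and (v) Axioms~I--VI all hold. Clauses (i)--(iii) are immediate from the construction: disjointness of $E$ and $H\cup\{\mathsf{loc}\}$ follows because $\mathsf{loc}$ is fresh and $E\cap H=\emptyset$ in the TA; $S$ is defined as a subset of $V\!al(W)$ by fiat; and $s^0=v^0\|\{\mathsf{loc}\mapsto l^0\}\in S$ follows from the TA requirement $v^0\models I(l^0)$. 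For (iv), I would observe that the committedness bit $b$ of each generated transition is determined by whether the source location lies in $K$ (for rules \textbf{ACT} and \textbf{TIME}) or is forced to $0$ (for rule \textbf{VIRT}), so a single transition is never placed in both $\rightarrow^1$ and $\rightarrow^0$; and rule \textbf{TIME} only ever produces uncommitted transitions, consistent with Axiom~I.

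The substance of the proof is checking Axioms~I--VI, which I would do one axiom at a time by case analysis on which rule (\textbf{ACT}, \textbf{TIME}, \textbf{VIRT}) generated each transition involved. For Axiom~I, a committed transition $s\xrightarrow{a,1}$ can only come from rule \textbf{ACT} with $s(\mathsf{loc})=l\in K$; then any other transition from $s$ labelled $\tau$ with $b=0$ would also have to come from \textbf{ACT} (rule \textbf{TIME} gives delay actions, not $\tau$), but \textbf{ACT} reads $b$ from whether $l\in K$, so if $l\in K$ that transition is committed too --- forcing $a'\in\mathcal{E}_\mathcal{C}\cup\mathcal{E}_\Delta$ or $(a'=\tau\wedge b)$. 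Here I must also rule out a delay transition from a committed state: rule \textbf{TIME}'s premise must exclude $l\in K$ (this should be built into Fig.~\ref{fig:TA_TTSB_semantics}, mirroring the \textbf{TIME} rule of the NTA semantics), so Axiom~I holds. For Axiom~II, $s[u]$ agrees with $s$ on $\mathsf{loc}$, so $s[u](\mathsf{loc})=s(\mathsf{loc})=l$, and by Axiom~VII the invariant $I(l)$ does not depend on $E$, hence $s\models I(l)\Rightarrow s[u]\models I(l)$, giving $s[u]\in S$. For Axiom~III, a $\sigma?$-transition from $s$ comes from \textbf{ACT} (if there is a real $\delta?$ or $c?$ edge out of $l$ with satisfied guard) or from \textbf{VIRT}; in the \textbf{VIRT} case a self-loop $\delta?$ is unconditionally present at $s[u]$ too; in the \textbf{ACT} case the guard $g$ does not depend on $E$ by Axiom~VIII, so $s\models g\Rightarrow s[u]\models g$, and by Axiom~VII $s[u]\in S$ and $s[u]\models I(l)$, so the transition fires from $s[u]$ as well. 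Axiom~IV is immediate from rule \textbf{TIME}, which is defined precisely so that the target equals $s\oplus d$.

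Axioms~V and~VI are where the \textbf{VIRT} rule earns its keep. For Axiom~V (broadcast input-enabledness), I split on whether $l=s(\mathsf{loc})$ has a real outgoing $\delta?$-edge with a satisfied guard: if it does, \textbf{ACT} supplies $s\xrightarrow{\delta?,b}$; if it does not, \textbf{VIRT} supplies the self-loop $s\xrightarrow{\delta?,0}$ --- either way $s\xrightarrow{\delta?,b}$ for some $b$, and this holds for every state in $S$. For Axiom~VI (broadcast input does not change external variables), again split on the rule: a \textbf{VIRT} transition is a self-loop, so trivially $s\lceil E=t\lceil E$; an \textbf{ACT}-generated $\delta?$-transition applies an update $\rho$ which, by Axiom~XI, does not update $V\!al(E)$, so the external component is unchanged, i.e.\ $s\lceil E=t\lceil E$. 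The main obstacle, and the only place requiring real care, is making sure the premises of the rules in Fig.~\ref{fig:TA_TTSB_semantics} are exactly as needed --- in particular that rule \textbf{VIRT}'s side condition ``$s$ has no outgoing $\delta?$-transition'' is stated so that it never overlaps with an \textbf{ACT}-generated $\delta?$-transition (keeping $\rightarrow^1\cap\rightarrow^0$ issues and double-counting at bay), that \textbf{VIRT} transitions are forced uncommitted even from committed locations (needed for Axiom~I and flagged in the surrounding text), and that rule \textbf{TIME} forbids delay from committed locations and from states enabling an urgent transition. Everything else reduces to the function-algebra facts in Lemma~\ref{lemma: expressions} and the TA-level Axioms~VII--XI, applied locally.
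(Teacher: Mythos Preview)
Your proposal is correct and follows essentially the same approach as the paper: both proofs first dispatch the structural requirements (disjointness of $E$ and $H\cup\{\mathsf{loc}\}$, $s^0\in S$ via $v^0\models I(l^0)$) and then verify Axioms~I--VI by case analysis on the rules \textbf{ACT}, \textbf{TIME}, \textbf{VIRT}, invoking the TA-level Axioms~VII--XI exactly where you do. Your write-up is slightly more thorough in that you explicitly address the disjointness of $\rightarrow^1$ and $\rightarrow^0$ and flag the mutual-exclusion between \textbf{VIRT} and \textbf{ACT}, points the paper leaves implicit; but the substance is the same.
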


Based on the TTSB semantics of TA, the LTS semantics of a given NTA $\mathcal{N} = \langle \mathcal{A}_1, \dots, \mathcal{A}_n \rangle$, can be represented by the following expression:

\vspace{-2mm}

\begin{equation*}
    \mathsf{LTS}((\mathsf{TTSB}(\mathcal{A}_1)\|\cdots\|\mathsf{TTSB}(\mathcal{A}_n)\backslash (\Delta \cup \mathcal{C}))
\end{equation*}





The following theorem, which is proven in 
Appendix~\ref{APP: proof of semantics equal}
in detail, states that the compositional semantics of NTAs with broadcast channels, defined in terms of TTSBs, is equivalent (modulo isomorphism) to the non-compositional semantics defined Definition~\ref{def: LTS semantics of NTA}.
This implies that our design of the compositional semantics of NTAs with broadcast channels is correct.

\begin{theorem} \label{Theorem: semantics equal}
    Let $\mathcal{N}=\langle\mathcal{A}_1,\dots,\mathcal{A}_n \rangle$ be an NTA. Then 
    \begin{equation*}
        \mathsf{LTS}(\mathcal{N}) \cong \mathsf{LTS}((\mathsf{TTSB}(\mathcal{A}_1)\|\cdots\|\mathsf{TTSB}(\mathcal{A}_n))\backslash (\Delta \cup \mathcal{C})).
    \end{equation*}
\end{theorem}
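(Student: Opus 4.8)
# Proof Proposal for Theorem (Semantics Equivalence)

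The plan is to establish an isomorphism between the two LTSs by exhibiting an explicit state bijection and then verifying that it preserves transitions in both directions. The natural candidate map is the identity on the shared variables together with the renaming that identifies the location-tracking variables: each TTSB$(\mathcal{A}_i)$ carries its own fresh variable $\mathsf{loc}$, so inside the parallel composition these become distinct variables (the compatibility machinery forces them apart via the internal-variable disjointness requirement, or one tacitly indexes them as $\mathsf{loc}_i$); the non-compositional semantics uses exactly $\mathsf{loc}_1,\dots,\mathsf{loc}_n$. So first I would set up the bijection $\Phi$ between the state set $S$ of $\mathsf{LTS}(\mathcal{N})$ and the state set of the composed TTSB after restriction, sending $v \in \mathit{Val}(\bigcup_i(V_i\cup\{\mathsf{loc}_i\}))$ to the valuation obtained by merging the projections $v\lceil(V_i\cup\{\mathsf{loc}_i\})$, and check that $\Phi$ is well-defined: a state of the composed TTSB is a merge $s_1\|\cdots\|s_n$ of pairwise-compatible component states with $s_i \models I_i(s_i(\mathsf{loc}_i))$, which is precisely the condition defining $S$ in Definition~\ref{def: LTS semantics of NTA} (note restriction by $\Delta\cup\mathcal{C}$ does not change the state set, only the transitions). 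I would also check $\Phi(s^0)$ is the initial state of the composed TTSB, which is immediate from the definitions of the two initial states.

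The core of the proof is a transition-correspondence argument, done by case analysis on the action label, matching the rules of Fig.~\ref{fig:NTA_UPPAAL_Semantics} against the combined effect of the TTSB-semantics rules (Fig.~\ref{fig:TA_TTSB_semantics}), the parallel-composition rules (Fig.~\ref{fig:Trans_Build_Rules}), and the restriction operation (Definition~\ref{def: Restriction for TTSB}). For the \textbf{TAU} and \textbf{SYNC} and \textbf{TIME} cases this is essentially the argument already carried out in \cite{Compositional_Abstraction}, since broadcast actions do not interact with these; the committedness side-conditions match because $Comm$ of a composed state holds iff some component state is committed, which propagates correctly through the $\|$ rules. The genuinely new case is \textbf{BCST}. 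Here I would argue: a $\delta!$-transition in $\mathcal{A}_i$ must, in the composed TTSB, first be lifted to a $\delta!$-transition of $\mathsf{TTSB}(\mathcal{A}_i)$ (rule \textbf{ACT}), then repeatedly combined via rule \textbf{SND} with the (genuine, by \textbf{ACT}) $\delta?$-transitions of exactly the components in $RS(\delta,i,s)$ and via rule \textbf{RCV}/\textbf{SND} with the \emph{virtual} self-loop $\delta?$-transitions (rule \textbf{VIRT}) of the components not in $RS(\delta,i,s)$ — Axiom~V guarantees every component offers \emph{some} $\delta?$-transition so the composition is always total over all $n$ components — yielding a $\delta!$-labeled transition of the full composite, which the restriction by $\Delta\cup\mathcal{C}$ then relabels to $\tau$ (clause 3 of Definition~\ref{def: Restriction for TTSB}, applicable exactly when $Comm(s)\Rightarrow b$, matching the third line of \textbf{BCST}); when instead $Comm(s)\wedge\neg b$ the transition is deleted, matching the committedness side-condition of \textbf{BCST}. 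The virtual transitions leave the relevant state unchanged, so the resulting target state agrees with $s'$ from \textbf{BCST}; here I would invoke Axiom~VI / Axiom~XI to know the $\delta?$-updates do not touch external variables, so the left-to-right ordering of updates in \textbf{BCST} is consistent with the order-independent $\|$-composition (using Lemma~\ref{lemma: expressions}, particularly $r'\|s' = r'\lhd s'$ as already noted after rule \textbf{SND}), and to confirm the UPPAAL update-order $\rho_i$-then-$\rho_{j_1},\dots,\rho_{j_m}$ produces the same valuation. The converse direction (every transition of the restricted composite arises from one of the four NTA rules) follows by unwinding the inductive definition of the least relations $\rightarrow^1,\rightarrow^0$: any $\delta!$- or $\delta?$-transition of the composite decomposes, by the rule shapes, into a unique "$\delta!$ in one component, $\delta?$ in all others" pattern.

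The main obstacle I anticipate is bookkeeping the \emph{committedness} through the \textbf{BCST} case. Because rules \textbf{SND} and \textbf{RCV} deliberately omit the side-condition $Comm(r)\vee Comm(s)\Rightarrow b\vee b'$ (for associativity, per Example~\ref{exp: committedness}), the intermediate $\delta!$-labeled transitions of the composite may exist even when a component sits in a committed location without participating with a committed transition; it is only the restriction step that prunes these. So I must show that the combination of "\textbf{SND}/\textbf{RCV} builds the transition unconditionally" followed by "restriction deletes it iff $Comm(s)\wedge\neg b$" reproduces exactly the enabledness condition of \textbf{BCST}, and moreover that the virtual $\delta?$-loops (which are always uncommitted, by design in rule \textbf{VIRT}) never spuriously flip the composite transition's committedness bit $b$ — this is exactly why the excerpt insists the \textbf{VIRT} transition be non-committed, and I would point to that. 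A secondary subtlety is the treatment of urgent transitions in rule \textbf{TIME} of Fig.~\ref{fig:NTA_UPPAAL_Semantics}: I would need Axiom~X to ensure urgent transitions are $\tau$-labeled with clock-independent guards so that their "time-blocking" effect is faithfully captured by whatever urgency mechanism the TTSB \textbf{TIME} rule and $\mathsf{TTSB}(\mathcal{A})$'s \textbf{TIME} rule encode; if the excerpt's $\mathsf{TTSB}(\mathcal{A})$ definition handles urgency the same way as \cite{Compositional_Abstraction}, this case is inherited. Everything else is a routine (if lengthy) rule-chase, and I would organize it as: (i) $\Phi$ bijective and initial-state-preserving; (ii) forward simulation by case on the NTA rule; (iii) backward simulation by induction on the composite's transition derivation; (iv) conclude $\Phi$ is an LTS isomorphism.
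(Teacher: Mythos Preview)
Your proposal is correct and follows essentially the same approach as the paper's proof: identify state sets and initial states directly from the definitions, defer the \textbf{TAU}/\textbf{SYNC}/\textbf{TIME} cases to~\cite{Compositional_Abstraction} (modulo the new $Comm$ lemma), and handle the new \textbf{BCST} case by lifting the sender via \textbf{ACT}, the genuine receivers via \textbf{ACT}, the non-receivers via \textbf{VIRT}, combining through \textbf{SND}/\textbf{RCV}, and letting the restriction step enforce the committedness side-condition. The only cosmetic difference is that the paper composes the $n{-}1$ receiver transitions first via repeated \textbf{RCV} and then applies \textbf{SND} once (rather than iterating \textbf{SND}), and it proves the target-state equality by an explicit induction on $|RS|$; by associativity (Theorem~\ref{theroem: Commutativity and Associativity}) both orderings are equivalent, so this does not affect the argument.
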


\section{Compositional Abstraction}\label{sec: Compositional Abstraction}
This section introduces the timed step simulation for TTSBs. It demonstrates the compositionality of the
induced preorder, providing formal support for the compositional abstraction of timed systems with broadcast synchronization.

\begin{definition}[Timed step simulation for TTSBs]\label{def: Timed step simulation for TTSBs}
    Two TTSBs $\mathcal{T}_1$ and $\mathcal{T}_2$ are comparable if  $E_1 = E_2$. Given comparable TTSBs $\mathcal{T}_1$ and $\mathcal{T}_2$, we say that a relation $\rm{R}\subseteq S_1\times S_2$ is a timed step simulation from $\mathcal{T}_1$ to $\mathcal{T}_2$, provided that $s_1^0\:\rm{R}\:s_2^0$ and if $r\:{\rm{R}}\:s$, then
    \begin{enumerate}
        \item $r\lceil E_1 = s\lceil E_2$,
        \item $\forall u \in {V\!al}(E_1):r[u]\:{\rm{R}}\:s[u]$,
        \item if $Comm(s)$ then $Comm(r)$,
        \item if $r\xrightarrow{a,b}r'$ then either there exists an $s'$ such that $s\xrightarrow{a,b}s'$ and $r'\:{\rm{R}}\:s'$, or $a=\tau$ and $r'\:{\rm{R}}\:s$.
    \end{enumerate}
\end{definition}

We denote $\mathcal{T}_1 \preceq \mathcal{T}_2$ when there exists a timed step simulation from $\mathcal{T}_1$ to $\mathcal{T}_2$. 
If $\mathcal{T}_1 \preceq \mathcal{T}_2$, then $\mathcal{T}_2$ can either simulate the transitions of $\mathcal{T}_1$ or do nothing if the transition is internal. However, $\mathcal{T}_2$ cannot introduce internal transitions that do not exist in $\mathcal{T}_1$. Therefore, the partial order $\preceq$ defined by timed step simulation describes a behavioral relation between timed systems. It requires that $\mathcal{T}_2$ preserves all external behaviors of $\mathcal{T}_1$, but it allows $\mathcal{T}_2$ to omit some internal behaviors. This property is essential for constructing compositional abstractions in Section~\ref{sec: Compositional Verification}.

Based on the definition, it is straightforward to establish that $\preceq$ is reflexive and transitive. Furthermore, we demonstrate that $\preceq$ is a precongruence for parallel composition, which is another main theorem in this paper.
The corresponding proof is in 
Appendix~\ref{APP: proof of precongruence}.

\begin{theorem} \label{theorem: precongruence}
    Let $\mathcal{T}_1$, $\mathcal{T}_2$, $\mathcal{T}_3$ be TTSBs with $\mathcal{T}_1$ and $\mathcal{T}_2$ comparable, $ \mathcal{T}_1 \preceq \mathcal{T}_2$, and both $\mathcal{T}_1$ and $\mathcal{T}_2$ compatible with $\mathcal{T}_3$. Then $\mathcal{T}_1\|\mathcal{T}_3\preceq\mathcal{T}_2\|\mathcal{T}_3$.
\end{theorem}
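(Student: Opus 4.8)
The plan is to exhibit an explicit witness relation. Let $\mathrm{R}$ be a timed step simulation from $\mathcal{T}_1$ to $\mathcal{T}_2$, and define
\[
\mathrm{R}' = \{\, (r\|t,\ s\|t) \mid r\,\mathrm{R}\,s,\ t\in S_3,\ r\heartsuit t,\ s\heartsuit t \,\}.
\]
Since $E_1=E_2$ the compatibility conditions $r\heartsuit t$ and $s\heartsuit t$ on the external parts coincide (using clause~1 of the simulation, $r\lceil E_1 = s\lceil E_2$, together with $H_1\cap V_3=H_3\cap V_1=\emptyset$ and the same for $\mathcal{T}_2$), so the two products live in $S_1\|S_3$ and $S_2\|S_3$ respectively, and $s_1^0\|s_3^0 \,\mathrm{R}'\, s_2^0\|s_3^0$. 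I would first check clauses~1--3 of Definition~\ref{def: Timed step simulation for TTSBs} for $\mathrm{R}'$: clause~1 because external parts of a parallel composition are determined by the external parts of the factors; clause~2 using Lemma~\ref{lemma: expressions}(5) to distribute the update $[u]$ over $\rhd$ and then clause~2 of $\mathrm{R}$ together with the observation that $u\in V\!al(E_1)=V\!al(E_2)$ updates the $\mathcal{T}_3$-part identically on both sides; clause~3 because $Comm(s\|t)$ means either $Comm(s)$ — handled by clause~3 of $\mathrm{R}$ giving $Comm(r)$ — or $Comm(t)$, and in the composed system a committed transition of one factor (subject to the side conditions in the rules) induces a committed transition of the product, so $Comm(r\|t)$.

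The heart of the argument is clause~4: given a transition $r\|t \xrightarrow{a,b} z$ of $\mathcal{T}_1\|\mathcal{T}_3$, I would do a case analysis on which rule of Fig.~\ref{fig:Trans_Build_Rules} produced it, and in each case either replay the same rule from $s\|t$ or, when $\mathcal{T}_2$ stutters on an internal move of $\mathcal{T}_1$, argue $z\,\mathrm{R}'\,s\|t$ directly. The rules where $\mathcal{T}_1$ is the active component (\textbf{EXT}, \textbf{TAU} with $i=1$) reduce to clause~4 of $\mathrm{R}$ applied to $r\xrightarrow{a,b}_1 r'$: if $\mathcal{T}_2$ matches with $s\xrightarrow{a,b}_2 s'$ we apply the same rule (checking the side condition $Comm(t)\Rightarrow b$ in \textbf{TAU} is preserved since $b$ is unchanged), and if $\mathcal{T}_2$ stutters ($a=\tau$, $r'\,\mathrm{R}\,s$) then $r'\|t \,\mathrm{R}'\, s\|t$ after checking compatibility via clause~1; note $r'\rhd t = r'\| t$ up to the override bookkeeping handled by Lemma~\ref{lemma: expressions}. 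The rules where $\mathcal{T}_3$ is the active component (\textbf{EXT}/\textbf{TAU} with $i=3$, and the pure-$\mathcal{T}_3$ halves of \textbf{TIME}) are symmetric and need clause~1 of $\mathrm{R}$ to know the external effect on the shared part is the same. For \textbf{TIME} with one delay step in each factor we use that delay transitions of $\mathcal{T}_1$ are matched (not stuttered, since $d\neq\tau$) by clause~4 of $\mathrm{R}$. The synchronization rules \textbf{SYNC}, \textbf{SND}, \textbf{RCV} are the delicate ones: here I must feed the already-updated state $s[r']$ (resp.\ the plain $s$) of the $\mathcal{T}_3$-side into the matching transition of $\mathcal{T}_2$, which is where clause~2 of $\mathrm{R}$ (closure under external updates, since $r'\lceil E$ is an external valuation) and Axiom~III (input-enabledness is stable under external updates) enter; crucially, because $\mathcal{T}_3$ is literally the same on both sides, its half of the synchronization ($s'[r']\xrightarrow{c?,b'}_3$ etc.) is available verbatim, and for \textbf{SND}/\textbf{RCV} Axiom~VI guarantees $r'\|s' = r'\lhd s'$ so the target-state bookkeeping matches. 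When the $\mathcal{T}_2$-side of a \textbf{SYNC} stutters I must rule this out or absorb it: since $\mathcal{T}_2$ can only stutter on a $\tau$ of $\mathcal{T}_1$ and the $c!$-move of $\mathcal{T}_1$ is not a $\tau$, clause~4 forces a genuine match, so no stuttering occurs in the synchronization cases.

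I expect the main obstacle to be the synchronization cases, specifically checking that the side condition $Comm(r\|t)\vee Comm(s\|t)\Rightarrow b\vee b'$ required by rule \textbf{SYNC} on the $\mathcal{T}_2\|\mathcal{T}_3$ side follows from the corresponding condition that held on the $\mathcal{T}_1\|\mathcal{T}_3$ side. This needs clause~3 of $\mathrm{R}$ run in the contrapositive direction — $Comm(s)\Rightarrow Comm(r)$ — combined with the fact that the commitment bit $b$ on $\mathcal{T}_2$'s transition equals that on $\mathcal{T}_1$'s (it is carried along in clause~4 of $\mathrm{R}$), and the bit from $\mathcal{T}_3$ is identical on both sides. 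A secondary nuisance is the purely clerical but error-prone manipulation of $\rhd$, $\lhd$, $\|$ and $[\cdot]$ in tracking target states; I would discharge all of these by citing the appropriate clauses of Lemma~\ref{lemma: expressions} rather than expanding definitions. Once clause~4 is verified in every case, $\mathrm{R}'$ is a timed step simulation from $\mathcal{T}_1\|\mathcal{T}_3$ to $\mathcal{T}_2\|\mathcal{T}_3$, establishing $\mathcal{T}_1\|\mathcal{T}_3 \preceq \mathcal{T}_2\|\mathcal{T}_3$.
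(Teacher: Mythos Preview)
Your proposal is correct and follows essentially the same approach as the paper: define the candidate relation by pairing $r\|t$ with $s\|t$ whenever $r$ is simulated by $s$ and $t$ is the common $\mathcal{T}_3$-state, then verify the four clauses of Definition~\ref{def: Timed step simulation for TTSBs} by case analysis on the parallel-composition rules, invoking Lemma~\ref{lemma: committed states} for clause~3 and Lemma~\ref{lemma: expressions} for the override bookkeeping. The paper likewise defers rules \textbf{EXT}, \textbf{TAU}, \textbf{SYNC}, \textbf{TIME} to the earlier TTS setting and spells out only \textbf{SND} and \textbf{RCV}; your identification of the \textbf{SYNC} side condition and the update-closure use of clause~2 matches the paper's reasoning (one minor slip: for clause~2 the update $u$ lives in $V\!al(E_1\cup E_3)$, not $V\!al(E_1)$, so you must restrict to $E_1$ before invoking clause~2 of $\mathrm{R}$, as the paper does).
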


The timed step simulation preorder $\preceq$ is typically not a precongruence for restriction 
because a committed state will be turned into an uncommitted one if all its outgoing committed transitions are removed during the restriction process, which may violate the third condition of the timed step simulation. To address this, we provide the following theorem to guarantee that the timed step simulation preorder is a precongruence for restriction.
The corresponding proof is provided in 
Appendix~\ref{APP: proof of side condition}.

\begin{theorem} \label{Theorem: side condition}
    Let $\mathcal{T}_1$ and $\mathcal{T}_2$ be comparable TTSBs such that $\mathcal{T}_1 \preceq \mathcal{T}_2$. Let $C \subseteq \Delta \cup \mathcal{C}\cup E_1$. If for any committed state $r$ of $\mathcal{T}_1$, there exists $a\in Act - \{\delta?,c!,c?\mid \delta \in C \cap \Delta, c\in C\cap \mathcal{C}\}$ such that $r\xrightarrow{a,1}_1$, then $\mathcal{T}_1 \backslash C \preceq  \mathcal{T}_2 \backslash C$.
\end{theorem}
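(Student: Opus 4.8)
The plan is to show that the \emph{same} relation $\mathrm{R}$ that witnesses $\mathcal{T}_1 \preceq \mathcal{T}_2$ is also a timed step simulation from $\mathcal{T}_1\backslash C$ to $\mathcal{T}_2\backslash C$. This is plausible because, by Definition~\ref{def: Restriction for TTSB}, restriction leaves the state sets, the initial states, and the sources/targets of all transitions untouched; it only deletes certain transitions, relabels some $\delta!$-transitions (with $\delta\in C$) to $\tau$, and moves the variables of $E\cap C$ from external to internal. Since $E_1=E_2$ we get $E_1-C=E_2-C$, so $\mathcal{T}_1\backslash C$ and $\mathcal{T}_2\backslash C$ are comparable, and $s_1^0\,\mathrm{R}\,s_2^0$ is inherited. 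What remains is to re-check, for an arbitrary pair $r\,\mathrm{R}\,s$, the four clauses of Definition~\ref{def: Timed step simulation for TTSBs} with respect to the restricted systems. Clause~1 is immediate by restricting $r\lceil E_1 = s\lceil E_2$ to $E_1-C$. For clause~2, given $u\in V\!al(E_1-C)$ I would lift it to $u' \triangleq u\,\|\,(r\lceil(E_1\cap C))\in V\!al(E_1)$; using clause~1 of the original simulation (so that $r$ and $s$ agree on $E_1\cap C$) one checks $r[u']=r[u]$ and $s[u']=s[u]$, hence the original instance $r[u']\,\mathrm{R}\,s[u']$ yields exactly $r[u]\,\mathrm{R}\,s[u]$.

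Clause~3 is the conceptual core and the only place the side condition enters. Assume $Comm(s)$ in $\mathcal{T}_2\backslash C$. A committed transition of $\mathcal{T}_2\backslash C$ always descends from a committed transition of $\mathcal{T}_2$ (neither deletion nor relabeling can manufacture committedness), so $Comm(s)$ holds in $\mathcal{T}_2$, and clause~3 of the original simulation gives $Comm(r)$ in $\mathcal{T}_1$. Now invoke the hypothesis of the theorem: there is $a\in Act-\{\delta?,c!,c?\mid\delta\in C\cap\Delta,\ c\in C\cap\mathcal{C}\}$ with $r\xrightarrow{a,1}_1$. By Definition~\ref{def: Restriction for TTSB} this transition is never deleted, and if $a=\delta!$ with $\delta\in C$ it is merely relabeled to a committed $\tau$-transition (rule~3 applies, since $Comm(r)\Rightarrow 1$); in every case $r$ still has a committed outgoing transition in $\mathcal{T}_1\backslash C$, i.e.\ $Comm(r)$ holds there.

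For clause~4, take $r\xrightarrow{a,b}r'$ in $\mathcal{T}_1\backslash C$ and trace it back: it is either an unaltered transition of $\mathcal{T}_1$ (equivalently, $a$ mentions no channel of $C$), or a $\tau$-transition produced by rule~3 from some $r\xrightarrow{\delta!,b}_1 r'$ with $\delta\in C$ and $Comm(r)\Rightarrow b$ in $\mathcal{T}_1$. In the first subcase, the original clause~4 supplies either $s\xrightarrow{a,b}_2 s'$ with $r'\,\mathrm{R}\,s'$ --- and this transition, whose label again mentions no channel of $C$, survives unaltered in $\mathcal{T}_2\backslash C$ --- or $a=\tau$ with $r'\,\mathrm{R}\,s$, which is already the stutter option. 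In the second subcase the original clause~4, applied to the $\delta!$-transition (which cannot stutter), yields $s\xrightarrow{\delta!,b}_2 s'$ with $r'\,\mathrm{R}\,s'$; I then need this transition to be \emph{relabeled} rather than \emph{deleted} in $\mathcal{T}_2\backslash C$, i.e.\ $Comm(s)\Rightarrow b$ in $\mathcal{T}_2$. If $b=1$ this is trivial; if $b=0$, then $Comm(r)\Rightarrow b$ forces $\neg Comm(r)$ in $\mathcal{T}_1$, and clause~3 of the original simulation (read as $Comm(s)\Rightarrow Comm(r)$) gives $\neg Comm(s)$ in $\mathcal{T}_2$, making the implication vacuous. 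Hence $s\xrightarrow{\tau,b}s'$ in $\mathcal{T}_2\backslash C$ with $r'\,\mathrm{R}\,s'$, matching $r\xrightarrow{\tau,b}r'$.

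I expect the main obstacle to be precisely this interplay between committedness and restriction that surfaces in clauses~3 and~4: one must ensure that whenever the abstract side $\mathcal{T}_2\backslash C$ can still perform a committed transition, or can still answer a relabeled $\delta!$-move, the concrete side $\mathcal{T}_1\backslash C$ has not silently lost its committedness during restriction. The side condition is exactly the hypothesis that prevents this loss, and the small case split on $b$ above ($b=1$ trivial, $b=0$ handled via clause~3 of the original simulation) is the delicate bookkeeping that makes the two restricted systems line up.
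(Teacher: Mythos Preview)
Your proposal is correct and follows the same approach as the paper: reuse the simulation relation $\mathrm{R}$ witnessing $\mathcal{T}_1\preceq\mathcal{T}_2$ and re-verify the four clauses after restriction. In fact your argument is considerably more careful than the paper's very terse proof, which dismisses clauses~1--2 as ``obvious'' and clause~4 with ``the restriction operator makes the same modifications''; your explicit treatment of the relabeled $\delta!$-transitions in clause~4 (the case split on $b$ using clause~3 in contrapositive) fills in exactly the detail the paper omits.
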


Intuitively, the side condition of Theorem~\ref{Theorem: side condition} ensures that a committed state in $\mathcal{T}_1$ is still committed in $\mathcal{T}_1 \backslash C$. This condition is not problematic in practice, as a well-defined timed system model should ensure that from any committed state, there is always an executable transition, which could be labeled with an input broadcast action $\delta!$
or internal action $\tau$, thereby satisfying the side condition.

\section{Compositional Verification}\label{sec: Compositional Verification}
This section shows how our theorems help reduce the state space in verifying timed systems with broadcast synchronization.  
We first present a verification framework for \emph{safety properties}, based on the theorems we develop. Since most timed automata model checkers, except U{\scriptsize PPAAL}, do not support non-blocking broadcast, existing benchmark suites are limited. To demonstrate the effectiveness of our framework, we apply it to two case studies: a producer-consumer system and the clock synchronization protocol from~\cite{new_case_try}.
All the experiments\footnote{All the U{\scriptsize PPAAL} models and raw experiment data for this paper are available at \url{https://github.com/zeno-98/CAV-2025-333}.} in this paper were conducted using the U{\scriptsize PPAAL} 5.0 tool on a 4.0 GHz AMD Ryzen 5 2600X processor with 32 GB of RAM, running 64-bit Windows 10, with a timeout of 3,600 seconds. 

\subsection{Verification Framework for Safety Properties}\label{Verification Framework for Safety Properties}

This paper focuses on verifying safety properties, a fundamental class of specifications in timed system verification. Intuitively, they assert that ``something bad never happens,'' capturing the absence of undesirable behaviors. We formally define the safety properties of NTA as follows.

\begin{definition}[Safety Properties of NTA]\label{Safety Properties of NTA}
    Let $\mathcal{N}=\langle\mathcal{A}_1,\dots,\mathcal{A}_n\rangle$ be an NTA, and $P$ be a property over a subset of $V=\bigcup_{i=1}^n (V_i\cup{\mathsf{loc}_i})$. We say that $P$ is a safety property of $\mathcal{N}$, notation $\mathcal{N}\models \forall \square P$, iff for all reachable states $s$ of $\mathsf{LTS}(\mathcal{N})$, $s\models P$.
\end{definition}

The following theorem states that, given a timed system and a property, we can replace some system components with their corresponding abstractions to obtain an abstract version of the original system. If the property is proven to be a safety property of the abstraction, it must also be a safety property of the original system. Naturally, the property should not depend on the internal variables or locations of the components to be abstracted, as they may be merged or even deleted during the abstraction process.

\begin{theorem}\label{theorem: verify safety properties}
    Let $\mathcal{N}=\langle\mathcal{A}_1,\dots,\mathcal{A}_i,\mathcal{A}_{i+1},\dots,\mathcal{A}_n\rangle$ and $\mathcal{N}'=\langle\mathcal{B}_1,\dots,\mathcal{B}_j,\mathcal{A}_{i+1},$ $\dots,\mathcal{A}_n\rangle$ be two NTAs and $P$ be a property over $\hat V=\bigcup_{k=i+1}^n (V_k\cup{\mathsf{loc}_k})$. 
    Let ${\hat E} = \bigcup_{k=1}^i E_k - \hat V$, $\mathcal{T}_a = (\mathsf{TTSB}(\mathcal{A}_1)\|\dots\|\mathsf{TTSB}(\mathcal{A}_i))\backslash({\Delta}\cup\mathcal{C}\cup{\hat E} -\Sigma(\mathcal{T}_c))$, $\mathcal{T}_b= (\mathsf{TTSB}(\mathcal{B}_1)\|\dots\|\mathsf{TTSB}(\mathcal{B}_j))\backslash({\Delta}\cup\mathcal{C}\cup{\hat E}-\Sigma(\mathcal{T}_c))$, and $\mathcal{T}_c = \mathsf{TTSB}(\mathcal{A}_{i+1})\|\dots$ $\|\mathsf{TTSB}(\mathcal{A}_n)$. If $P$ is a safety property of $\mathcal{N}'$, $\mathcal{T}_a$ and $\mathcal{T}_b$ are comparable with $\mathcal{T}_a\preceq \mathcal{T}_b$, and $\mathcal{T}_a\|\mathcal{T}_c$ satisfies the side condition of Theorem~\ref{Theorem: side condition} with $C={\Delta}\cup\mathcal{C}$, then $P$ is also a safety property of $\mathcal{N}_1$.
\end{theorem}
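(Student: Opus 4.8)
The plan is to transport the statement to the compositional semantics, push the restrictions into the right place, and then apply the precongruence and side-condition theorems. By Theorem~\ref{Theorem: semantics equal}, $\mathsf{LTS}(\mathcal{N})$ and $\mathsf{LTS}(\mathcal{N}')$ are isomorphic to the underlying LTSs of $(\mathsf{TTSB}(\mathcal{A}_1)\|\cdots\|\mathsf{TTSB}(\mathcal{A}_n))\backslash(\Delta\cup\mathcal{C})$ and $(\mathsf{TTSB}(\mathcal{B}_1)\|\cdots\|\mathsf{TTSB}(\mathcal{B}_j)\|\mathsf{TTSB}(\mathcal{A}_{i+1})\|\cdots\|\mathsf{TTSB}(\mathcal{A}_n))\backslash(\Delta\cup\mathcal{C})$, respectively. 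Writing $\mathcal{T}_A=\mathsf{TTSB}(\mathcal{A}_1)\|\cdots\|\mathsf{TTSB}(\mathcal{A}_i)$ and $\mathcal{T}_B=\mathsf{TTSB}(\mathcal{B}_1)\|\cdots\|\mathsf{TTSB}(\mathcal{B}_j)$, commutativity and associativity of $\|$ (Theorem~\ref{theroem: Commutativity and Associativity}) let me regroup the factors, giving $\mathsf{LTS}(\mathcal{N})\cong\mathsf{LTS}((\mathcal{T}_A\|\mathcal{T}_c)\backslash(\Delta\cup\mathcal{C}))$ and $\mathsf{LTS}(\mathcal{N}')\cong\mathsf{LTS}((\mathcal{T}_B\|\mathcal{T}_c)\backslash(\Delta\cup\mathcal{C}))$.

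Next I reconcile $\mathcal{T}_A,\mathcal{T}_B$ with the restricted TTSBs of the statement. Put $C'=(\Delta\cup\mathcal{C}\cup\hat E)-\Sigma(\mathcal{T}_c)$, so $\mathcal{T}_a=\mathcal{T}_A\backslash C'$, $\mathcal{T}_b=\mathcal{T}_B\backslash C'$, and $C'\cup(\Delta\cup\mathcal{C})=\Delta\cup\mathcal{C}\cup\hat E$. The crucial ingredient is a restriction law: since the subtraction of $\Sigma(\mathcal{T}_c)$ guarantees that no channel of $C'$ occurs in $\mathcal{T}_c$ (its $\delta?$-transitions for such $\delta$ being only the uncommitted self-loops forced by Axiom~V), and $C'\cap E$ is disjoint from $V_c=\hat V$ by the definition of $\hat E$, restricting $\mathcal{T}_A$ by $C'$ may be deferred past the composition with $\mathcal{T}_c$; combined with the fact that restricting by external variables does not alter the underlying LTS this yields
\[
\mathsf{LTS}((\mathcal{T}_a\|\mathcal{T}_c)\backslash(\Delta\cup\mathcal{C}))\;=\;\mathsf{LTS}((\mathcal{T}_A\|\mathcal{T}_c)\backslash(\Delta\cup\mathcal{C}\cup\hat E))\;=\;\mathsf{LTS}((\mathcal{T}_A\|\mathcal{T}_c)\backslash(\Delta\cup\mathcal{C})),
\]
and the analogue for $\mathcal{T}_b,\mathcal{T}_B$. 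Hence $\mathsf{LTS}(\mathcal{N})\cong\mathsf{LTS}((\mathcal{T}_a\|\mathcal{T}_c)\backslash(\Delta\cup\mathcal{C}))$ and $\mathsf{LTS}(\mathcal{N}')\cong\mathsf{LTS}((\mathcal{T}_b\|\mathcal{T}_c)\backslash(\Delta\cup\mathcal{C}))$.

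Now the abstraction step. Pairwise compatibility within $\mathcal{N}$ and $\mathcal{N}'$, together with the fact that restriction only moves variables from the external to the internal set, gives that $\mathcal{T}_a$ and $\mathcal{T}_b$ are each compatible with $\mathcal{T}_c$; so from $\mathcal{T}_a\preceq\mathcal{T}_b$ and Theorem~\ref{theorem: precongruence} we obtain $\mathcal{T}_a\|\mathcal{T}_c\preceq\mathcal{T}_b\|\mathcal{T}_c$, and these are comparable since $\mathcal{T}_a,\mathcal{T}_b$ are and $\mathcal{T}_c$ is shared. By the assumed side condition and Theorem~\ref{Theorem: side condition} with $C=\Delta\cup\mathcal{C}$ it follows that $(\mathcal{T}_a\|\mathcal{T}_c)\backslash(\Delta\cup\mathcal{C})\preceq(\mathcal{T}_b\|\mathcal{T}_c)\backslash(\Delta\cup\mathcal{C})$. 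A timed step simulation maps every reachable state of its source LTS to a reachable state of its target LTS (induction on a reachability path, where clause~4 of Definition~\ref{def: Timed step simulation for TTSBs}, in particular the ``stay put'' option for $\tau$, keeps us at an already reachable state); moreover the witness produced in the proof of Theorem~\ref{theorem: precongruence} has the shape $\{(p\|w,\,q\|w)\mid\dots\}$, so related states share their $\mathcal{T}_c$-component and hence agree on $V_c=\hat V$, over which $P$ is defined. Since $\mathcal{N}'\models\forall\square P$, every reachable state of $(\mathcal{T}_b\|\mathcal{T}_c)\backslash(\Delta\cup\mathcal{C})$ satisfies $P$; therefore so does every reachable state of $(\mathcal{T}_a\|\mathcal{T}_c)\backslash(\Delta\cup\mathcal{C})$, i.e.\ of $\mathsf{LTS}(\mathcal{N})$, which is exactly $\mathcal{N}\models\forall\square P$.

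I expect the restriction law in the second paragraph to be the main obstacle: $\mathsf{LTS}((\mathcal{T}_A\backslash C')\|\mathcal{T}_c)=\mathsf{LTS}((\mathcal{T}_A\|\mathcal{T}_c)\backslash C')$ is not immediate from the definitions, because clauses~2 and~3 of Definition~\ref{def: Restriction for TTSB} branch on whether the source state is committed, and a state can be committed in $\mathcal{T}_A\|\mathcal{T}_c$ while the corresponding state of $\mathcal{T}_A$ is not (the $\mathcal{T}_c$-side being committed). One must check case by case that an uncommitted $\delta!$-transition out of a committed source is suppressed identically on the two routes, namely by removal (clause~2, restricting late) and by the $Comm(s)\Rightarrow b$ guard of rule \textbf{TAU} acting on its $\tau$-replacement (restricting early); this is where Axioms~V and~VI do the work. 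A smaller point is justifying that the precongruence witness really fixes the $\mathcal{T}_c$-component, which uses clause~1 of the simulation to force every skippable $\tau$ to leave the external variables unchanged.
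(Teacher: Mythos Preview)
Your approach is essentially the same as the paper's: apply the semantic-equivalence theorem, then the precongruence theorem, then the side-condition theorem, and finally transfer $P$ using the explicit shape $\{(r\|t,\,s\|t')\mid r\,{\rm Q}\,s\wedge t=t'\}$ of the simulation witness, which fixes the $\mathcal{T}_c$-component and hence the valuation on $\hat V$. The paper's proof is in fact terser than yours---it directly asserts $\mathsf{LTS}(\mathcal{N})\cong\mathsf{LTS}((\mathcal{T}_a\|\mathcal{T}_c)\backslash(\Delta\cup\mathcal{C}))$ as following from Theorem~\ref{Theorem: semantics equal} without spelling out the restriction law you flag as the main obstacle---so your extra care in the second and fourth paragraphs fills a gap the paper leaves implicit, but the overall line is identical.
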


The proof of Theorem~\ref{theorem: verify safety properties} is in
Appendix~\ref{APP: proof of verify safety properties}.
By this theorem, we can check property $P$ is a safety property of NTA  $\mathcal{N}=\langle\mathcal{A}_1,\dots,\mathcal{A}_n\rangle$ in a compositional way using the following steps:
\begin{enumerate}

    \item 
    Partition $\mathcal{N}$ appropriately into two parts $\mathcal{A}_1,\dots,\mathcal{A}_i$ and $\mathcal{A}_{i+1},\dots,\mathcal{A}_n$, such that $P$ does not depend on the internal variables and locations of $\mathcal{A}_1,\dots,\mathcal{A}_i$.

    \item 
    Construct suitable TAs $\mathcal{B}_1,\dots,\mathcal{B}_j$, such that $(\mathsf{TTSB}(\mathcal{A}_1)\|\dots\|\mathsf{TTSB}(\mathcal{A}_i))\backslash C$ $\preceq (\mathsf{TTSB}(\mathcal{B}_1)\|\dots\|\mathsf{TTSB}(\mathcal{B}_j))\backslash C$, where $C$ is the set of broadcast channels, binary channels and external variables unused in $\mathcal{A}_{i+1},\dots,\mathcal{A}_n$.

    \item 
    Use model-checking tool U{\scriptsize PPAAL} to verify if $\mathcal{N'}\models \forall\square P$, where $\mathcal{N}'=\langle \mathcal{B}_1,\dots,\mathcal{B}_j,\mathcal{A}_{i+1},\dots,\mathcal{A}_n\rangle$. If it does, then by Theorem~\ref{theorem: verify safety properties}, $P$ is a safety property of $\mathcal{N}$. Otherwise, return to step 1 or 2 to try alternative partitioning methods or construct another suitable group of abstract automata. 

\end{enumerate}

\subsection{Case Study I: Producer-Consumer System}\label{Case Study I}
We first apply our framework to a producer-consumer system, which includes one producer, $\mathsf{N}$ consumers, and a coordinator. The producer generates data packets at fixed time intervals and stores the packets in a buffer for consumers to consume later. Each consumer obtains an exclusive right to consume a data packet. Once obtained, the consumer will either consume a packet and release the right or defer the consumption. The coordinator ensures that once a consumer obtains the right, it should consume a packet within a required interval. These components communicate via broadcast channels, binary channels, and a shared variable. In addition, the system model also has several committed locations. 
We manually constructed an abstraction that combines the coordinator and all consumers into a single timed automaton $\mathcal{A}$. This abstraction consists of five locations (including one committed location), five transitions, and one internal clock. It captures the behavior of consumers consuming data packets under the coordinator's control. In particular, it reflects all possible time intervals between two consecutive data consumption events. Notably, this abstraction is independent of the parameter $\mathsf{N}$, that is, $\mathcal{A}$ applies to any number of consumers and can simulate their external behavior together with that of the coordinator.
Due to the page limit, the NTA model, the abstracted model, and the corresponding proof are given in 
Appendix~\ref{APP: Detail of Case Study I}.


Because of the limited buffer size of the producer, given a certain parameter setting, we expect that there will be no overflow during system operation. We apply both the traditional monolithic verification ($\mathsf{MV}$), which directly 
checks the property of the original model and the compositional verification ($\mathsf{CV}$) described in Section~\ref{sec: Compositional Verification}.
Since U{\scriptsize PPAAL} is currently the only model checker supporting non-blocking broadcast synchronization, we use it exclusively to perform $\mathsf{MV}$.
The experimental results confirm that no overflow occurs, and Table~\ref{tab:example_PCS} presents the average verification time in seconds over five runs for different values of $\mathsf{N}$ with $\mathsf{MV}$ and $\mathsf{CV}$. 

\begin{table}[h]
    \vspace{-6mm}
    \caption{Verification time of no overflow}
    \vspace{2mm}
    \centering
    
    \begin{tabular}{P{3pt} c P{5pt} c P{5pt} c P{5pt} c P{5pt} c P{5pt} c P{5pt} c P{5pt} c P{3pt}} 
        \toprule
        &$\mathsf{N}$ && \textbf{9} && \textbf{10} && \textbf{11} && \textbf{12} && \textbf{13} && \textbf{14} && \textbf{15} &  \\
        \midrule
        &$\mathsf{MV}$ && 3.769 && 11.778 && 36.767 && 109.843 && 412.710 && 1656.301 && timeout & \\
        &$\mathsf{CV}$ && 0.005 && 0.005 && 0.005 && 0.005 && 0.005 && 0.005 && 0.005 &\\
        \bottomrule
    \end{tabular}

    \label{tab:example_PCS}
    \vspace{-3mm}
\end{table}

The $\mathsf{MV}$ row shows that the verification time required by the traditional monolithic method grows exponentially as $n$ increases and exceeds $3,600$ seconds when $\mathsf{N}=15$. The $\mathsf{CV}$ row presents the verification time by our compositional verification method, which implies that $\mathsf{CV}$ outperforms $\mathsf{MV}$ significantly.
Since the abstraction we built simulates the compositional behaviors of the coordinator and all the consumers for any $\mathsf{N} \ge 1$, we obtain the verification results for the system with an arbitrary number of consumers in 5 milliseconds.


\subsection{Case Study II: Clock Synchronization Protocol}\label{Case Study II}
Secondly, we turn to the clock synchronization protocol presented in~\cite{new_case_try} as a case study. The Dutch company Chess develops this protocol to address a critical challenge in designing wireless sensor networks (WSNs): the hardware clocks of sensors in the network may drift. So, ensuring clock synchronization of the protocol is vital to guarantee communication reliability in the networks.

The NTA model of the protocol consists of $\mathsf{N}$ nodes, named $0,\dots,\mathsf{N}-1$. 
These nodes take turns broadcasting messages to the others in a fixed order to perform clock synchronization. After completing one round, they wait for a specific period and start the next cycle. Each node internally contains three TAs: \textbf{Clock}, \textbf{WSN}, and \textbf{Synchronizer}, which communicate with each other through broadcast channels and shared variables.
Automaton \textbf{Clock} models the node's hardware clock, which may drift, 
automaton \textbf{WSN} takes care of broadcasting messages, 
and automaton \textbf{Synchronizer} resynchronizes the hardware clock upon receipt of a message. 
As can be seen, in the designed model, two types of broadcast synchronization should exist: the internal type in each node and the external type among the nodes.  The NTA model provided in~\cite{new_case_try} and the corresponding abstractions we build are in 
Appendix~\ref{APP: Detail of Case Study II}. 

Given a certain parameter setting, we apply both $\mathsf{MV}$ and $\mathsf{CV}$ methods to check whether the NTA model satisfies the property: the hardware clocks of all nodes remain synchronized during network operation. The experimental results show that this property is satisfied, and  Table~\ref{tab:example_WSN} presents the average verification time in seconds over five runs for different values of $\mathsf{N}$.


\begin{table}[h]
    \vspace{-6mm}
    \caption{Verification time of hardware clocks synchronization}
    \vspace{2mm}
    \centering
    \begin{tabular}{P{3pt} c P{10pt} c P{10pt} c P{10pt} c P{10pt} c P{10pt} c P{3pt}} 
        \toprule
        &\textbf{$\mathsf{N}$} && \textbf{3} && \textbf{4} && \textbf{5} && \textbf{6} && \textbf{7} & \\
        \midrule
        &$\mathsf{MV}$ && 0.070 && 2.256 && 185.641 && timeout && timeout &\\
        
        &$\mathsf{CV}$ && 0.303 && 0.936 && 2.230 && 4.309 && 7.629 & \\
        \bottomrule
    \end{tabular}
    
    \label{tab:example_WSN}
    \vspace{-4mm}
\end{table}



The $\mathsf{MV}$ row shows that the verification time required by the traditional monolithic method grows significantly as $\mathsf{N}$ increases. To apply $\mathsf{CV}$, we first select two nodes, $a$ and $b$, from the $\mathsf{N}$ nodes with $0\le a<b\le \mathsf{N}-1$, and abstract the remaining $\mathsf{N}-2$ nodes into a single TA $\mathcal{A}$. 
This abstraction is also constructed manually. It contains three locations, $4\mathsf{N}$ transitions, and an internal clock. It abstracts away the details of how the $\mathsf{N}-2$ intermediate nodes handle received synchronization messages, as well as the specific order in which they broadcast messages during a round. Instead, it ensures that exactly $\mathsf{N}$ clock synchronization events occur in each round and captures the possible time intervals between two consecutive clock synchronization events.
Based on this, we check whether or not the abstracted NTA composed of the models of nodes $a$, $b$ and the abstraction $\mathcal{A}$, satisfies the property that the hardware clocks of the two selected nodes always remain synchronized. Clearly, if the property is satisfied for all the choices of $a$ and $b$, we conclude that the original system satisfies the target property. 
Note that the system is not strictly symmetric, as nodes broadcast messages periodically in a fixed order and different choices of $a$ and $b$ result in different time intervals between their broadcast actions. Therefore, we must enumerate all possible pairs of $a$ and $b$, and the total verification time of our compositional method is the sum of the checking times for all these cases.
For instance, when $\mathsf{N}=6$, we need to verify $C_6^2=\frac{(6\times5)}{2} = 15$ different cases. The $\mathsf{CV}$ row demonstrates the total verification time required by our compositional verification method for each $\mathsf{N}$. Although in the case of $\mathsf{N}=3$, our method takes a slightly longer time since $\mathcal{A}$ has more behaviors than a single node, it demonstrates significant efficiency advantages when $\mathsf{N}\ge5$.

\section{Conclusion} \label{sec: conclusion}

This paper proposes the first compositional abstraction framework for timed systems with broadcast synchronization, providing a method to reduce state space in model checking. Specifically, this framework focuses on timed systems modeled as NTAs in U{\scriptsize PPAAL}, and also supports binary synchronization, shared variables, and committed locations. For this purpose, we first define TTSB, which extends LTSs with state variables, transition commitments, and time-related behaviors, along with corresponding parallel composition and restriction operations. We prove that the parallel composition operator is both commutative and associative. Secondly, we provide compositional and non-compositional semantics for NTAs with broadcast synchronization in U{\scriptsize PPAAL} and prove their equivalence. Thirdly, we define the timed step simulation relation for TTSBs and prove it is a precongruence for parallel composition. Finally, we demonstrate that safety properties verified on abstractions are preserved in the original models and validate the efficiency of our framework through two case studies.
Future work includes extending the framework to support other U{\scriptsize PPAAL} features, such as urgent channels and priorities. We also plan to integrate abstraction refinement methods~\cite{Abstraciton_Refinement} to develop an automated compositional verification workflow similar to those in~\cite{Chy_automatic,Kim_automatic}.
This would enable the application of our compositional abstraction framework to a broader range of real-world cases, such as the timing-based broadcast algorithms discussed in~\cite{Nancy_Book}.


\newpage
\bibliographystyle{splncs04}
\bibliography{reference}

\appendix
\newpage
\section{Proof of Lemma~\ref{lemma: Composition well-defined}} \label{APP: proof of Composition well-defined}

For proving Lemma~\ref{lemma: Composition well-defined}, that is, the composition of two TTSBs is still a TTSB, the following lemma is necessary: a state of the composition is committed iff one of its component states is committed. 

\begin{lemma} \label{lemma: committed states}
    Let $\mathcal{T}_1$ and $\mathcal{T}_2$ be compatible TTSBs. Let $r\in S_1$ and $s\in S_2$ such that $r\heartsuit s$, then $Comm(r\|s)\Leftrightarrow Comm(r)\vee Comm(s)$.
\end{lemma}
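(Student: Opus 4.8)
The plan is to prove the biconditional by unfolding the definition of $Comm$ and reasoning about which parallel composition rules can produce a committed transition. Recall that $Comm(r\|s)$ means there is some action $a$ with $r\|s\xrightarrow{a,1}$, and the committed transitions of $\mathcal{T}_1\|\mathcal{T}_2$ are exactly those generated by the rules of Fig.~\ref{fig:Trans_Build_Rules} with commitment bit $1$. So the proof is essentially a case analysis over these six rules, in both directions.

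For the direction $Comm(r)\vee Comm(s)\Rightarrow Comm(r\|s)$, I would assume w.l.o.g.\ (by the symmetry of the rules in $i,j$) that $Comm(r)$, so there is a committed transition $r\xrightarrow{a,1}_1 r'$ in $\mathcal{T}_1$. I then split on the label $a$. By Axiom~I applied to $\mathcal{T}_1$, $a$ cannot be a time-passage action, and if $a=\tau$ the transition is committed anyway; the remaining possibilities are $a=\tau$, $a\in\mathcal{E}_{\mathcal{C}}$, or $a\in\mathcal{E}_\Delta$. If $a=\tau$, rule \textbf{TAU} gives $r\|s\xrightarrow{\tau,1}r'\rhd s$ since the side condition $Comm(s)\Rightarrow b$ is satisfied by $b=1$. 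If $a$ is a binary input/output or a broadcast input, I would like to apply \textbf{EXT}, \textbf{SYNC}/\textbf{SND}, or \textbf{RCV}; the subtlety is that \textbf{EXT} only covers binary actions, so for a committed $c!$ or $c?$ transition I need a matching partner in $\mathcal{T}_2$ --- and there may be none. This is where the real care is needed: for a $\delta!$-transition I must produce a matching $\delta?$-transition in $\mathcal{T}_2$, which exists by Axiom~V (input-enabledness), so \textbf{SND} fires with commitment $1\vee b'=1$; for a $\delta?$-transition, Axiom~V again gives a $\delta?$-transition in $\mathcal{T}_2$ and \textbf{RCV} fires with commitment $1$. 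For a committed $c!$ or $c?$ transition, however, there need not be a partner --- so I expect the statement should really be read together with the fact that \textbf{EXT} applies: rule \textbf{EXT} generates $r\|s\xrightarrow{a,b}r'\rhd s$ for \emph{any} binary action $a$ with no side condition, hence a committed $c!$ or $c?$ transition of $\mathcal{T}_1$ directly lifts to a committed transition of $\mathcal{T}_1\|\mathcal{T}_2$ via \textbf{EXT}. So in every case $Comm(r\|s)$ holds.

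For the converse $Comm(r\|s)\Rightarrow Comm(r)\vee Comm(s)$, I assume $r\|s\xrightarrow{a,1}t$ for some $a,t$ and inspect which rule produced it. \textbf{TIME} is impossible since it only yields commitment $0$. For \textbf{EXT}, the premise is $r\xrightarrow{a,1}_i r'$, so the component $\mathcal{T}_i$ is committed. For \textbf{TAU}, commitment $1$ of the conclusion forces $b=1$ in the premise $r\xrightarrow{\tau,b}_i r'$, so $\mathcal{T}_i$ is committed. For \textbf{SYNC}, the conclusion's bit is $b\vee b'=1$, so one of the two premise transitions is committed, hence one component is committed. For \textbf{SND} and \textbf{RCV} the same argument applies: the conclusion bit is $b\vee b'$, so $b=1$ or $b'=1$, giving a committed transition in $\mathcal{T}_i$ or $\mathcal{T}_j$. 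In all cases $Comm(r)\vee Comm(s)$ follows.

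The main obstacle, as flagged above, is the asymmetry in the parallel-composition rules for $\delta!$ versus $c!/c?$: broadcast outputs get their partner for free from Axiom~V, but binary actions do not, so I must be careful to route committed binary transitions through rule \textbf{EXT} (which has no matching-partner requirement and no side condition) rather than through \textbf{SYNC}. Once that routing is in place the rest is a routine, if slightly lengthy, rule-by-rule check; the symmetry of the rule set under swapping the two components (the indices $i,j$ range over $\{1,2\}$) lets me halve the work by treating only the case where the witnessing committed transition lies in $\mathcal{T}_1$.
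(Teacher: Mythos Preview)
Your approach matches the paper's almost exactly: a rule-by-rule case analysis in both directions, with Axiom~I ruling out time passage in the $\Leftarrow$ direction, rules \textbf{EXT}/\textbf{TAU} handling binary and internal committed transitions, and Axiom~V supplying the broadcast-input partner needed to fire \textbf{SND}/\textbf{RCV}. The $\Leftarrow$ direction is complete as you wrote it.

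There is, however, a genuine gap in your $\Rightarrow$ direction for the rules \textbf{SYNC} and \textbf{SND}. In both rules the receiver's premise is a transition out of $s[r']$, not out of $s$: for \textbf{SND} the premises are $r\xrightarrow{\delta!,b}_i r'$ and $s[r']\xrightarrow{\delta?,b'}_j s'$. So if $b=0$ and $b'=1$, what you have established is only $Comm(s[r'])$, which is not the same as $Comm(s)$. Your sentence ``giving a committed transition in $\mathcal{T}_i$ or $\mathcal{T}_j$'' is literally true, but it does not yield $Comm(r)\vee Comm(s)$. The missing step is an appeal to Axiom~III: since $s=(s[r'])[s\lceil E_j]$, Axiom~III applied at the state $s[r']$ with $u=s\lceil E_j$ gives $s\xrightarrow{\sigma?,b'}_j$, and hence $Comm(s)$ when $b'=1$. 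The paper makes exactly this step explicit for \textbf{SND} (and defers \textbf{SYNC} to \cite{Compositional_Abstraction}, where the same issue arises and is handled identically). Your \textbf{RCV} case is fine as written, since both premises there start from the unmodified component states $r$ and $s$.
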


\begin{proof}
    \begin{enumerate}
        \item[$\Rightarrow$]
        If $Comm(r \| s)$ holds, there must be a committed transition of the form $r \| s \xrightarrow{a, 1} r' \| s'$. As this transition is committed, it can be generated by rules other than \textbf{TIME}. For the committed transition generated by  \textbf{EXT}, or \textbf{TAU} or \textbf{SYNC}, we refer to~\cite{Compositional_Abstraction} for the detailed proof that  $Comm(r) \vee Comm(s)$ holds. 
        For the transition generated by \textbf{SND}, we assume w.l.o.g. $i = 1$, there exist transitions $r \xrightarrow{\delta!, b}_1 r'$ and $s[r'] \xrightarrow{\delta?, b'}_2 s'$ with $b \vee b'$, and by Axiom~III, $s[r'] \xrightarrow{\delta?, b'}_2$ implies $s \xrightarrow{\delta?, b'}_2$. Similarly, for the transition generated by \textbf{RCV}, there exist  $r \xrightarrow{\delta?, b}_1 r'$ and $s \xrightarrow{\delta?, b'}_2 s'$ with $b \vee b'$. Hence, we have $Comm(r)\vee Comm(s)$.
        
        \item[$\Leftarrow$]
        If $Comm(r)\vee Comm(s)$ holds, we assume w.l.o.g. $Comm(r)=1$. By Axiom~I, there must exist a committed transition $r\xrightarrow{a,1}_1 r'$, and $a \in \mathcal{E}_{\Delta} \cup \mathcal{E}_{\mathcal{C}} \cup\{\tau\}$. If $a \in \mathcal{E}_{\Delta}$, that is, $a = \delta!$ or $a = \delta?$, according to Axiom~V, there exists a transition $s[r'] \xrightarrow{\delta?, b'}_2 s'$ or $s \xrightarrow{\delta?, b'}_2 s'$ in $\mathcal{T}_2$. Further by rule \textbf{SND} or \textbf{RCV}, we can establish $r\|s\xrightarrow{a,1}r'\|s'$ and $Comm(r\|s)$. If $a \in \mathcal{E}_{\mathcal{C}}$ or $a = \tau$, we can respectively use rule \textbf{EXT} or rule \textbf{TAU} to establish transition $r\|s \xrightarrow{a,1}$ to satisfy $Comm(r\|s)$.     
        $\hfill\square$
    \end{enumerate} 
\end{proof}

Now, we can prove Lemma~\ref{lemma: Composition well-defined}.
Let $E = E_1\cup E_2$, $H = H_1\cup H_2$. Since $E_1\cap H_1 = E_2\cap H_2 =\emptyset$ ($\mathcal{T}_1$ and $\mathcal{T}_2$ are TTSBs), and $E_1\cap H_2 = E_2\cap H_1 = \emptyset$ ($\mathcal{T}_1$ and $\mathcal{T}_2$ are compatible), we have $E \cap H=\emptyset$. Let $S$ be the composed state set and $s_0$ the initial state.   By definition~\ref{def: Parallel composition}, $S\subseteq {V\!al}(V)$ and $s^0\in S$. We must prove that $\mathcal{T}_1\|\mathcal{T}_2$ still satisfies the six axioms for a TTSB.
    Suppose $r,r'\in S_1$, $s,s' \in S_2$ and $r\heartsuit s$. 
    \begin{enumerate}
        \item 
        Assume that $r\|s \xrightarrow{a,1}\wedge\: r\|s \xrightarrow{a',b}$. To prove the satisfaction of Axiom~I, we must establish $a'\in \mathcal{E}_{\Delta}\cup \mathcal{E}_{\mathcal{C}} \vee (a'=\tau \wedge b)$. Since the considered action set is $Act \triangleq \mathcal{E}_{\Delta}\cup \mathcal{E}_{\mathcal{C}}\cup\{\tau\}\cup\mathbb{R}_{\geq 0}$, we can establish this by demonstrating the following two parts. 
        \begin{itemize}
            \item $a'\not\in \mathbb{R}_{\geq 0}$, i.e. $a'$ is not a time-passage action. Since $r\|s\xrightarrow{a,1}$, we have $Comm(r\|s)$, which implies $Comm(r)\vee Comm(s)$ by Lemma~\ref{lemma: committed states}.
            Then by Axiom~I for $\mathcal{T}_1$ and $\mathcal{T}_2$, we have either $r$ or $s$ does not have outgoing time-passage transitions. Finally by rule \textbf{TIME}, $r\|s$ has no outgoing time-passage transitions, that is, $a'\not\in \mathbb{R}_{\geq 0}$.
            
            \item $a'=\tau \Rightarrow b$\footnote{It equals to proving that if $a'=\tau$, $b$ will not be $0$.}. If $a'=\tau$, then either rule \textbf{TAU} or rule \textbf{SYNC} is used to prove $r\|s\xrightarrow{a',b}$. As mentioned above, $Comm(s)\vee Comm(r)$ follows from $r\|s\xrightarrow{a,1}$. Assume w.l.o.g. that $i = 1$. If rule \textbf{TAU} is used and $Comm(r)$, then $b=1$ by Axiom~I for $\mathcal{T}_1$. If rule \textbf{TAU} is used and $Comm(s)$, then $b=1$ since rule \textbf{TAU} has the condition $Comm(s)\Rightarrow b$. If rule \textbf{SYNC} is used, then $Comm(r)\vee Comm(s)$ implies $b=1$. Hence, we can conclude that $a'=\tau \Rightarrow b$ holds.
            
        \end{itemize}
        
        \item 
        Let $r\|s\in S$ and $u \in V\!al(E)$. Similar to the proof of ~\cite{Compositional_Abstraction}, we have $(r\|s)[u]\in S$, which implies that $\mathcal{T}_1\|\mathcal{T}_2$ satisfies Aixom~II.
    
        \item 
        To prove the satisfaction of Axiom III, suppose $r\|s\xrightarrow{\sigma?,b}$ and $u\in V\!al(E)$. We must establish that $(r\|s)[u]\xrightarrow{\sigma?,b}$. Note here, the channel $\sigma$ can be binary or broadcast. If $\sigma \in \mathcal{C}$, we can establish this by referring to the proof in~\cite{Compositional_Abstraction}. If $\sigma \in \Delta$, $r\|s\xrightarrow{\sigma?,b}$ must be proved by rule \textbf{RCV}. This implies that $r\xrightarrow{\sigma?,b_1}$ and $s\xrightarrow{\sigma?,b_2}$, where $b=b_1\vee b_2$. Then by Axiom~III, we have $r[u]\xrightarrow{\sigma?,b_1}$ and $s[u]\xrightarrow{\sigma?,b_2}$. Finally, by rule \textbf{RCV}, we obtain $(r\|s)[u]\xrightarrow{\sigma?,b}$. Hence, $\mathcal{T}_1 \| \mathcal{T}_2$ satisfies Axiom~III.
        
        \item
        Axiom IV for $\mathcal{T}_1\|\mathcal{T}_2$ follows trivially from Axiom~IV for $\mathcal{T}_1$ and $\mathcal{T}_2$ and rule \textbf{TIME} of the parallel composition.
    
        \item 
        To prove the satisfaction of Axiom~V,  for any broadcast channel $\delta \in \Delta$, we must establish that $r\|s\xrightarrow{\delta?, b}$ holds. 
        By Axiom~V, $\mathcal{T}_1$ and $\mathcal{T}_2$ must respectively have transitions in the form of $r\xrightarrow{\delta?,b}r'$ and $s\xrightarrow{\delta?,b'}s'$. Then by rule \textbf{RCV}, we can obtain that  $\mathcal{T}_1\|\mathcal{T}_2$ has a transition $r\|s\xrightarrow{\delta?,b\vee b'}r'\|s'$. Hence, $\mathcal{T}_1 \| \mathcal{T}_2$ satisfies Axiom~V.
        
        \item 
        To prove the satisfaction of Axiom VI, suppose $\mathcal{T}_1\|\mathcal{T}_2$ has a transition $r\|s\xrightarrow{\delta?,b}r'\|s'$. We must establish that $r\|s\lceil E = r'\|s'\lceil E$. The transition $r\|s\xrightarrow{\delta?,b}r'\|s'$ is generated by the parallel composition's rule \textbf{RCV}. This implies that $\mathcal{T}_1$ and $\mathcal{T}_2$ respectively have transitions $r\xrightarrow{\delta?,b_1}r'$ and $s\xrightarrow{\delta?,b_2}s'$, where $b=b_1\vee b_2$. Then by Axiom~VI for $\mathcal{T}_1$ and $\mathcal{T}_2$, we have $r\lceil E_1 = r' \lceil E_1$ and $s\lceil E_2 = s' \lceil E_2$. Further by $E_2\cap H_1 = \emptyset$ and $E=E_1 \cup E_2$, we have $r\lceil E = r' \lceil E$. Similarly, we have $s\lceil E = s' \lceil E$. Clearly, we have  $(r\|s)\lceil E=r\lceil E \| s\lceil E = r'\lceil E \| s'\lceil E = (r'\|s')\lceil E$. Hence, $\mathcal{T}_1 \| \mathcal{T}_2$ satisfies Axiom~VI. 
    \end{enumerate}

\section{Proof of Theorem~\ref{theroem: Commutativity and Associativity}} \label{APP: proof of Commutativity and Associativity}
If $\mathcal{T}_1$ and $\mathcal{T}_2$ are compatible TTSBs, then the commutativity, i.e., $\mathcal{T}_1\|\mathcal{T}_2$ = $\mathcal{T}_2\|\mathcal{T}_1$, can be directly obtained from the symmetry of the parallel composition rules. Below, we present the proof of associativity.

If $\mathcal{T}_1$, $\mathcal{T}_2$ and $\mathcal{T}_3$ are pairwise compatible TTSBs, then by Lemma~\ref{lemma: expressions}(2), we have $\mathcal{T}_1\|\mathcal{T}_2$ is compatible with $\mathcal{T}_3$, and $\mathcal{T}_1$ is compatible with $\mathcal{T}_2\|\mathcal{T}_3$. Let $\mathcal{T}_L=(\mathcal{T}_1\|\mathcal{T}_2)\|\mathcal{T}_3$ and $\mathcal{T}_R = \mathcal{T}_1\|(\mathcal{T}_2\|\mathcal{T}_3)$. It is easy to find that $\mathcal{T}_L$ and $\mathcal{T}_R$ agree on 5 components ($E,H,S,s^0,Act$), except for the transition sets ($\rightarrow ^1,\rightarrow ^0$). Since the action set is $Act \triangleq \mathcal{E}_{\Delta}\cup\mathcal{E}_{\mathcal{C}}\cup\{\tau\}\cup\mathbb{R}_{\geq 0}$, we can prove $\mathcal{T}_L$ and $\mathcal{T}_R$ share the same transition sets, that is, the sets of time-passage transition, $\tau$-transition, binary transition and broadcast transition. For the former 3 transition sets, the proof process is similar to that in~\cite{Compositional_Abstraction}, except that the newly defined Lemma~\ref{lemma: committed states} is used.
For the $4^{th}$ transition set, we will distinguish 4 cases to prove that the broadcast transition set of $\mathcal{T}_R$ contains that of $\mathcal{T}_L$. The four cases correspond to different broadcast scenarios, i.e., each component has $\delta?$ (1 case) and one component has $\delta!$ (3 cases).
Then, by a symmetric argument, the reverse inclusion can also be obtained. Still, let  $r,r'\in S_1, s,s'\in S_2, t,t'\in S_3$ and  $r\heartsuit s\heartsuit t$. 

\begin{itemize}
    \item Case $(\delta?\ \delta?\ \delta?)$. 
    In this case, $r\xrightarrow{\delta?,b}_1 r'$, $s\xrightarrow{\delta?,b'}_2 s'$ and $t\xrightarrow{\delta?,b''}_3 t'$. According to the definition of rule \textbf{RCV} and the associativity of $\|$ and $\vee$, the transition generated in right-associative order is the same as the one generated in left-associative order, i.e., they are both $r\|s\|t\xrightarrow{\delta?,b\vee b' \vee b''}r'\|s'\|t'$.

    \item Case $(\delta!\ \delta?\ \delta?)$.
    In this case, $r\xrightarrow{\delta!,b}_1 r'$, $s[r']\xrightarrow{\delta?,b'}_2 s'$ and  $t[r'\|s']\xrightarrow{\delta?,b''}_3 t'$. In the left-associative order, we first get transition $r\|s\xrightarrow{\delta!,b\vee b'} r'\|s'$ by rule \textbf{SND}. Again by this rule, we get $r\|s\|t\xrightarrow{\delta!,b\vee b'\vee b''} r'\|s'\|t'$. In the right-associative order, we have:
        \begin{align*}
            t[r'\|s']
            =\:&
            t[(r'\|s')\lceil E_3] 
            &\text{Pairwise Compatible}
            \\
            =\:&
            t[(r'\lceil E_3)\|(s'\lceil E_3)] 
            &\text{Obviously}
            \\
            =\:&
            t[(r'\lceil E_3)\|(s[r']\lceil E_3)] 
            &\text{Axiom VI}
            \\
            =\:&
            t[r'\|s[r']] 
            &\text{Arrangement}
            \\
            =\:&
            t[r'\rhd s] 
            &\text{Lemma~\ref{lemma: expressions} (3)}
            \\
            =\:&
            t[s][r'] 
            &\text{Lemma~\ref{lemma: expressions} (4)}
            \\
            =\:&
            t[r'] 
            &\text{$s$ and $t$ are compatible}
        \end{align*}
    This means that we can merge $s[r']\xrightarrow{\delta?,b'}_2 s'$ and $t[r'\|s']\xrightarrow{\delta?,b''}_3 t'$ into $(s\|t)[r']$ $\xrightarrow{\delta?, b'\vee b''} s'\|t'$ by rule \textbf{RCV}. Further by rule \textbf{SND}, we get the same transition $r\|s\|t\xrightarrow{\delta!,b\vee b' \vee b''} r'\|s'\|t'$.

    \item Case $(\delta?\ \delta!\ \delta?)$. 
    In this case, $r[s']\xrightarrow{\delta?,b}_1 r'$, $s\xrightarrow{\delta!,b'}_2 s'$ and $t[r'\|s']\xrightarrow{\delta?,b''}_3 t'$. In the left-associative order, by using rule \textbf{SND} twice, we can first establish transition $r\|s\xrightarrow{\delta!,b\vee b'} r'\|s'$, and then $r\|s\|t\xrightarrow{\delta!,b\vee b'\vee b''} r'\|s'\|t'$. In the right-associative order, similar to case $(\delta!\ \delta?\ \delta?)$, we can prove $r[s'\|t']=r[s']$ and $t[s']=t[r'\|s']$. Then by rule \textbf{SND}, we get transition
    $s\|t\xrightarrow{\delta!,b'\vee b''} s'\|t'$ from transitions $s\xrightarrow{\delta!,b'}_2 s'$ and $t[r'\|s']\xrightarrow{\delta?,b''}_3 t'$. Finally, again by rule \textbf{SND}, we 
    get the same transition $r\|s\|t\xrightarrow{\delta!,b\vee b' \vee b''} r'\|s'\|t'$.

    \item Case $(\delta?\ \delta?\ \delta!)$. The proof is similar to that of case $(\delta!\ \delta?\ \delta?)$.
\end{itemize}

\section{Proof of Lemma~\ref{lemma: TTSB(A) is a TTSB}} \label{APP: proof of TTSB(A) is a TTSB}

    Since $\mathcal{A}$ is a TA, $E$ and $H$ are disjoint. Additionally, since the variable $\mathsf {loc}$ is fresh, $E$ and $H \cup \{\mathsf{loc}\}$ are also disjoint. By the definition of TA, we have $v^0 \models I(l^0)$, which implies that $s^0 \in S$, as required. Now, we check that $\mathsf{TTSB}(\mathcal{A})$ satisfies all six axioms for a TTSB:
    
    \begin{itemize}
    \item 
    Suppose there is a state $s \in S$ with $s \xrightarrow{a, 1} s'$ and $s \xrightarrow{a', b} s''$. Since the committed transition $s \xrightarrow{a, 1} s'$ can only be generated by rule \textbf{ACT}, it follows that $s(\mathsf{loc}) \in K$. By Definition~\ref{def: Timed step simulation for TTSBs}, this implies that $s \xrightarrow{a', b} s''$ can only be generated by rules \textbf{ACT} or \textbf{VIRT}. If it is generated by rule \textbf{ACT}, then $b = 1$; if by rule \textbf{VIRT}, then $a' \in \mathcal{E}_{\Delta}$. So Axiom~I is satisfied.
    
    \item 
    Axiom~II is satisfied because, according to Axiom~VII, all location invariants do not depend on external variables.
    
    \item 
    Axiom~III is satisfied because, according to Axiom~VIII, all input guards do not depend on external variables.
    
    \item Axiom~IV is obtained immediately from rule \textbf{TIME}.
    
    \item 
    For a broadcast action $\delta?$ and state $s \in S$ with $s(\mathsf{loc})=l$, if $\mathcal{A}$ has the transition $l\xrightarrow{g,\delta?,\rho}l'$ with $s \models g$, the transition $s\xrightarrow{\delta?,b}s'$ will be generated by rule \textbf{ACT}. Otherwise, the transition $s\xrightarrow{\delta?,0}s$ is be generated by rule \textbf{VIRT}. This ensures that Axiom V is satisfied.
    
    \item 
    Axiom~VI is clearly satisfied because, according to Axiom~XI, external variables will not be updated in $\delta?$-transitions.
    \end{itemize}

\section{Proof of Theorem~\ref{Theorem: semantics equal}} \label{APP: proof of semantics equal}
For proving Theorem~\ref{Theorem: semantics equal}, the following lemma is necessary: in the TTSB semantics of a TA, a state is committed iff the corresponding location is committed.

\begin{lemma} \label{lemma: K implies Comm(s)}
    Let $\mathcal{A}$ be a TA and let $s$ be a state of $\mathsf{TTSB}(\mathcal{A})$. Then $s(\mathsf{loc})\in K \Leftrightarrow Comm(s)$.
\end{lemma}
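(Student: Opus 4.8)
The plan is to prove the two implications separately, in each case by a short analysis of which rule of Definition~\ref{def: TTSB semantics of TA} can produce the transition under consideration.

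For the direction $s(\mathsf{loc})\in K \Rightarrow Comm(s)$, write $l = s(\mathsf{loc})$. Since $s$ is a state of $\mathsf{TTSB}(\mathcal{A})$, Definition~\ref{def: TTSB semantics of TA} gives $s\models I(l)$. Because $l\in K$, Axiom~IX applies to the valuation $s\lceil V$ (which lies in $I(l)$), so there is a transition $l\xrightarrow{g,a,\rho}l'$ of $\mathcal{A}$ with $s\models g$ and $\rho(s)\models I(l')$. Hence rule \textbf{ACT} is applicable at $s$, and since its source location $l$ is committed it produces a committed transition $s\xrightarrow{a,1}s'$ with $s' = \rho(s)[\{\mathsf{loc}\mapsto l'\}]$; note $s'\in S$ because $\rho(s)\models I(l')$. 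Therefore $Comm(s)$ holds.

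For the converse $Comm(s) \Rightarrow s(\mathsf{loc})\in K$, suppose $s\xrightarrow{a,1}s'$ is a committed transition of $\mathsf{TTSB}(\mathcal{A})$. Inspecting the three rules of Definition~\ref{def: TTSB semantics of TA}: rule \textbf{TIME} yields only delay transitions, which are uncommitted; rule \textbf{VIRT} yields only the uncommitted self-loop $\delta?$-transitions; so $s\xrightarrow{a,1}s'$ must be an instance of rule \textbf{ACT}. But rule \textbf{ACT} produces a committed transition precisely when the underlying source location is committed, i.e., when $s(\mathsf{loc})\in K$, which is the claim.

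The argument is essentially bookkeeping over the three rules, so the only real subtlety is the forward direction, where Axiom~IX is needed to guarantee that some outgoing transition of $\mathcal{A}$ is actually enabled in $s$ — a committed location with no enabled outgoing transition would break the equivalence, which is exactly why Axiom~IX was imposed. I do not anticipate any genuinely hard step.
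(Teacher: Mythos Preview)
Your proof is correct and follows exactly the approach the paper sketches: the paper's own proof simply says the claim ``can be directly obtained through Axiom~IX and rule \textbf{ACT}'' without elaborating, and your argument is precisely the natural unpacking of that remark---Axiom~IX supplies the enabled outgoing transition for the forward direction, and the case analysis over \textbf{ACT}, \textbf{TIME}, \textbf{VIRT} handles the converse.
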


\begin{proof}
    This can be directly obtained through Axiom~IX and rule \textbf{ACT} in Definition~\ref{def: TTSB semantics of TA}, and the details are not elaborated here.
    $\hfill\square$
\end{proof}

Now, we demonstrate that the compositional and non-compositional semantics of NTA with broadcast channels are equivalent. It follows directly from definitions (\ref{def: TTSB}, \ref{def: Parallel composition}, \ref{def: Restriction for TTSB}, \ref{def: LTS semantics of NTA} and \ref{def: TTSB semantics of TA}) that both sides of the equation have the same set of states and the same initial state. The transition set of $\mathsf{LTS}(\mathcal{N})$ can be divided into the following three subsets.
    \begin{samepage}
    \begin{enumerate}
        \item 
        The set of $\tau$-transitions generated by rules \textbf{BCST}.
        \item 
        The set of $\tau$-transitions generated by rules \textbf{TAU} and \textbf{SYNC}.
        \item 
        The set of time-passage transitions generated by rule \textbf{TIME}.
    \end{enumerate}
    \end{samepage}
    Similarly, the transition set of $\mathsf{LTS}((\mathsf{TTSB}(\mathcal{A}_1)\|\cdots\|\mathsf{TTSB}(\mathcal{A}_n))\backslash (\Delta \cup \mathcal{C}))$ can also be divided into three subsets.
    \begin{enumerate}
        \item
        The set of $\tau$-transitions introduced to replace $\delta!$-transitions during the restriction process.
        \item 
        The set of other $\tau$-transitions.
        \item 
        The set of time-passage transitions.
    \end{enumerate}
    We can follow the proof in \cite{Compositional_Abstraction} to show that the transition sets $H_2$ and $H_3$ correspondingly equals $C_2$ and $C_3$, except that the newly proved Lemma~\ref{lemma: committed states} is used. Now we prove that transition set $H_1$ is equivalent to $C_1$. Let $\overline{RS}(\delta,i,s) = \{1\leq k \leq n,k \not = i, k\not\in RS(\delta,i,s)\}$, and $k_1,k_2,\dots,k_{\bar{m}}$ enumerate its elements. Obviously, $m+\bar{m}+1=n$.
    In the rest of this paper, if there is no ambiguity, we abbreviate $RS(\delta,i,s)$ and $\overline{RS}(\delta,i,s)$ as $RS$ and $\overline{RS}$ respectively, and $s\lceil W_q$ as $s_q$, for $1\leq q \leq n$.
    
    \begin{description}
        \item $\subseteq$ 
        Assume $\mathsf{LTS}(\mathcal{N})$ has a transition $s\xrightarrow{\tau}s'$ generated by rule \textbf{BCST} in Fig.\ref{fig:NTA_UPPAAL_Semantics}. Clearly, $s_1,s_2,\dots, s_n$ are pairwise compatible, and all the following hold.
        \begin{equation}
            \begin{array}{c}
                l_i\xrightarrow{g_i,\delta!,\rho_i}l'_i \myHspace\myHspace\myHspace\myHspace
                s(\mathsf{loc}_i)=l_i \myHspace\myHspace\myHspace\myHspace
                s\models g_i 
                \\
                \forall j\in RS:l_j\xrightarrow{g_j,\delta?,\rho_j}l_j',s(\mathsf{loc}_j)=l_j, s\models g_j 
                \\
                (\forall q:s(\mathsf{loc}_q)\not \in K_q)\vee l_i\in K_i\vee(\exists j\in RS: l_j\in K_j)
                \\
                s'= \rho_{j_m}(\cdots\rho_{j_1}(\rho_i(s)))[\{\mathsf{loc}_i\mapsto l_i', \mathsf{loc}_{j_1}\mapsto l_{j_1}',\dots,\mathsf{loc}_{j_m}\mapsto l_{j_m}'\}]
            \end{array}
            \nonumber
        \end{equation}
        We first prove that $\mathsf{TTSB}(\mathcal{A}_1),\cdots,\mathsf{TTSB}(\mathcal{A}_n)$ respectively contain the following transitions.
        \begin{itemize}
            \item 
            For $i$, let $s_i'=\rho(s_i)[\{\mathsf{loc}_i\mapsto l_i'\}]$ and $b_i\Leftrightarrow (l_i\in K_i)$. By rule \textbf{ACT}, $\mathsf{TTSB}(\mathcal{A}_i)$ contains transition $s_i\xrightarrow{\delta!,b_i}s_i'$.
            
            \item 
            For any $j\in RS$, by Axiom~X, $g_j$ does not depend on $E_j$, therefore $s_j[s_i']\models g_j$. Furthermore, $l_j\xrightarrow{g_j,\delta?,\rho_j}l_j'$ and clearly $s_j[s_i'](\mathsf{loc}_j)=l_j$. Let $s_j' =  \rho_j(s_j[s_i'])[\{\mathsf{loc}_j\mapsto l_j'\}]$ and $b_j\Leftrightarrow(l_j\in K_j)$. Then by rule \textbf{ACT}, $\mathsf{TTSB}(\mathcal{A}_j)$ contains transition $s_j[s_i']\xrightarrow{\delta?,b_j}s_j'$.
    
            \item 
            For any $k\in \overline{RS}$, by Axiom~IX, $s_k[s_i]$ does not satisfy the guard of any $\delta?$-transition starting from $s(\mathsf{loc}_k)$. Hence, by rule \textbf{VIRT}, $\mathsf{TTSB}(\mathcal{A}_k)$ contains transition $s_k[s_i]\xrightarrow{\delta?,0}s_k[s_i]$.
        \end{itemize}
        Given that the parallel composition operation is associative, we first repeatedly apply rule \textbf{RCV} to merge all the $\delta?$-transitions mentioned above. The result is the following transition.
        \begin{align*}
            s_{j_1}[s_i']&\|\cdots\|s_{j_m}[s_i']\|s_{k_1}[s_i']\|\cdots\|s_{k_{\bar{m}}}[s_i']
            \\
            &\xrightarrow{\delta?, b_{j_1}\vee\cdots\vee b_{j_m}} s_{j_1}'\|\cdots\|s_{j_m}'\|s_{k_1}[s_i']\|\cdots\|s_{k_{\bar{m}}}[s_i']
        \end{align*}
        Then, by rule \textbf{SND}, the following transition is generated.
        \begin{equation*}
            s
            \xrightarrow{\delta!,b_i\vee b_{j_1} \vee\cdots\vee b_{j_m}}
            s_i\|s_{j_1}'\|\cdots\|s_{j_m}'\|s_{k_1}[s_i']\|\cdots\|s_{k_{\bar{m}}}[s_i']
        \end{equation*}
       
        Then we prove that $s'=s_i\|s_{j_1}'\|\cdots\|s_{j_m}'\|s_{k_1}[s_i']\|\cdots\|s_{k_{\bar{m}}}[s_i']$, here we employ mathematical induction.
        \begin{itemize}
            \item Case $\lvert RS\rvert = 0$. In this case, $l_i\xrightarrow{g_i,\delta!,\rho_i}l_i'$ does not synchronize with any other transition, we have
            \begin{align}
                s'
                =\:&\rho_i(s)[\{\mathsf{loc}_i\mapsto l_i'\}] &\text{Rule \textbf{SND} of Definition~\ref{def: LTS semantics of NTA}}
                \nonumber
                \\
                =\:&s[\rho_i(s\lceil V_i)][\{\mathsf{loc}_i\mapsto l_i'\}] & \text{Definition of $\rho_i$}
                \nonumber
                \\
                =\:&s[\rho_i(s\lceil V_i) \lhd \{\mathsf{loc}_i\mapsto l_i'\}] & \text{Lemma~\ref{lemma: expressions}(4)}
                \nonumber
                \\
                =\:&s[\rho_i(s\lceil W_i)[\{\mathsf{loc}_i\mapsto l_i'\}]] &\text{Definitions of $\rho_i$ and $s_i$}
                \nonumber
                \\
                =\:&s[\rho_i(s_i)[\{\mathsf{loc}_i\mapsto l_i'\}]] &\text{Definition of $s_i$}
                \nonumber
                \\
                =\:& s[s_i'] & \text{Definition of $s_i'$}
                \nonumber
                \\
                =\:& s_i'\|s_{k_1}[s_i']\|\cdots\|s_{k_{\bar{m}}}[s_i'] &\text{Expansion of $s$}
                \nonumber
            \end{align}
           
            \item Case $\lvert RS\rvert = 1$. In this case, $l_i\xrightarrow{g_i,\delta!,\rho_i}l_i'$ only synchronizes with one transition, w.l.o.g, it is $l_j\xrightarrow{g_j,\delta?,\rho_j}l_j'$. Similar to the proof of rule \textbf{SYNC} in \cite{Compositional_Abstraction}, we can conclude $s'=s[s_i'\lhd s_j']$ and continue on this basis:
            \begin{align}
               s'
               =\:& s[s_i'\lhd s_j']
               &\text{From \cite{Compositional_Abstraction}}
               \nonumber
               \\
               =\:& s[s_i'\|s_j'] & \text{Axiom~VI}
               \nonumber
               \\
               =\:& s_i'\|s_j'\|s_{k_1}[s_i'\|s_j']\|\cdots\|s_{k_{\bar{m}}}[s_i'\|s_j'] & \text{Expansion of $s$}
               \nonumber
               \\
               =\:& s_i'\|s_j'\|s_{k_1}[s_i']\|\cdots\|s_{k_{\bar{m}}}[s_i']  & \text{Axiom~VI}
               \nonumber
            \end{align}
    
            \item Case $\lvert RS\rvert=q$. Assuming $s'=s_i'\|\cdots\|s_{j_q}'\|s_{k_1}[s_i']\|\cdots$ $\|s_{k_{\bar{m}}}[s_i']$ holds.
           
            \item Case $\lvert RS\rvert=q+1$. In this case:
            \begin{align*}
                s'
                =\:& \rho_{j_{q+1}}(\rho_{j_q}(\cdots\rho_{j_1}(\rho_i(s))))
                \\
                &[\{\mathsf{loc}_i\mapsto l_i', \mathsf{loc}_{j_1}\mapsto l_{j_1}',\dots,\mathsf{loc}_{j_q}\mapsto         l_{j_q}',\mathsf{loc}_{j_{q+1}}\mapsto l_{j_{q+1}}'\}] 
                \\
                &\text{Rule \textbf{SYNC} of Definition~\ref{def: LTS semantics of NTA}}
                \\
                =\:& \rho_{j_{q+1}}(\rho_{j_q}(\cdots\rho_{j_1}(\rho_i(s)))[\{\mathsf{loc}_i\mapsto l_i', \mathsf{loc}_{j_1}\mapsto     l_{j_1}',\dots,\mathsf{loc}_{j_q}\mapsto l_{j_q}'\}])
                \\
                &[\{\mathsf{loc}_{j_{q+1}}\mapsto l_{j_{q+1}}'\}] 
                \\
                &\text{Disjoint domains and reordering}
                \\
                =\:& \rho_{j_{q+1}}(s_i'\|s_{j_1}'\|\cdots\|s_{j_q}'\|s_{j_{q+1}}'[s_i']\|s_{k_1}[s_i']\|\cdots\|s_{k_{\bar{m}}}[s_i'])[\{\mathsf{loc}_{j_{q+1}}\mapsto l_{j_{q+1}}'\}]
                \\
                &\text{The assumption}
                \\ 
                =\:& s_i'\|s_{j_1}'\|\cdots\|s_{j_q}'\|s_{k_1}[s_i']\|\cdots\|s_{k_{\bar{m}}}[s_i']\|\rho_{j_{q+1}}(s_{j_{q+1}}'[s_i'])[\{\mathsf{loc}_{j_{q+1}}\mapsto l_{j_{q+1}}'\}] 
                \\
                &\text{Disjoint domains and reordering}
                \\
                =\:& s_i'\|s_{j_1}'\|\cdots\|s_{j_q}'\|s_{j_{q+1}}'\|s_{k_1}[s_i']\|\cdots\|s_{k_{\bar{m}}}[s_i']
                \\
                &\text{Definition of $s_{j_{q+1}}'$ and reordering}
            \end{align*}
    
            Next, recall that $(\forall q:s(\mathsf{loc}_q)\not \in K_q)\vee l_i\in K_i\vee(\exists j\in RS: l_j\in K_j)$ holds. By Lemma~\ref{lemma: K implies Comm(s)}, $(\forall q:s(\mathsf{loc}_q)\not \in K_q)$ implies $\neg Comm(s)$. By rule \textbf{ACT}, $l_i\in K_i\vee(\exists j\in RS: l_j\in K_j)$ implies that at least one of the transitions involved in the broadcast synchronization is committed, resulting in $b_i\vee b_{j_1}\vee\cdots\vee b_{j_m} = 1$. Hence, $Comm(s)\Rightarrow b$ always holds. By Definition~\ref{def: Restriction for TTSB}, the generated transition will be turned to a $\tau$-transition $s\xrightarrow{\tau}s'$ when we apply restriction $\backslash(\Delta\cup\mathcal{C})$ to $\mathsf{TTSB}(\mathcal{A}_1)\|\cdots\|\mathsf{TTSB}(\mathcal{A}_n)$.
            
           
            Finally, after applying $\mathsf{LTS}$, we obtain that the compositional semantics of $\mathcal{N}$
            contains transition $s\xrightarrow{\tau}s'$, as required.
        \end{itemize}
        \item $\supseteq$
        Assume $s\xrightarrow{\tau}s'$ in $\mathsf{LTS}((\mathsf{TTSB}(\mathcal{A}_1)\|\cdots\|\mathsf{TTSB}(\mathcal{A}_n))\backslash (\Delta\cup\mathcal{C}))$ is generated from a transition $s\xrightarrow{\delta!,b}s'$ with $Comm(s)\Rightarrow b$ in $\mathsf{TTSB}(\mathcal{A}_1)\|\cdots\|\mathsf{TTSB}(\mathcal{A}_n)$ during the process of restriction. According to Definition~\ref{def: Parallel composition}, the transition $s\xrightarrow{\delta!,b}s'$ must be generated from one $\delta!$-transition and $n-1$ $\delta?$-transitions. We denote them as $s_i\xrightarrow{\delta!,b_i}s_i'$, $s_{j_1}[s_i']\xrightarrow{\delta?,b_{j_1}}s_{j_1}',\dots,s_{j_m}[s_i']\xrightarrow{\delta?,b_{j_m}}s_{j_m}'$ and $s_{k_1}[s_i']\xrightarrow{\delta?,0}s_{k_1}[s_i'],\dots,s_{k_{\bar{m}}}[s_i']\xrightarrow{\delta?,0}s_{k_{\bar{m}}}[s_i']$ separately, where $n=1+m+\bar{m}$.
        By rules \textbf{ACT} and \textbf{VIRT}, the following expressions hold.
        \begin{align*}
            l_i\xrightarrow{g_i,\delta!,\rho_i}l_i' \quad s_i(\mathsf{loc}_i)=l_i \quad
            &s_i\models g_i \quad s_i'=\rho (s_i)[\{\mathsf{loc}_i\mapsto l_i'\}] \quad b_i \Leftrightarrow (l_i\in K_i)
            \\
            \forall j \in \{j_1,\dots,j_m\}: \quad
            &l_j\xrightarrow{g_j,\delta!,\rho_j}l_j' \myHspace s_j[s_i'](\mathsf{loc}_j)=l_j \myHspace s_j[s_i']\models g_j
            \\
            &s_j'=\rho (s_j[s_i'])[\{\mathsf{loc}_j\mapsto l_{j_1}'\}] \myHspace b_j \Leftrightarrow (l_j\in K_j)
            \\
            \forall k \in \{k_1,\dots,k_{\bar{m}}\}: \quad
            &s_k[s_i'](\mathsf{loc}_k)=l_k
            \quad \forall l_k\xrightarrow{g_k,\delta?,\rho_k} l_k':s_k[s_i']\not \models g_k
        \end{align*} 
        The transition $s\xrightarrow{\delta!,b}s'$ is constructed by using rule \textbf{RCV} $n-1$ times 
        and using rule \textbf{SND} once. We have $s'=s'_i\|s_{j_1}'\|\cdots\|s_{j_m}'\|s_{k_1}[s_i']\|\cdots$ $\|s_{k_{\bar{m}}}[s_i']$, and $b=b_i\vee b_{j_1}\vee\cdots\vee b_{j_m}$.
        
        Now, we prove that all the preconditions of rule \textbf{BCST} are satisfied.
        \begin{itemize}          
            \item 
            $s_i(\mathsf{loc}_i)=l_i \: \Rightarrow \: s(\mathsf{loc}_i)=l_i$
            \item 
            $s_i\models g_i \: \Rightarrow \: s\models g_i$
            \item 
            Since $\{j_1,\dots,j_m\} = RS, (\forall j\in\{j_1,\dots,j_m\} : l_j\xrightarrow{g_j,\delta!,\rho_j}l_j',s_j[s_i'](\mathsf{loc}_j)=l_j, (s_j[s_i']\models g_j)) \: \Rightarrow \: (\forall j\in RS:l_j\xrightarrow{g_j,\delta?,\rho_j}l_j',s(\mathsf{loc}_j)=l_j, s\models g_j)$
            \item
            Since $Comm(s)\Rightarrow{b}$ holds, we have the following derivation.
            \begin{samepage}
            \begin{align*}
                &Comm(s)\Rightarrow{b}
                \\
                \Leftrightarrow\:\:
                &\neg Comm(s) \vee b & \text{Equivalence}
                \\
                \Leftrightarrow\:\:
                &\neg Comm(s) \vee b_i\vee b_{j_1}\vee b_{j_2} \vee\cdots\vee b_{j_m} & \text{Substitution}
                \\
                \Leftrightarrow\:\:
                & \uline{\mathstrut (\neg Comm(s_1) \wedge\neg Comm(s_2) \wedge\cdots\wedge \neg Comm(s_n))} \vee
                \\
                & \uline{\mathstrut b_i} \vee  \uline{\mathstrut (b_{j_1}\vee b_{j_2}\vee\cdots\vee b_{j_m})} 
                & \text{Lemma~\ref{lemma: committed states}}
            \end{align*}
            \end{samepage}
            We analyze this disjunction in three parts marked by underscores.
            \begin{itemize}
                \item  By Lemma~\ref{lemma: K implies Comm(s)}, $\neg Comm(s_1) \wedge\cdots\wedge \neg Comm(s_n) \Leftrightarrow \forall q, s(\mathsf{loc}_q)\not\in K_q$.
                
                \item By rule \textbf{ACT}, $b_i\Leftrightarrow l_i\in K_i$.
    
                \item By definition of $RS$, we have $\{j_1,\dots,j_m\} \subseteq RS$. Then by rule \textbf{ACT}, $b_{j_1}\vee \dots\vee b_{j_m} \Rightarrow \exists j\in RS:l_j\in K_j$.
            \end{itemize}
            In summary, $(\forall q:s(\mathsf{loc}_q)\not \in K_q)\vee l_i\in K_i\vee(\exists j\in RS: l_j\in K_j)$ holds.
            \item 
            According to the previous proof, $s'= \rho_{j_m}(\cdots\rho_{j_1}(\rho_i(s)))[\{\mathsf{loc}_i\mapsto l_i',$ $\mathsf{loc}_{j_1}\mapsto l_{j_1}',\dots,\mathsf{loc}_{j_m}\mapsto l_{j_m}'\}]$ holds.
            
            With all the preconditions of rule \textbf{SND} of NTA semantics satisfied, the transition $s\xrightarrow{\tau}s'$ can be generated, as required.
        \end{itemize}
    \end{description}

\section{Proof of Theorem~\ref{theorem: precongruence}} \label{APP: proof of precongruence}
    Since $\mathcal{T}_1$ and $\mathcal{T}_2$ are comparable, $\mathcal{T}_1\|\mathcal{T}_3$ and $\mathcal{T}_2\|\mathcal{T}_3$ are comparable as well. Let $\mathcal{T}_{13}=\mathcal{T}_1\|\mathcal{T}_3$ and $\mathcal{T}_{23}=\mathcal{T}_2\|\mathcal{T}_3$. Let $\rm{Q}$ be a timed step simulation from $\mathcal{T}_1$ to $\mathcal{T}_2$. Define relation ${\rm{R}}\in S_{13}\times S_{23}$ by
    \begin{equation}
        r\|t \: {\rm{R}} \: s\|t' \Leftrightarrow (r \: {\rm{Q}}\: s \wedge t=t').
        \notag
    \end{equation}
    Let's prove that $\rm{R}$ is a timed step simulation from $\mathcal{T}_{13}$ to $\mathcal{T}_{23}$. Obviously, we have $(s^0_1\|s^0_2)\: {\rm{R}}\: (s^0_2\|s^0_3)$ since $s^0_1 \: {\rm{Q}} \: s^0_2$. Now, for any $(r\|t, s\|t)\in{\rm{R}}$, we prove that the four conditions in the definition of a timed step simulation are satisfied.  
    
    \begin{enumerate}
        \item 
        Since $r\:{\rm{Q}}\:s$, we have $r\lceil E_1 = s\lceil E_2$. This follows that $(r\|t)\lceil E_{13} = (s\|t)\lceil E_{23}$.
    
        \item 
        Pick $u\in V\!al(E_{13})$ and let $u'=u\lceil E_1$. Since ${\rm{Q}}$ is a timed step simulation, $r[u']\:{\rm{Q}}\:s[u']$. Since $\mathcal{T}_3$ is compatible with $\mathcal{T}_1$ and $\mathcal{T}_2$, $r[u'] = r[u]$ and $s[u']=s[u]$, $r[u]\:{\rm{Q}}\:s[u]$. Further by definition of ${\rm{R}}$, $r[u]\|t[u]\:{\rm{R}}\:s[u]\|t[u]$. Finally, by Lemma~\ref{lemma: expressions}(5), we have $(r\|t)[u]\:{\rm{R}}\:(s\|t)[u]$.
    
        \item 
        We derive
            \begin{align*}
                Comm(s\|t)
                \Rightarrow\:&
                Comm(s)\vee Comm(t) 
                &\text{Lemma~\ref{lemma: committed states}}
                \\
                \Rightarrow\:&
                Comm(r)\vee Comm(t)
                &\text{${\rm{Q}}$ a timed step simulation}
                \\
                \Rightarrow\:&
                Comm(r\|t) 
                &\text{Lemma~\ref{lemma: committed states}}
            \end{align*}
    
        \item 
        Assume that $\mathcal{T}_{13}$ has a transition $r\|t\xrightarrow{a,b}r'\|t'$, we prove that regardless of which rule in Fig.\ref{fig:Trans_Build_Rules} is used, either $\mathcal{T}_{23}$ has a transition $s\|t\xrightarrow{a,b}s'\|t''$ such that $r'\|t' \:{\rm{R}} \:s'\|t''$, or $a=\tau$ and $r'\|t' \:{\rm{R}}\: s\|t$. 
        \begin{itemize}
            \item Rules \textbf{EXT}, \textbf{TAU}, \textbf{SYNC} and \textbf{TIME}. The proof process is very similar to that in~\cite{Compositional_Abstraction}, except that the newly proved Lemma~\ref{lemma: committed states} is used.
        
            \item Rule \textbf{RCV}. 
            Then $\mathcal{T}_1$ and $\mathcal{T}_3$ separately have transitions $r\xrightarrow{\delta?,b}_1 r'$ and $t\xrightarrow{\delta?,b'}_3 t'$. These transitions generate $r\|t\xrightarrow{\delta?,b\vee b'}r'\|t'$. Since ${\rm{Q}}$ is a simulation, there exists a transition $s\xrightarrow{\delta?,b}_2 s'$ such that $r'\:{\rm{Q}}\:s'$. Let $t'' = t'$, then $s\|t\xrightarrow{\delta?,b\vee b'}s'\|t''$, and $r'\|t'\:{\rm{R}}\:s'\|t''$.
            
            \item Rule \textbf{SND} with $i=1$. 
            Then $\mathcal{T}_1$ and $\mathcal{T}_3$ separately have transitions $r\xrightarrow{\delta!,b}_1 r'$ and $t[r']\xrightarrow{\delta?,b'}_3 t'$. These transitions generate $r\|t\xrightarrow{\delta!,b\vee b'} r'\| t'$.
            Since ${\rm{Q}}$ is a simulation, there exists a state $s'$ such that $s\xrightarrow{\delta!,b}_2 s'$ and $r'\:\rm{Q}\:s'$. This follows that $r'\lceil E_1 = s'\lceil E_2$. Since $\mathcal{T}_3$ is compatible with $\mathcal{T}_1$ and $\mathcal{T}_2$, we have $t[r'] = t[s']$. Let $t'' = t'$, by rule \textbf{SND}, $\mathcal{T}_{23}$ has transition $s\|t \xrightarrow{\delta!,b\vee b'}s'\|t''$. Further by $r'\:\rm{Q}\:s'$, we have $r'\|t'\:{\rm{R}}\:s'\|t''$.
            
            \item Rule \textbf{SND} with $i=3$. 
            Then $\mathcal{T}_1$ and $\mathcal{T}_3$ separately have transitions $r[t']\xrightarrow{\delta?,b}_1 r'$ and $t\xrightarrow{\delta!,b'}_3 t'$. These transitions generate $r\|t\xrightarrow{\delta!,b\vee b'} r'\|t'$. Since $r\:{\rm{Q}}\:s$, ${\rm{Q}}$ is a simulation, and $\mathcal{T}_3$ is compatible with $\mathcal{T}_1$ and $\mathcal{T}_2$, then $r[t']\:{\rm{Q}}\:s[t']$ and $\mathcal{T}_2$ has a state $s'$ such that $s[t']\xrightarrow{\delta?,b}_2 s'$ and $r'\:\rm{Q}\:s'$. Let $t''=t'$, by rule \textbf{SND}, $\mathcal{T}_{23}$ has transition $s\|t \xrightarrow{\delta!,b\vee b'}s'\|t''$. Further by $r'\:\rm{Q}\:s'$, we have $r'\|t'\:{\rm{R}}\:s'\|t''$. 
        \end{itemize}
    \end{enumerate}

\section{Proof of Theorem~\ref{Theorem: side condition}} \label{APP: proof of side condition}

By Definition~\ref{def: Restriction for TTSB}, $\mathcal{T}_1 \backslash C$ and $\mathcal{T}_2 \backslash C$ remain comparable. Let $\rm{R}$ be the timed step simulation from $\mathcal{T}_1$ to $\mathcal{T}_2$, we prove it is still a timed step simulation from $\mathcal{T}_1 \backslash C$ to $\mathcal{T}_2 \backslash C$. For any state $r$ in $\mathcal{T}_1 \backslash C$ and $s$ in $\mathcal{T}_2 \backslash C$ where $r\:{\rm{R}}\:s$, it is obvious that the first and second conditions in Definition~\ref{def: Timed step simulation for TTSBs} still hold. Furthermore, from any committed state $r$ (resp. $s$) in $\mathcal{T}_1$ (resp. $\mathcal{T}_2$), there exists a committed transition from $r$ (resp. $s$) in $\mathcal{T}_1 \backslash C$ (resp. $\mathcal{T}_2 \backslash C$), which implies $Comm(s) \Rightarrow Comm(r)$. Finally, as the restriction operator makes the same modifications on the transition relations of $\mathcal{T}_1$ and $\mathcal{T}_2$, it is clear that the last condition also holds.

\section{Proof of Theorem~\ref{theorem: verify safety properties}} \label{APP: proof of verify safety properties}
    Suppose $\rm{Q}$ is a timed step simulation from $\mathcal{T}_a$ to $\mathcal{T}_b$, by Theorem~\ref{theorem: precongruence}, we can construct a relation ${\rm{R}}$ by
    \begin{equation}
        r\|t \: {\rm{R}} \: s\|t' \Leftrightarrow (r \: {\rm{Q}}\: s \wedge t=t')
        \notag
    \end{equation}
    which is a timed step simulation from $\mathcal{T}_a\|\mathcal{T}_c$ to $\mathcal{T}_b\|\mathcal{T}_c$. Then by Theorem~\ref{Theorem: side condition}, ${\rm{R}}$ is also a timed step simulation from $(\mathcal{T}_a\|\mathcal{T}_c)\backslash (\Delta\cup\mathcal{C})$ to $(\mathcal{T}_b\|\mathcal{T}_c) \backslash (\Delta\cup\mathcal{C})$. By Theorem~\ref{Theorem: semantics equal}, $\mathsf{LTS}(\mathcal{N})\cong\mathsf{LTS}((\mathcal{T}_a\|\mathcal{T}_c)\backslash (\Delta\cup\mathcal{C}))$ and $\mathsf{LTS}(\mathcal{N}')\cong\mathsf{LTS}((\mathcal{T}_b\|\mathcal{T}_c)\backslash (\Delta\cup\mathcal{C}))$. For any reachable state $r\|t$ of $\mathsf{LTS}(\mathcal{N})$, there exists at least one state $s\|t'$ of $\mathsf{LTS}(\mathcal{N}')$, such that $r\|t\: \rm{R}\: s\|t'$, i.e. $r \: \rm{Q}\: s \wedge t=t'$.  Since $\mathcal{T}_a$ and $\mathcal{T}_b$ are comparable, we have $E_a=E_b$, which implies $r\lceil{E_a} = s\lceil{E_b}$.     Since $P$ is a safety property of $\mathcal{N}'$, we have $s\|t' \models P$. Furthermore, since $P$ only depends on the variables of $\mathcal{T}_c$, it follows that $r\|t\models P$. In conclusion, $P$ is a safety property of $\mathcal{N}$.

\section{Details of Case Study I} \label{APP: Detail of Case Study I}
\subsection{Modeling the Producer-Consumer System}
The producer-consumer system is modeled by the NTA $\mathcal{N}=\langle \mathcal{A}_p,\mathcal{A}_c,\mathcal{A}_1,\dots,\mathcal{A}_\mathsf{N} \rangle$ shown in Fig.\ref{fig:CASE_STUDY}. The NTA consists of a producer $\mathcal{A}_p$ that produces data packets at fixed time intervals, $\mathsf{N}$ consumers $\mathcal{A}_1,\dots,\mathcal{A}_{\mathsf{N}}$ that process data packets, and a coordinator $\mathcal{A}_c$ to ensure the normal operations of the system. These components communicate via a shared variable, binary channels, and broadcast channels. 

\begin{figure}[h]
    \centering
    \begin{subfigure}{0.52\textwidth}
        \centering
        \includegraphics[width=\linewidth]{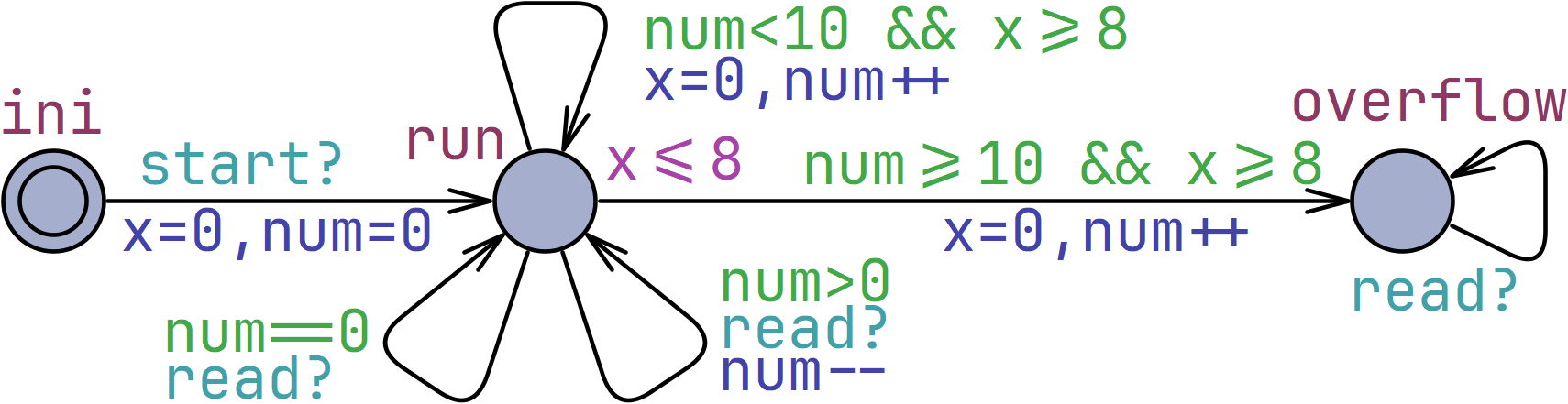}
        \caption{A producer $\mathcal{A}_p$}
        \label{fig:case_producer}
    \end{subfigure}
    
    \vspace{4mm}

    \begin{subfigure}{0.44\textwidth}
        \centering
        \includegraphics[width=\linewidth]{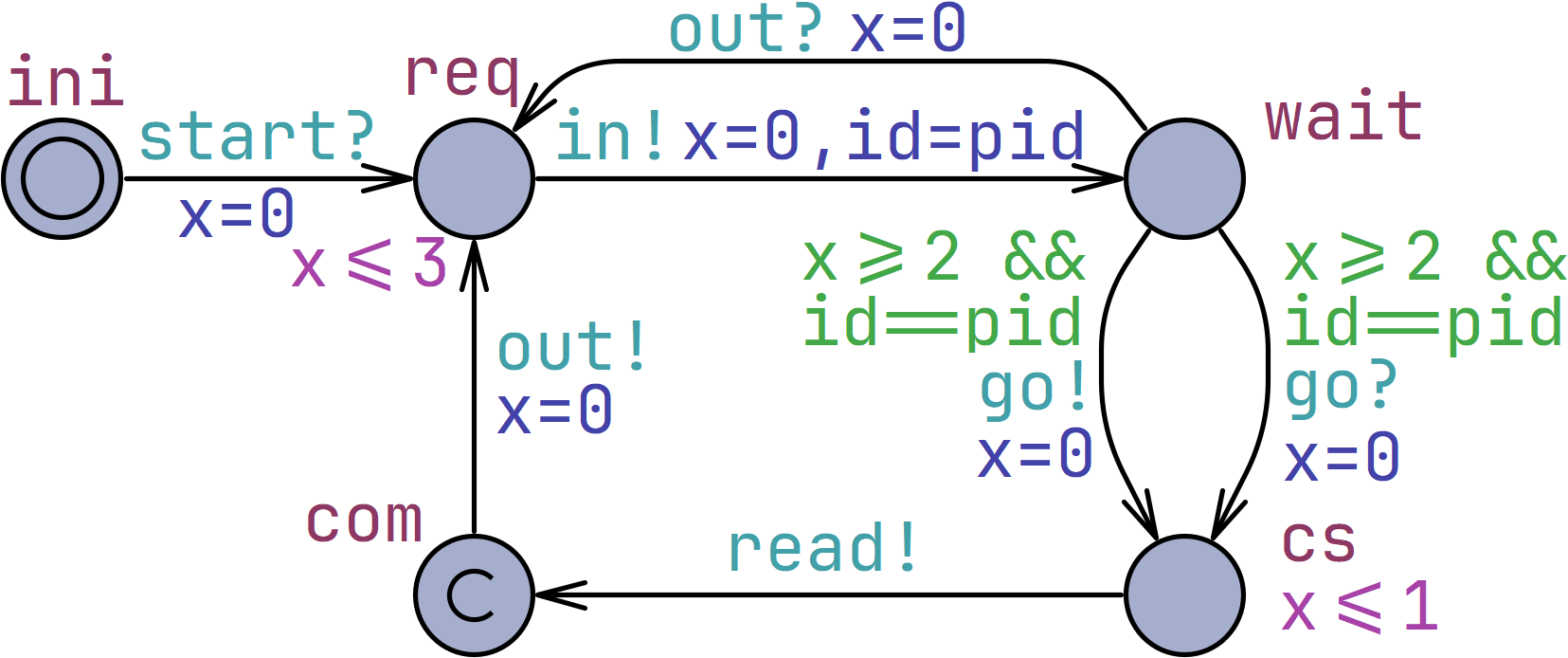}
        \caption{Consumers $\mathcal{A}_1\dots\mathcal{A}_n$}
        \label{fig:case_consumer}
    \end{subfigure}
    \hspace{6mm}
    \begin{subfigure}{0.29\textwidth}
        \centering
        \includegraphics[width=\linewidth]{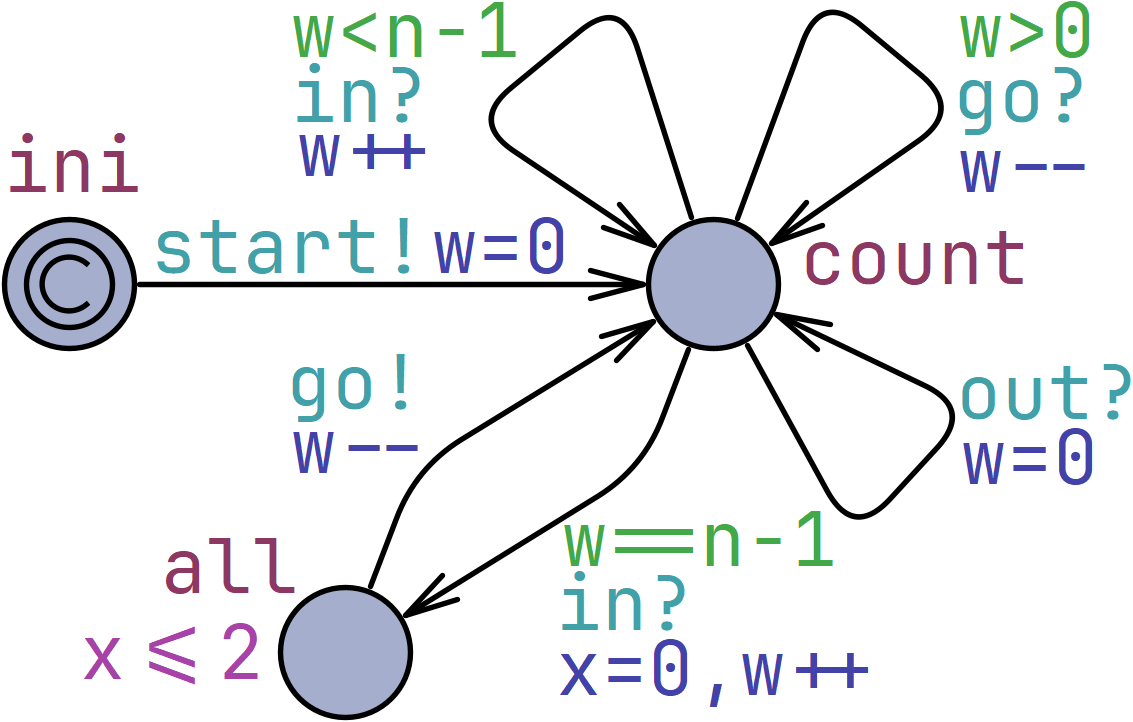}
        \caption{A coordinator $\mathcal{A}_c$}
        \label{fig:case_coordinator}
    \end{subfigure}
    
    \caption{A simple producer-consumer system}
    \label{fig:CASE_STUDY}
\end{figure}

The producer $\mathcal{A}_p$ generates a new data packet every $8$ time units recorded by its internal clock $x$. The integer variable $num$ tracks the number of data packets. If $num\neq 0$, the data packet can be consumed by some consumer through the binary channel $read$, with its value decreased by $1$. If the value exceeds $10$, the storage limit, the producer will transit to the $overflow$ location. 

Each consumer $\mathcal{A}_i$ has a unique $pid$ which values $i$ and a private clock $x$. When it receives the broadcast signal $start$ from the coordinator, implying the whole process has begun, it transits to its $req$ location. Then in $3$ time units, via binary action $in!$, the consumer notifies the coordinator that it has a request to read a data packet and then moves to the $wait$ location. Correspondingly, the shared variable $id$ that records which consumer has the latest request, is updated by $pid$ ($id=pid$). 
After waiting at least $2$ time units in $wait$, if there is no other request, i.e., $id$ still equals to the $pid$ of $\mathcal{A}_i$, the consumer may inform the coordinator that it enters the $cs$ location via the binary action $go!$ or be forced by the coordinator to enter the $cs$ location through the $go$ signal. 
In the $cs$ location, the consumer processes data within $1$ time unit. It then notifies the producer via the binary action $read!$ that a data packet has been consumed. Finally, the consumer broadcasts the action $out!$, notifying the waiting consumers to start the next round of data packet requests.

As mentioned, the coordinator starts all components simultaneously by broadcasting $start!$. 
It uses a variable $w$ to track the number of consumers that are in the $wait$ location. The value of $w$ is incremented or decremented via the upper transitions when a consumer enters or leaves the $wait$ location, triggered by the actions $in!$ or $go!$. When all consumers reach the $wait$ location, implying that no other consumer can update $id$, it might occur that all the consumers stay in this location infinitely. To avoid this, a lower-left loop operation of the coordinator is used to force the consumer whose $pid$ equals $id$ to enter $cs$ location for packet consumption. When the coordinator receives the $out$ signal from a consumer, it resets $w$ to $0$ since all the waiting consumers return to the $req$ location.

The objective is to verify that the producer will not overflow, i.e., $P = \{\mathsf{loc}_p \not= overflow\}$ is a safety property of $\mathcal{N}$. As Table~\ref{tab:example_PCS} shows, the verification time required by the traditional monolithic method grows exponentially as $n$ increases and exceeds $3,600$ seconds when $\mathsf{N}=15$ on our experimental equipment.

\subsection{Compositional Verification} 
To apply our compositional verification, we abstract the coordinator and all consumers in the system into the automaton $\mathcal{A}$ shown in Fig~\ref{fig:The first abstraction}.
We do not make abstraction on $\mathcal{A}_p$ due to requirements of Theorem~\ref{theorem: verify safety properties}, that is, property $P$ depends on the location of $\mathcal{A}_p$. Let $\mathcal{T} = \mathsf{TTSB}(\mathcal{A})\backslash\{in,go,out\}$ and $\mathcal{T}_r = (\mathcal{T}_c\|\mathcal{T}_1\|\dots\|\mathcal{T}_{\mathsf{N}})\backslash \{in,go,out\}$, where $\mathcal{T}_c, \mathcal{T}_1,\dots, \mathcal{T}_\mathsf{N}$ are the corresponding TTSBs of $\mathcal{A}_c, \mathcal{A}_1, \dots, \mathcal{A}_{\mathsf{N}}$. Below, we provide the proof for $\mathcal{T}_r\preceq\mathcal{T}$, and for brevity in the proof, we only consider the case where $n > 1$.

\begin{figure}[h]
    \vspace{-2mm}
    \centering
    \includegraphics[height=1.5cm]
    {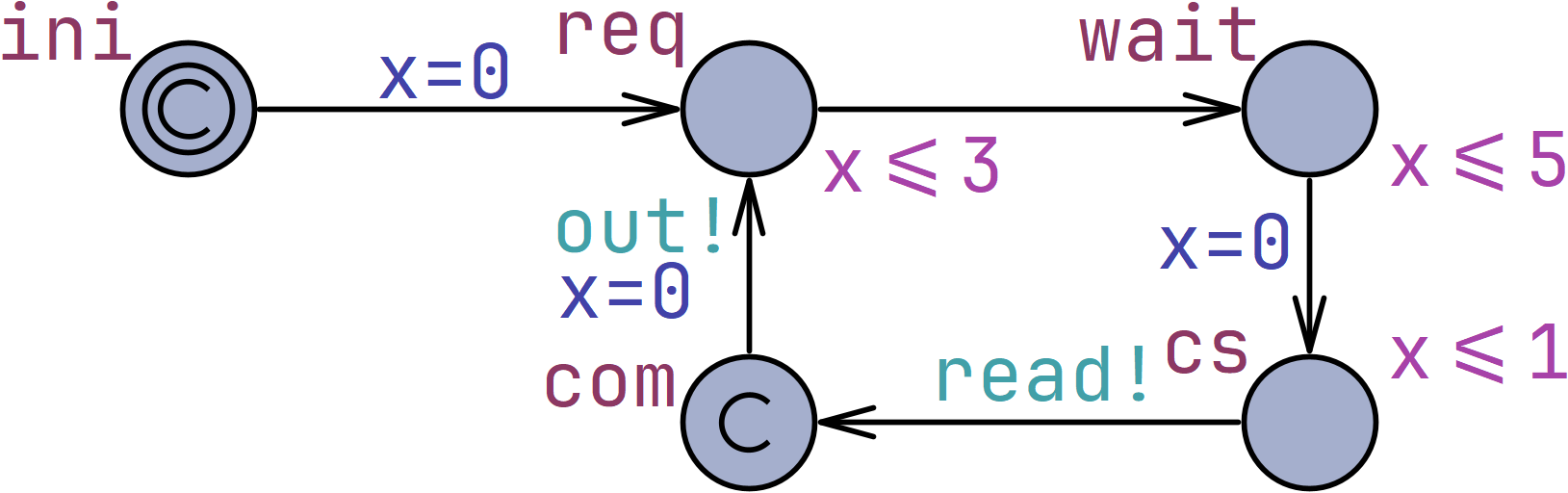}
    \vspace{-1mm}
    \caption{Abstraction $\mathcal{A}$}
    \label{fig:The first abstraction}
    \vspace{-6mm}
\end{figure}

\begin{proof}
For brevity in the proof, we only consider the case where $n > 1$. 
$\mathcal{T}_r$ and $\mathcal{T}$ are comparable since neither of them has external variables. To construct the $\rm{R}$ relation that satisfies the requirements in Definition~\ref{def: Timed step simulation for TTSBs}, we classify the states in $\mathcal{T}_r$ into 6 categories and find corresponding states in $\mathcal{T}$.


\paragraph{\bf Category 0: The initial state.}
According to Definitions~\ref{def: TTSB semantics of TA} and~\ref{def: Parallel composition}, the initial states of $\mathcal{T}_r$ and $\mathcal{T}$ are $s_r^0=\{\mathsf{loc}_c\mapsto ini,x_c\mapsto 0,w\mapsto 0,id\mapsto 0,\mathsf{loc}_1\mapsto ini,$ $x_1\mapsto 0,\dots,\mathsf{loc}_{\mathsf{N}}\mapsto ini,x_{\mathsf{N}}\mapsto 0 \}$ and $s^0=\{\mathsf{loc}\mapsto ini,x\mapsto 0\}$, respectively. So we include $\langle s_r^0,s^0\rangle$ in the $\rm{R}$ relation. Since neither $\mathcal{T}_r$ nor $\mathcal{T}$ contain external variables, conditions $1$ and $2$ in Definition~\ref{def: Timed step simulation for TTSBs} are satisfied and will not be repeated in the subsequent discussion. According to Definition~\ref{def: TTSB semantics of TA} and Lemma~\ref{lemma: committed states}, both $s^0$ and $s_r^0$ are committed, so $Comm(s^0)\rightarrow Comm(s_r^0)$ holds. 

\paragraph{\bf Category 1: All consumers are at the $\bm{req}$ location.}
After the broadcast synchronization through the channel $start$, $\mathcal{T}_r$ reaches state $s_r^0[\{\mathsf{loc}_c\mapsto count,$ $\mathsf{loc}_1\mapsto req,\dots,\mathsf{loc}_{\mathsf{N}}\mapsto req\}]$. We categorize the $\mathcal{T}_r$ states that satisfy both following conditions into one category and represent its elements with $s_r^1$. 

\begin{enumerate}
    \item $\forall\: 1\leq i \leq n : s_r^1(\mathsf{loc}_i) = req$.
    \item $s_r^1(x_1) =s_r^1(x_2) =\dots= s_r^1(x_{\mathsf{N}})$.
\end{enumerate}

According to the invariants of the $req$ locations in $\mathcal{A},\mathcal{A}_1,\dots,\mathcal{A}_{\mathsf{N}}$, for each $s_r^1$, by letting $s^0$ undergo the same transition sequence from $s_r^0$ to $s_r^1$, we can always get the corresponding state $s^1$. We put all these pairs $\langle s_r^1, s^1 \rangle$ into $\rm{R}$.
    

\paragraph{\bf Category 2: Some consumer reaches the $\bm{wait}$ location.}   
After a $\tau$-transition from $s_r^1$, $\mathcal{T}_r$ reaches state $s_r^1[\{w\mapsto 1,\mathsf{loc}_i\mapsto wait,x_i\mapsto 0\}]$, where $1\leq i\leq n$. If a state satisfies the following conditions, we denote it as $s_r^2$.

\begin{samepage}
\begin{enumerate}
    \item $\forall\: 1\leq i \leq n : s_r^2(\mathsf{loc}_i) = req\vee s_r^2(\mathsf{loc}_i) = wait$.
    \item $\exists\: 1\leq i \leq n : s_r^2(\mathsf{loc}_i) = wait$.
\end{enumerate}
\end{samepage}
For each $s_r^2$, the corresponding state of $\mathcal{T}$ can be denoted as $s^2 = \{\mathsf{loc} \mapsto wait,$ $ x \mapsto \max_{1\leq i \leq n} s_r^2(x_i)\}$. Although the consumers' $wait$ location has no invariants, the coordinator imposes an upper limit on the time passage allowed from $s_r^2$. Combined with $I(req) = {x_i \leq 3}$ in $\mathcal{A}_1,\dots,\mathcal{A}_{\mathsf{N}}$, $I(full) = {x_c \leq 2}$ in $\mathcal{A}_c$ and $I(wait) = {x \leq 5}$ in $\mathcal{A}$, such $s^2$ can be reached through a $\tau$-transition and several time-passages from $s^1$. We add all these pairs $\langle s_r^2, s^2\rangle$ to $\rm{R}$. 


\paragraph{\bf Category 3: One consumer reaches the $\bm{cs}$ location.}
After a $\tau$-transition from $s_r^2$, $\mathcal{T}_r$ reaches the state $s_r^2[\{\mathsf{loc}_c\mapsto count, w\mapsto s_r^2(w)-1, \mathsf{loc}_j\mapsto cs,$ $x_j\mapsto 0\}]$. We denote the state that satisfies the following conditions as $s_r^3$.

\begin{enumerate}
    \item $\exists\: 1\leq i \leq n : s_r^3(\mathsf{loc}_i) = cs$.
    \item $\forall\: j \not = i: s_r^3(\mathsf{loc}_j) \not = cs$.
\end{enumerate}
For any $s_r^3$, let $s^3=\{\mathsf{loc}\mapsto cs, x\mapsto s_r^3(x_i)\}$, which can be reached from $s^2$ through a $\tau$-transition and several time-passages. We include all these pairs $\langle s_r^3, s^3\rangle$ in the $\rm{R}$ relation. It is worth noting that the state $s_r^3[\{\mathsf{loc}_j \mapsto cs, x_j\mapsto 0\}]$, where $j\not = i$, is not reachable from any $s_r^3$ state. Since $\mathcal{A}_i$ can remain in its $cs$ location for at most $1$ time unit, while other consumers must first set $id$ to their own $pid$ and then wait at least $2$ time units after $\mathcal{A}_i$ enters $cs$ before they can enter their own $cs$ locations.


\paragraph{\bf Category 4: One consumer reaches the $\bm{com}$ location.}
After a $read!$-transition for $s_r^3$, $\mathcal{T}_r$ reaches state $s_r^4=s_r^3[\{\mathsf{loc}_i\mapsto com\}]$. We let $s^3$ go correspondingly through $read!$-transition and reaches state $s^4=s^3[\{\mathsf{loc} \mapsto com\}]$. Obviously, $Comm(s^4)\Rightarrow Comm(s_r^4)$ holds and we include $\langle s_r^4, s^4\rangle$ in $\rm{R}$.

\paragraph{\bf Category 5: One consumer leaves the $\bm{wait}$ location.}   
Starting from state $s_r^4$, $\mathcal{T}_r$ go through an $out!$-transition and reaches $s_r^5 = s_r^4[\{w\mapsto 0, \mathsf{loc}_i\mapsto req,$ $x_i \mapsto 0, \forall j,s_r^4(\mathsf{loc}_j)=wait: \mathsf{loc}_j\mapsto req, x_j\mapsto 0\}]$. Correspondingly, we let $\mathcal{T}$ go through $out!$-transition from $s^4$ to $s^5=\{\mathsf{loc}\mapsto req, x\mapsto 0\}$, and include $\langle s_r^5, s^5\rangle$ in the $\rm{R}$ relation. According to the fact that $I(req)=\{x\leq 3\}$ for all $\mathcal{A},\mathcal{A}_1,\dots,\mathcal{A}_{\mathsf{N}}$, for any $s_r^5\oplus d$, where $d\leq 3-\max_{1\leq i \leq n} s_r^5(x_i)$, there is a corresponding $s^5\oplus d$. We include all these $\langle s_r^5\oplus d, s^5\oplus d \rangle$ in the $\rm{R}$ relation.

\vspace{2mm}

If $\mathcal{T}_r$ goes through a $\tau$-transition from $s_r^5\oplus d$ to $(s_r^5\oplus d)[\{\mathsf{loc}_i \mapsto wait, x_i \mapsto 0\}]$, where $1\leq i \leq n$, it can be observed that this newly reached state can be included to \textbf{Category 2}, which we have discussed previously. Thus, we have proven that $\rm{R}$ is a timed step simulation from $\mathcal{T}_r$ to $\mathcal{T}$,  i.e. $\mathcal{T}_r \preceq \mathcal{T}$. $\hfill\square$

\end{proof}

Let $\mathcal{N'}=\langle \mathcal{A},\mathcal{A}_p\rangle$, we can obtain that $P$ is a safety property of $\mathcal{N'}$ within $5$ millisecond with U{\scriptsize PPAAL}. At the same time, it is not difficult to see that $\mathcal{T}_r\|\mathsf{TTSB}(\mathcal{A}_p)$ satisfies the side condition of Theorem~\ref{Theorem: side condition}. Then by Theorem~\ref{theorem: verify safety properties}, we conclude that $P$ is a safety property of the original system $\mathcal{N}$. 


\section{Details of Case Study II} \label{APP: Detail of Case Study II}
\subsection{Modeling the Clock Synchronization Protocol~\cite{new_case_try}}
Assume there is a finite, fixed set of wireless nodes $\mathsf{Nodes} = \{0,\dots, \mathsf{N}-1\}$. As shown in Fig~\ref{fig:case_WSN}, each individual node $i\in\mathsf{Nodes}$ is described by three timed automata: $\mathcal{A}_{\textbf{C}[i]}$, $\mathcal{A}_{\textbf{W}[i]}$, and $\mathcal{A}_{\textbf{S}[i]}$. Automaton $\mathcal{A}_{\textbf{C}[i]}$ models the node's hardware clock, which may drift, automaton $\mathcal{A}_{\textbf{W}[i]}$ takes care of broadcasting messages, and automaton $\mathcal{A}_{\textbf{C}[i]}$ resynchronizes the hardware clock upon receipt of a message. 
The model constructed in~\cite{new_case_try} of the clock synchronization protocol is an NTA $\mathcal{N}$ consisting of timed automata $\mathcal{A}_{\textbf{C}[i]}$, $\mathcal{A}_{\textbf{W}[i]}$ and $\mathcal{A}_{\textbf{S}[i]}$, for each $i\in\mathsf{Nodes}$.

\begin{figure}[htbp]
    \vspace{-3mm}
    \centering
    \begin{minipage}[b]{0.38\textwidth}
        \centering
        \includegraphics[width=\textwidth]{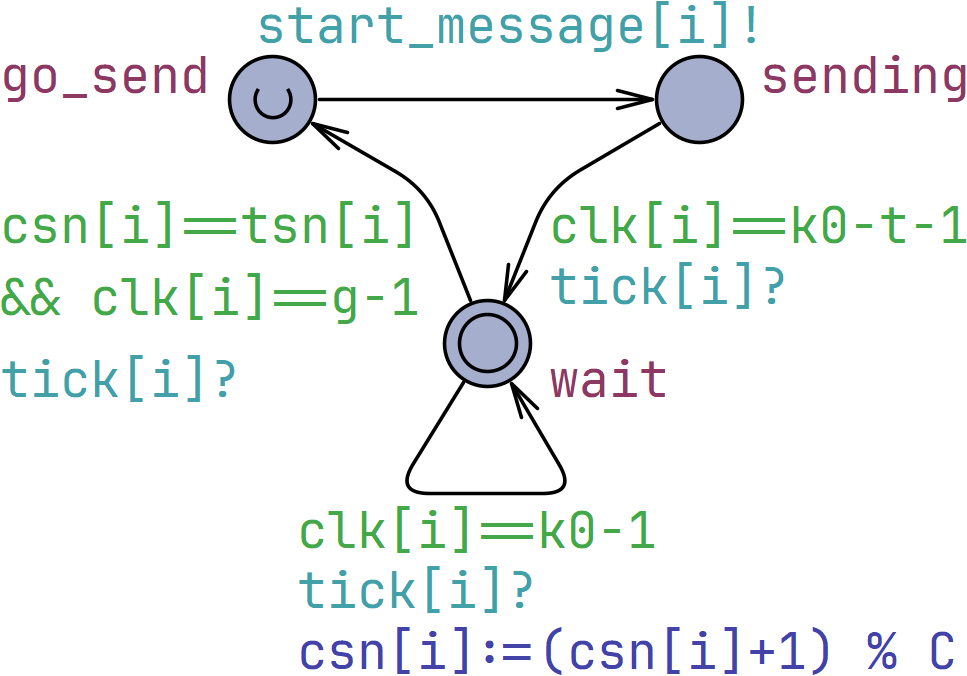}
        \subcaption*{(b) $\mathcal{A}_{\textbf{W}[i]}$}\label{fig:WSN}
    \end{minipage}
    \hspace{0.3cm}
    \begin{minipage}[b]{0.37\textwidth}
        \centering
        \begin{minipage}[b]{\textwidth}
            \centering
            \includegraphics[width=\textwidth]{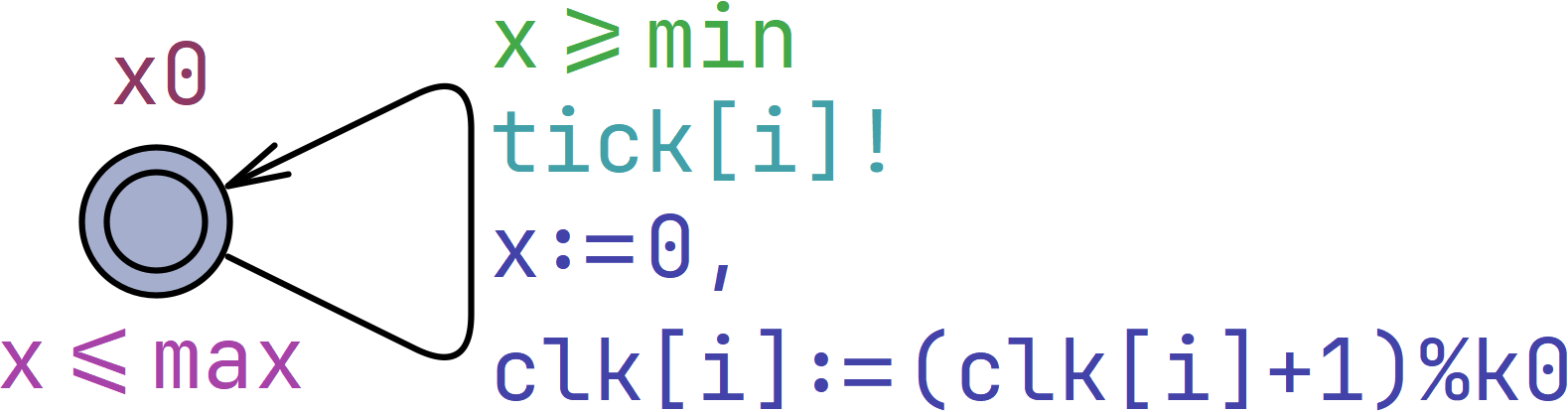}
            \subcaption*{(a) $\mathcal{A}_{\textbf{C}[i]}$}\label{fig:CLOCK}
        \end{minipage}
        \\[0.2cm] 
        \begin{minipage}[b]{0.74\textwidth}
            \centering
            \includegraphics[width=\textwidth]{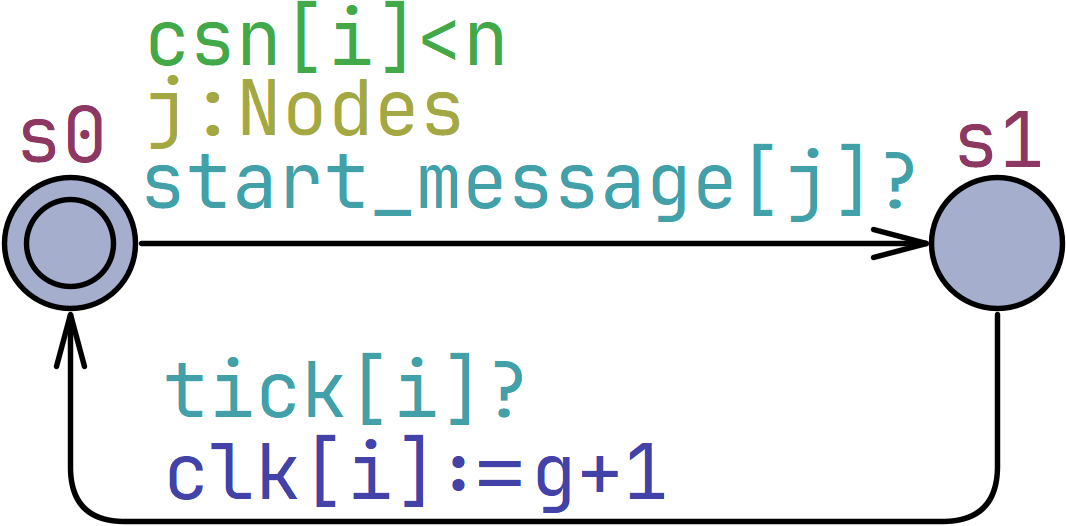}
            \subcaption*{(c) $\mathcal{A}_{\textbf{S}[i]}$}\label{fig:SYNC}
        \end{minipage}
    \end{minipage}
    \caption{Automata in a node}
    \label{fig:case_WSN}
    \vspace{-3mm}
\end{figure}

Timed automaton $\mathcal{A}_{\textbf{C}[i]}$, as shown in Fig.~\ref{fig:case_WSN}(a), models the behavior of the hardware clock of node $i$. It has a single location and a single transition. It has a private clock $x$, which measures the time between clock ticks. Whenever $x$ reaches the value $min$, $\mathcal{A}_{\textbf{C}[i]}$ can broadcast the action $tick[i]!$. The $tick[i]!$ action must occur before $x$ reaches value $max$. Then $x$ is reset to $0$, and the value of the node's hardware clock $clk[i]$ is incremented by $1$. The hardware clock is reset after $k_0$ ticks, i.e., the $clk[i]$ takes integer values modulo $k_0$, where $k_0$ is the number of clock ticks each \emph{slot} has. In brief, the operation time of the network is divided into fixed-length \emph{frames}, and each frame is subdivided into $\mathsf{C}$ slots. For more details about the frame and slot, please refer to~\cite{new_case_try}.

The first $n$ slots of each frame can be used for clock synchronization among the nodes. For a node $i$, it can broadcast messages in the ${tsn[i]}^{th}$ slot within a frame, where $tsn[i]$ is a constant. At the beginning of this slot, before clock synchronization, node $i$ should wait for $g$ ticks for other nodes to be ready to receive messages, and similarly, node $i$ should also wait for $t$ ticks at the end of this slot.
Following this, automaton $\mathcal{A}_{\textbf{W}[i]}$ it built, as shown in Fig.~\ref{fig:case_WSN}(b). It has three locations, four transitions, and an integer-type external variable, $csn[i]$, which records its current slot number. 
$\mathcal{A}_{\textbf{W}[i]}$ stays in the initial location $wait$ until the current slot number equals $tsn[i]$, and the $g^{th}$ clock tick in this slot occurs. It then transits to the urgent location $go\_send$, immediately broadcasts the action $start\_message[i]!$, and further transits to location $sending$. The broadcast channel $start\_message[i]$ is used to inform all neighboring nodes that a new message transmission has started. The automaton stays in location $sending$ until the start of the tail interval, that is, until the $(k_0-t)^{th}$ tick in the current slot, and then returns to location $wait$. At the end of each slot, that is, when the ${k_0}^{th}$ tick occurs, the automaton increments its current slot number by 1 modulo $\mathsf{C}$. Recall that $\mathsf{C}$ is the number of slots each frame has.

Automaton $\mathcal{A}_{\textbf{S}[i]}$, shown in Fig.~\ref{fig:case_WSN}(c), performs the role of the clock synchronizer. It has two locations and two transitions. The automaton waits in its initial location $s_0$ until it detects a new message, that is, until a $start\_message[j]?$ action occurs for some $j$. Here, the U{\scriptsize PPAAL} select statement, $j:\mathsf{Nodes}$, represents the non-deterministically selected $j \in \mathsf{Nodes}$. The automaton then moves to location $s_1$, provided the current slot can be used for clock synchronization, that is, $csn[i] < n$. Considering that when the $start\_message[j]?$ action occurs, the hardware clock of node $j$, $clk[j]$ values $g$. Therefore, node $i$ resets its own hardware clock $clk[i]$ to $g + 1$ upon occurrence of the first clock tick following $start\_message[j]?$. The automaton then returns to $s_0$.

As can be seen, in the designed model, the broadcast channel $tick[i]$ and shared variables $clk[i]$ and $csn[i]$ are only used for communication in each node $i$, while the broadcast channel $start\_message[i]$ is used for communication among nodes, where $i\in \mathsf{Nodes}$. So, we can restrict the communication capabilities between the nodes by modifying the select statement on the transition in $\mathcal{A}_{\textbf{S}[i]}$. Here, we allow all nodes in the network to communicate with each other, i.e., the network topology is fully connected. For the parameters in the model, we select $min=30$, $max=31$, $k_0=21$, $g=10$, $t=2$, $\mathsf{C}=\mathsf{N}+2$ and $n=\mathsf{N}$. Formally, the target property is $P=\{\forall i,j\in \mathsf{Nodes}: \mathsf{loc} (\mathcal{A}_{\textbf{W}[i]})\mapsto sending \Rightarrow csn[i] = csn[j]$\}. Table~\ref{tab:example_WSN} shows that the verification time required by the traditional monolithic method grows significantly as $\mathsf{N}$ increases and exceeds $3,600$ seconds when $\mathsf{N}=6$ on our experimental equipment. 

\subsection{Compositional Verification}
To apply our compositional verification, firstly, for each node $i\in \mathsf{Nodes}$, we first associate a TTSB $\mathcal{T}_{\textbf{N}[i]}$ with its NTA, where

\begin{equation*}
    \mathcal{T}_{\textbf{N}[i]}=\mathsf{TTSB}(\mathcal{A}_{\textbf{C}[i]})\|\mathsf{TTSB}(\mathcal{A}_{\textbf{W}[i]})\|\mathsf{TTSB}(\mathcal{A}_{\textbf{S}[i]})
\end{equation*}
Note that when considering $\mathsf{TTSB}(\mathcal{A}_{\textbf{W}[i]})$, we can replace the urgent location in $\mathcal{A}_{\textbf{W}[i]}$ with an ordinary one by introducing an additional clock. 

Then we select two nodes, $a$ and $b$, from the set $\mathsf{Nodes}$ with $0\le a <b \le \mathsf{N}-1$, and the TTSB associated with the remaining $\mathsf{N}-2$ nodes is


\begin{equation*}
    \mathcal{T}_{\mathsf{N}-\{a,b\}}={\Big\|}_{i\in\mathsf{Nodes}-\{a,b\}} \mathcal{T}_{\textbf{N}[i]}\backslash\{tick[i],clk[i],csn[i]\}
\end{equation*}
As the broadcast channel $tick[i]$ and shared variables $clk[i]$, $csn[i]$ are only used for communication within node $i$ but not used between nodes, so each $\mathcal{T}_{\textbf{N}[i]}$ with $i\in\mathsf{Nodes}-\{a,b\}$ should be  restricted by the set $\{tick[i],clk[i],csn[i]\}$. 

Now we need to construct the abstraction $\mathcal{A}$ with $\mathcal{T}_{\mathsf{N}-\{a,b\}}\preceq \mathsf{TTSB}(\mathcal{A})$. As observed, clock synchronization among nodes is fulfilled through the output broadcast action $start\_{message}[k]!$ with $k\in\mathsf{Nodes}$ in these cases:  
\begin{enumerate}
    \item occurrence of the $g^{th}$ clock tick since the start of the network.
    \item occurrence of the ${k_0}^{th}$ clock tick since last clock synchronization.
    \item occurrence of the $(\mathsf{C}-\mathsf{N}){k_0}^{th}$ clock tick since last clock synchronization.
\end{enumerate}
Considering these cases, we construct the TA $\mathcal{A}$ shown in Fig.~\ref{fig:abstraction_WSN}. It on one hand broadcasts $start\_{message}[i]!$ to $a$ and $b$, where $i\in\mathsf{abstractedNodes}=\mathsf{N}-\{a,b\}$, and on the other hand accepts $start\_{message}[j]$ from node $j$, where $j\in\mathsf{leftNodes}=\{a,b\}$.


\begin{figure}[h]
    \vspace{-2mm}
    \centering
    \includegraphics[width=0.95\textwidth]{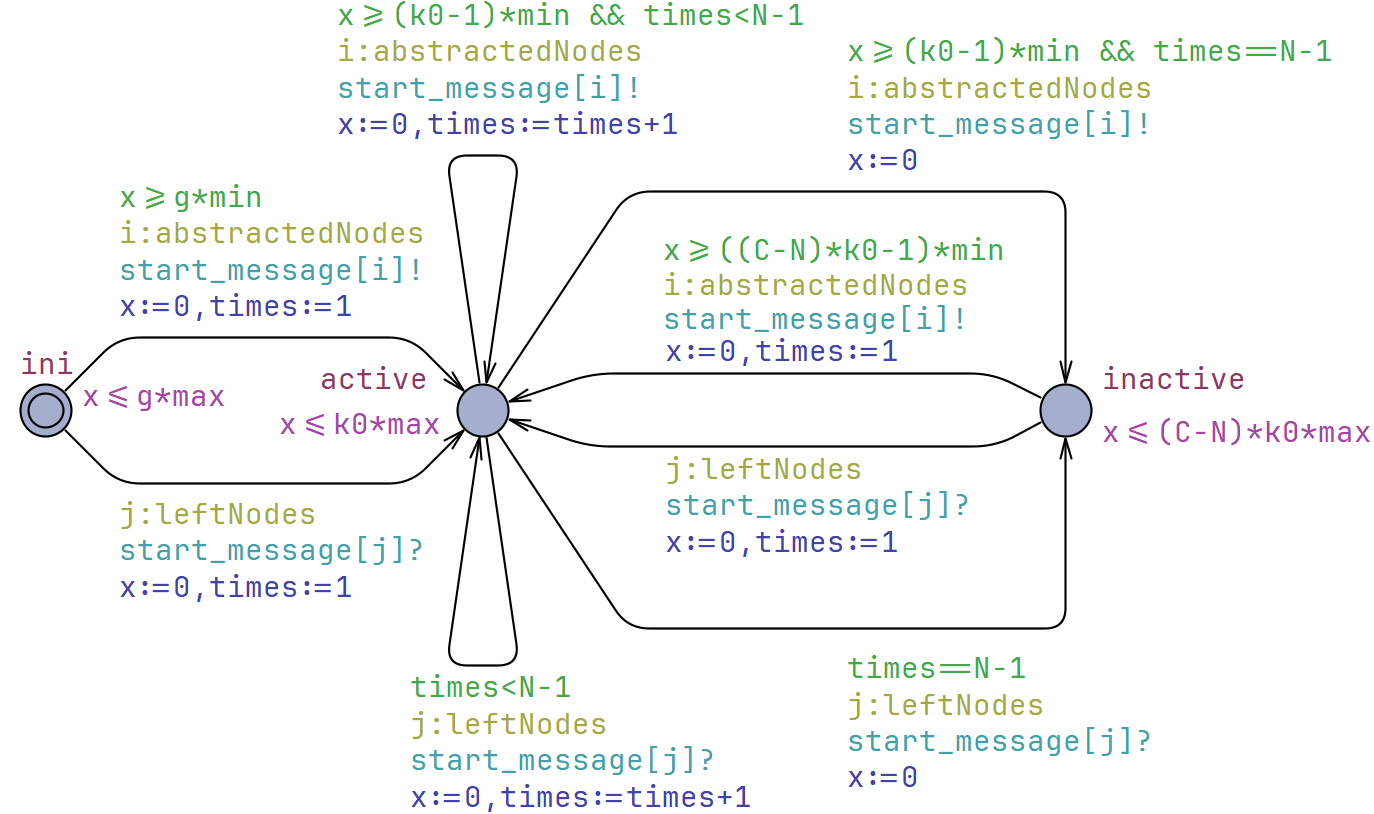}
    \caption{Abstraction $\mathcal{A}$}
    \label{fig:abstraction_WSN}
    \vspace{-2mm}
\end{figure}

Taking the same proof as that in Appendix~\ref{APP: Detail of Case Study I}, we conclude that $\mathcal{T}_{\mathsf{N}-\{a,b\}}\preceq \mathsf{TTSB}(\mathcal{A})$ holds for all $0\leq a < b\leq \mathsf{N}-1$. Then, the abstract NTA model is 

\begin{equation*}
\mathcal{N}_{\{a,b\}}=\langle \mathcal{A}_{\textbf{C}[a]}, \mathcal{A}_{\textbf{W}[a]}, \mathcal{A}_{\textbf{S}[a]}, \mathcal{A}_{\textbf{C}[b]}, \mathcal{A}_{\textbf{W}[b]}, \mathcal{A}_{\textbf{S}[b]},\mathcal{A} \rangle
\end{equation*}

Clearly, if we check that the hardware clocks of $a$ and $b$ remain synchronized during network operation for all the choices of $a$ and $b$, where $0\le a<b\le \mathsf{N}-1$, by Theorem~\ref{theorem: verify safety properties} and some simple logical transformations, we can conclude that $P$ is a safety property of the original NTA model. That is, 

\begin{equation*}
    \forall 0\le a<b\le \mathsf{N}-1: \mathcal{N}_{\{a,b\}}\models \forall \square P_{\{a,b\}} \:\:\Rightarrow\:\: \mathcal{N}\models \forall\square P
\end{equation*}
\begin{equation*}
P_{\{a,b\}}=\{(\mathsf{loc} (\mathcal{A}_{\textbf{W}[a]}) \mapsto sending \vee \mathsf{loc} (\mathcal{A}_{\textbf{W}[b]}) \mapsto sending) \Rightarrow csn[a] = csn[b]\}
\end{equation*}

Therefore, the total verification time for our compositional verification method is the sum of the checking times for all possible choices of $a$ and $b$. For instance, when $\mathsf{N}=6$, we need to verify $C_6^2=\frac{(6\times5)}{2} = 15$ different cases. The $\mathsf{CV}$ row of Table~\ref{tab:example_WSN} demonstrates the total verification time required by our compositional verification method for each $\mathsf{N}$. Although in the case of $\mathsf{N}=3$, our method takes a slightly longer time due to the fact that $\mathcal{A}$ has more behaviors than a single node, it demonstrates significant efficiency advantages when $\mathsf{N}\ge5$.

\end{document}